\crefname{appendix}{}{}
 \newcommand{\RomanNumeralCaps}[1]{\MakeUppercase{\romannumeral #1}}
\colorlet{darkgreen}{green!80!black}
\colorlet{darkred}{red!80!black}
\tikzset{auto, >= stealth}
\tikzset{every edge/.append style={thick, shorten >= 1pt}}
\tikzset{initial/.style={draw, thick, <-, shorten <=1pt}}
\tikzset{player0/.style = {draw, thick, shape=circle, minimum size=5mm}}
\tikzset{player1/.style = {draw, thick, shape=rectangle, minimum size=5mm}}
\tikzset{playerR/.style = {draw, thick, shape=diamond, minimum size=5mm}}
\tikzset{bplayer0/.style = {draw, thick, shape=ellipse, minimum size=5mm,text width=1.1cm}}
\tikzset{bplayer1/.style = {draw, thick, shape=rectangle, minimum size=5mm,text width=1.6cm}}
\tikzstyle{startstop} = [rectangle, rounded corners, 
\tikzstyle{process} = [trapezium, 
\tikzstyle{io} = [rectangle, 
\tikzstyle{decision} = [diamond, 
\tikzstyle{arrow} = [thick,->,>=stealth]
\newcommand{\bigO}{\mathcal{O}}
\newcommand{\abs}[1]{\left\lvert #1 \right\rvert}
\newcommand{\LTLeventually}{\lozenge}
\newcommand{\LTLalways}{\square}
\newcommand{\lang}{\mathcal{L}}
\newcommand{\game}{\mathcal{G}}
\newcommand{\gamegraph}{\ensuremath{G}}
\newcommand{\play}{\ensuremath{\rho}}
\newcommand{\plays}{\ensuremath{\textsf{plays}}}
\newcommand{\spec}{\ensuremath{\Phi}}
\newcommand{\vertex}{V}
\newcommand{\p}[1]{\ensuremath{\text{Player}~#1}}
\newcommand{\win}{\mathcal{W}}
\newcommand{\wino}{\win_1}
\newcommand{\wini}{\win_i}
\newcommand{\SafeAlgo}[1]{\textsc{AlmostSafe}(#1)}
\newcommand{\minsupp}{\ensuremath{\textsf{minProb}}}
\newcommand{\minsupport}[2]{\ensuremath{\minsupp_{#1}(#2)}}
\newcommand{\strat}{\ensuremath{\pi}}
\newcommand{\stratI}[1]{\ensuremath{\strat_{#1}}}
\newcommand{\strato}{\ensuremath{\stratI{1}}}
\newcommand{\stratt}{\ensuremath{\stratI{2}}}
\newcommand{\strati}{\ensuremath{\stratI{i}}}
\newcommand{\Strat}{\ensuremath{\Pi}}
\newcommand{\StratI}[1]{\ensuremath{\Strat_{#1}}}
\newcommand{\distro}{d_1}
\newcommand{\distrt}{d_2}
\newcommand{\distri}{d_i}
\newcommand{\transition}{\ensuremath{\delta}}
\newcommand{\probdistributions}{\ensuremath{\mathcal{D}}}
\newcommand{\support}{\ensuremath{\textsf{supp}}}
\newcommand{\probability}{\Pr}
\newcommand{\orprob}[4]{\textstyle \Pr_{#1}^{#2,#3}(#4)}
\newcommand{\buchi}{\ifmmode B\ddot{u}chi \else B\"uchi \fi}
\newcommand{\cobuchi}{\ifmmode co\text{-}B\ddot{u}chi \else co-B\"uchi \fi}
\newcommand{\safeTemp}{\textsc{SafetyTemp}}
\newcommand{\liveTemp}{\textsc{LivenTemp}}
\newcommand{\BuchiAlgo}[1]{\textsc{Almost\buchi}(#1)}
\newcommand{\CobuchiAlgo}[1]{\textsc{AlmostCo-\buchi}(#1)}
\newcommand{\buchiTemp}{\textsc{B\"uchiTemp}}
\newcommand{\cobuchiTemp}{\textsc{coB\"uchiTemp}}
\newcommand{\livePartitions}{\textsf{P}}
\newcommand{\livePartition}{U}
\newcommand{\infPlay}[1]{\ensuremath{\textsf{inf}}(#1)}
\newcommand{\infStrat}[1]{\ensuremath{\textsf{inf}}(#1)}
\newcommand{\template}{\ensuremath{\Lambda}}
\newcommand{\funcSafe}{\ensuremath{\textsf{S}}}
\newcommand{\funcLive}{\ensuremath{\textsf{H}}}
\newcommand{\funcCoLive}{\ensuremath{\textsf{C}}}
\newcommand{\targetSet}{\ensuremath{I}}
\newcommand{\templatesafe}{\ensuremath{\template_{\textsc{unsafe}}}}
\newcommand{\templatelive}{\ensuremath{\template_{\textsc{live}}}}
\newcommand{\templategrlive}{\ensuremath{\template_{\textsc{live}}}}
\newcommand{\templatecolive}{\ensuremath{\template_{\textsc{colive}}}}
\newcommand{\pre}[2]{\textsf{pre}\ensuremath{_{#1}(#2)}}
\newsavebox{\@brx}
\newcommand{\llangle}[1][]{\savebox{\@brx}{\(\m@th{#1\langle}\)}%
	\mathopen{\copy\@brx\kern-0.5\wd\@brx\usebox{\@brx}}}
\newcommand{\rrangle}[1][]{\savebox{\@brx}{\(\m@th{#1\rangle}\)}%
	\mathclose{\copy\@brx\kern-0.5\wd\@brx\usebox{\@brx}}}
\newif\ifFIRST
\newif\ifSECOND
\let\LISTOP\relax
\newcommand{\List}[4][\;]{#3#1%
	\FIRSTtrue
	\@for\i:=#2\do{%
		\ifFIRST\LISTOP{\i}\FIRSTfalse\else,\LISTOP{\i}\fi%
	}%
	#1#4%
	\let\LISTOP\relax
}
\newcommand{\VSet}[2][]{\let\LISTOP\val\List[#1]{#2}{\{}{\}}}
\newcommand{\VTuple}[2][]{\let\LISTOP\val\List[#1]{#2}{(}{)}}
\newcommand{\subaction}{\ensuremath{\gamma}}
\newcommand{\subactiono}{\ensuremath{\subaction_{1}}}
\newcommand{\subactiont}{\ensuremath{\subaction_{2}}}
\newcommand{\acto}{\ensuremath{a}}
\newcommand{\actt}{\ensuremath{b}}
\newcommand{\safeactions}[3]{\ensuremath{\textsf{A}_{#1}^{#2}(#3)}}
\newcommand{\forwardingactions}[3]{\ensuremath{\textsf{B}_{#1}^{#2}(#3)}}
\newcommand{\actions}{\ensuremath{\Gamma}}
\newcommand{\actiono}{\actions_1}
\newcommand{\actiont}{\actions_2}
\newcommand{\Apre}[3]{\textsf{Apre}\ensuremath{_{#1}(#2,#3)}}
\newcommand{\AFpre}[4]{\textsf{AFpre}\ensuremath{_{#1}(#2,#3,#4)}}
\newcommand{\alphabets}{\Sigma}
\newcommand{\toolname}{\texttt{ConSTel}\xspace}
\definecolor{colorblindgreen}{HTML}{004D40}
\definecolor{colorblindblue}{HTML}{1E88E5}
\definecolor{colorblindorange}{HTML}{FFC107}
\definecolor{colorblindred}{HTML}{D81B60}
\definecolor{colorblindbluealt}{HTML}{00BFFF}
\definecolor{colorblindorangealt}{HTML}{FF7F00}
\newcommand{\robotC}{\ensuremath{{R}_C}\xspace}
\newcommand{\robotE}{\ensuremath{{R}_E}\xspace}
\newcommand{\clockwise}{\ensuremath{\circlearrowright}}
\newcommand{\anticlockwise}{\ensuremath{\circlearrowleft}}
\begin{document}
\title{Concurrent Permissive Strategy Templates\thanks{Authors are ordered alphabetically. All authors are supported by the DFG project 389792660 TRR 248-CPEC. 
Additionally A.-K.~Schmuck is supported by the DFG project SCHM 3541/1-1. C.~Baier, C.~Chau, and S.~Kl\"uppelholz are supported by the German Federal Ministry of Research, Technology and Space within the project SEMECO Q1 (03ZU1210AG) and funded by the German Research Foundation (DFG, Deutsche Forschungsgemeinschaft) as part of Germany's Excellence Strategy - EXC 2050/2 - Project ID 390696704 - Cluster of Excellence ``Centre for Tactile Internet with Human-in-the-Loop'' (CeTI) of Technische Universit\"at Dresden.}}

\author{
Ashwani Anand\inst{1}
\and
Christel Baier\inst{2}
\and
Calvin Chau\inst{2}
\and
Sascha Klüppelholz\inst{2}\and\\
Ali Mirzaei\thanks{This research was conducted while Ali Mirzaei was interning at MPI-SWS, Germany (\email{ali.mirzaei78@sharif.edu}).}
 \and
Satya Prakash Nayak\inst{1}
\and
Anne-Kathrin Schmuck\inst{1}
}
\institute{$^1$ Max Planck Institute for Software Systems, Kaiserslautern, Germany\\
\email{\{ashwani,sanayak,akschmuck\}@mpi-sws.org}\\
$^2$ Technische Universität Dresden, Germany\\
\email{\{christel.baier,calvin.chau,sasha.klueppelholz\}@tu-dresden.de}
}
\maketitle              %
\begin{abstract}
Two-player games on finite graphs provide a rigorous foundation for modeling the strategic interaction between reactive systems and their environment. While concurrent game semantics naturally capture the synchronous interactions characteristic of many cyber-physical systems (CPS), their adoption in CPS design remains limited. Building on the concept of permissive strategy templates (PeSTels) for turn-based games, we introduce concurrent (permissive) strategy templates (ConSTels) -- a novel representation for sets of randomized winning strategies in concurrent games with Safety, Büchi, and Co-Büchi objectives. ConSTels compactly encode infinite families of strategies, thereby supporting both offline and online adaptation. Offline, we exploit compositionality to enable incremental synthesis: combining ConSTels for simpler objectives into non-conflicting templates for more complex combined objectives. Online, we demonstrate how ConSTels facilitate runtime adaptation, adjusting action probabilities in response to observed opponent behavior to optimize performance while preserving correctness. We implemented ConSTel synthesis and adaptation in a prototype tool and experimentally show its potential. 

\keywords{Concurrent Games  \and Permissive Strategies \and Strategy Adaptation.}
\end{abstract}

	\section{Introduction}
	Two player games on finite graphs \cite{gamesbook} provide a powerful abstraction for modeling the strategic interactions between reactive systems and their environment. 
As such, they have been successfully applied to both software \cite{HenzingerBook,finkbeiner2016synthesis} and cyber-physical system (or component) design \cite{TabuadaBook,kress2018synthesisForRobotsReview,YIN2024100940} to ensure that interactions adhere to stringed correctness requirements. Such requirements are typically temporal -- modeling the correct interaction of processes over time -- and can be formalized in temporal logic or, more generally, as an $\omega$-regular language. %

Depending on the mode of interaction between systems (resp. components) the interaction semantics of players in the resulting game differ. Asynchronous interactions result in \emph{turn-based} games \cite{pnueli1989synthesis,EmersonJutla91}, where in each round only one of the two players can choose among several moves. On the other hand, synchronous interactions lead to \emph{concurrent} games \cite{AlfaroHK98,AlfaroHenzinger_Concurrent2000,Chatterjee07}, where
in each round both players can choose simultaneously and independently among several moves. 

In fact, many strategic interactions of reactive systems are naturally synchronous -- especially in the context of cyber-physical system design. Multiple robots are moving into different regions of the states space concurrently and do not wait for each other to complete a motion. Similarly, human-robot cooperation in unstructured environments, e.g.\ emptying a dishwasher, are only truly helpful to the human if both work alongside (synchronously) and are not forced to take turns (as in structured environments like manufacturing lines). Finally, physical conditions are also typically changing synchronously to the system's actions, e.g. wind conditions might become challenging \emph{while} a drone is landing -- not `waiting' for the drone to land before causing turbulence. 

Despite these natural synchronous strategic interactions of technological systems with their environment, almost all existing work -- especially in the context of strategic control for CPS -- focuses on the use of turn-based game semantics to abstract system interaction (see  \cite{kress2018synthesisForRobotsReview,YIN2024100940} for overviews). %
While traditional approaches typically use games-based synthesis as a black-box in CPS design, a recent line of work introduced \emph{permissive strategy templates} ~(PeSTels)~\cite{AnandMNS23,AnandNS_PeSTels2023,AnandNS24} as a new concise data structure which captures an infinite number of winning strategies in turn-based graph games. This \emph{permissiveness} allows for the adaptation of strategies both \emph{offline}, e.g., for incremental or distributed synthesis \cite{AnandNS_PeSTels2023,ASN_HSCC24,NS_TACAS24} and \emph{online}, e.g., to handle actuation faults or enable dynamic multi-layered control \cite{NNS_HSCC24} of CPS. While the work in \cite{PhalakarnPH_ICTAC2024} extends templates to almost sure winning in stochastic parity games, it does not consider the concurrent setting. %

\emph{Concurrent games} on the other hand, were first considered by Shapley \cite{shapley1953stochastic} and have been studied since~\cite{EtessamiY10,AlfaroHK98,AlfaroHenzinger_Concurrent2000,Chatterjee07}. For a detailed historical account, see \cite{gamesbook}. In particular, the work of \cite{AlfaroHenzinger_Concurrent2000} provides algorithms for solving concurrent $\omega$-regular games. It has been shown that -- in contrast to turn-based games -- concurrent games might require players to use \emph{randomization} over their strategic choices to win. This makes the corresponding solution algorithms strictly more complex than existing algorithms for turn-based games. 

While concurrent games are theoretically well-studied, tool support for $\omega$-regular objectives is scarce. The \textsc{Mocha} tool \cite{AlurHMQRT98} considers concurrent systems in the form of reactive modules \cite{AlurH96} and supports Alternating Temporal Logic (ATL), but appears to be no longer available. The \textsc{Gavs+} platform \cite{ChengKLB11} only handles concurrent reachability games. \textsc{Prism}-games \cite{Kwiatkowska2020} support concurrent stochastic games, but not Büchi or Co-Büchi conditions. \textsc{Praline} \cite{Brenguier13} can check the existence of \emph{pure} (i.e. no randomization) strategy Nash equilibria. \textsc{Mcmas} \cite{LomuscioQR17} and \textsc{Eve} \cite{GutierrezNPW18} also check pure Nash equilibria.
This limited support for richer property classes, such as $\omega$-regular objectives, additionally hinders the adaptation of concurrent games as abstractions for CPS design, despite their natural semantic fit.

\smallskip
\noindent\textbf{Contribution.}
Based on the success of PeSTels for CPS design and the observation that most of their interactions are synchronous, this paper develops \emph{concurrent strategy templates} (ConSTels) for  Safety, Büchi and Co-Büchi objectives. We further present a prototype tool \toolname which automatically computes ConSTels for these games. 
We exploit the advantages of ConSTels over (classical) winning strategies in concurrent games w.r.t.\ both the (i) \emph{offline} and (ii) \emph{online} adaptation of strategies, respectively.

(i) \emph{Offline adaptation:} We formalize incremental synthesis, i.e., the combination of ConSTels computed for individual Büchi or Co-Büchi objectives into a new ConSTel for the combined objective, which are strictly more complex than both individually. Our techniques allow to compute winning strategies for these multi-dimensional objectives if the resulting ConSTels are non-conflicting. While this approach is necessarily incomplete, we show experimentally, that it is successful on several examples adapted from the SYNTCOMP benchmark suite \cite{JacobsPABCCDDDFFKKLMMPR24} (see \cref{figure:lineplot-conflict}). This is particularly interesting, as there is currently no other tool support for concurrent parity games.

(ii) \emph{Online adaptation:} We experimentally showcase the potential of our templates for the purpose of runtime optimization. As ConSTels capture an infinite set of randomized strategies, we can adapt the probability distributions over actions for a certain state during runtime. Given an opponent with a fixed but unknown action distribution, ConSTels allow to infer this distribution at runtime and adapt strategies accordingly. This enables efficient satisfaction of objectives (see \cref{fig:robot_exp}). While this experiment solely serves as a proof-of-concept, we note that it shows the potential of ConSTels to integrate matrix games~\cite{NRTV2007} per state to resolve local dependencies of action probabilities among agents within a strategic, long-term game. We leave this exploration for future work. 

While there exists limited work on concurrent games in robotic planning \cite{WangDCK16}, multi-agent learning \cite{BowlingV03} and collective strategy synthesis in multi-agent systems \cite{XiongL16} via ATL \cite{AlurHK02,AlurHMQRT98}, to the best of our knowledge, we present the first approach to permissive strategies for concurrent games motivated by CPS applications. %

	\section{Preliminaries}
	This section recalls all necessary preliminaries on concurrent games from \cite{AlfaroHenzinger_Concurrent2000} and strategy templates from \cite{AnandNS_PeSTels2023}.

\noindent\textbf{Notation.} 
For a finite alphabet $\alphabets$, we use $\alphabets^*$, $\alphabets^+$, and $\alphabets^\omega$ to denote the set of all finite, non-empty finite, and infinite words over $\alphabets$, respectively.
We define $\alphabets^\infty := \alphabets^* \cup \alphabets^\omega$ and for any word $\omega \in \alphabets^\infty$, $\omega_i$ denotes the $i$-th symbol in $\omega$ and $\omega_{\leq i}:=\omega_1\omega_2\hdots\omega_i$. %
A \emph{probability distribution} on a finite set $A$ is a function $p : A \rightarrow [0,1]$ s.t.\ $\sum_{a \in A} p(a) = 1$. The set of all probability distributions on $A$ is denoted by $\probdistributions(A)$. The \emph{support} of a probability distribution $p$ on set $A$ is defined by $\support(p):=\{a \in A \mid p(a) > 0\}$.
For $B \subseteq A$, we define $p(B): = \sum_{b \in B} p(b)$. %

\smallskip
\noindent\textbf{Concurrent Game Graphs.} A \emph{two-player concurrent game graph} is a tuple $G=(V,\actions_1,\actions_2,\transition)$ where $V$ is a finite set of states,
$\actions_1$ and $\actions_2$ are finite sets of actions for $\p{1}$ and $\p{2}$, respectively, and $\transition:V \times \actions_1 \times \actions_2 \rightarrow V$ is a transition function.
At every state $v \in \vertex$, we denote all possible actions for player $i$ by $\actions_i(v) \subseteq \actions_i$ and, $\p{1}$ chooses an action $a \in \actions_1(v)$ and simultaneously and independently $\p{2}$ chooses an action $b \in \actions_2(v)$. Then the game proceeds to a state $v' \in V$ such that $\transition(v,a,b)=v'$.
A \emph{play} from a state $v_0$ is a finite or infinite sequence of states $\play\in V^\infty$ such that $\play_0=v_0$  and for every $0\leq i \leq |\play|$, there are actions $a_i \in \actions_1(\play_i)$ and $b_i \in \actions_2(\play_i)$, such that $\transition(\play_i,a_i,b_i)=\play_{i+1}$.
We write $\infPlay{\play}$ to denote the set of all states that appear infinitely often in $\play$.

\smallskip
\noindent\textbf{Strategies.} %
A \emph{(randomized) strategy} for player $i \in \{1,2\}$ over a concurrent game graph $\gamegraph$ is a function $\strati : V^+ \rightarrow \probdistributions(\actions_i)$ such that for all $\play v \in \vertex^+$, it holds that $\support(\strati(\play v)) \subseteq \actions_i(v)$. We collect all player $i$ strategies over $\gamegraph$ in the set $\Strat_i$.
A strategy $\strat$ is \emph{deterministic} (or pure) if for every \emph{play} $\play$, the distribution $\strat(\play)$ assigns probability $1$ to a single action, and \emph{randomized} otherwise. 
A strategy $\strat$ is \emph{memoryless} if for all $\play v, \play' v \in V^+$, it holds that $\strat(\play v) = \strat(\play' v)$. 
An \emph{infinite} play $\play\in V^\omega$ is \emph{compliant} with a $\p{1}$ strategy $\stratI{1}$ if for every $i\in\mathbb{N}$ there exist actions $\acto \in \actions_1(\play_i)$ and $\actt \in \actions_2(\play_i)$ s.t. $\transition(\play_i,\acto,\actt)=\play_{i+1}$ and $\acto \in \support(\strato(\play_{\leq i})) $, i.e. $\strato(\play_{\leq i})(\acto) > 0$.
A strategy for $\p{2}$ is defined analogously. 
We refer to a play compliant with $\strati$ and a play compliant with both $\strato$ and $\stratt$ as a \emph{$ \strati $-play} and a \emph{$ \strato\stratt $-play}, respectively and collect all $ \strati $-plays in the set $\plays(\strat)$.

\smallskip
\noindent\textbf{Winning Conditions.}
We consider safety, Büchi and co-Büchi winning conditions over \emph{infinite} plays. Given a set of safe states $I\subseteq V$, an infinite play $\play\in V^\omega$ over $\gamegraph$ is winning in the \emph{concurrent safety game} $(\gamegraph,\Box I)$ iff for all $i\in\mathbb{N}$ holds $\play(i)\in I$. Given a set of Büchi states $T\subseteq V$, an infinite play $\play\in V^\omega$ over $\gamegraph$ is winning in the \emph{concurrent Büchi game} $(\gamegraph,\Box\Diamond T)$ iff for all $i\in\mathbb{N}$ exists $j\geq i$ s.t.\ $\play(j)\in T$.  Given a set of co-Büchi states $\overline{T}\subset V$ with $T = V\setminus \overline{T}$, an infinite play $\play\in V^\omega$ over $\gamegraph$ is winning in the \emph{concurrent co-Büchi game} $(\gamegraph,\Diamond\Box T)$ iff there exists $i\in\mathbb{N}$ s.t.\ for all $j\geq i$ holds $\play(j)\notin \overline{T}$. We write $(\gamegraph, \spec)$ with $\spec\in\{\Box I, \Box \Diamond I, \Diamond\Box I\}$ s.t.\ $I\subseteq V$ to denote any concurrent game and
denote by $\mathcal{L}(\spec)$ the corresponding set of winning plays.

\smallskip
\noindent\textbf{Almost-Sure Winning.} 
We consider almost-sure winning in concurrent games. 
A strategy $\strat$ for $\p{1}$ is \emph{almost-sure winning}  (winning for short) from $v$ in $(\gamegraph, \spec)$ if $\probability_v^\strat(\lang(\spec)) = 1$, where $\probability_v^{\strati}(X)$ denotes the probability that a $\strati$-play starting from $v$ belongs to $X$.
The \emph{almost sure winning region} (winning region for short) for $\p{1}$ in the concurrent game $(\gamegraph, \spec)$ is the set $\wini(\spec)\subseteq V$ of states from which $\p{1}$ has a strategy to win.
A strategy $\strat$ is a \emph{winning strategy} for $\p{1}$ in $(\gamegraph, \spec)$, if it is winning from every $v\in \wini(\spec)$.

\smallskip
\noindent\textbf{Strategy Templates.}
Formally, a strategy template $\template$ is a (possibly infinite) set of $\p{1}$ strategies, i.e.\ $\template\subseteq\StratI{1}$. Then, a $\p{1}$ strategy $\strat_1$ is said to \emph{follow} the template $\template$, if $\strat_1 \in \template$.
We say a strategy template $\template$ is \emph{winning from a state $v$} in a game $(\gamegraph,\spec)$, if every $\p{1}$ strategy following the template $\template$ is winning in $(\gamegraph,\spec)$ from $v$. Moreover, we say a strategy template $\template$ is \emph{winning}, if it is winning from every state in $\wino$. In addition, we call $\template$ \emph{maximally permissive} for $\game$, if every $\p{1}$ strategy $\pi$ which is winning in $\game$ also follows $\template$.

	\section{Overview and Problem Statement}\label{sec:problem}
    bAs formalized in the previous section, winning strategy templates directly generalize from turn-based to concurrent games, due to their simple definition as sets of strategies. Their main advantage over classical winning strategies, however, is rooted in a simple and local data structure which concisely describes these strategy sets. Concretely, for turn-based (parity) games, it is shown in \cite{AnandNS_PeSTels2023} that winning strategy templates can be defined as conjunctions of three simple types of \p{1} edges conditions, namely 
\begin{inparaenum}[(i)]
    \item \emph{unsafe edges}: edges that \p{1} should never take,
    \item \emph{co-live edges}: edges that may only be taken by \p{1} finitely many times along a play, and
    \item \emph{live-groups}: sets of edges such that, if the source state of some live edges is visited infinitely often, \p{1} must take at least one edge from the set infinitely often.
\end{inparaenum}
In particular, these conditions can be extracted by careful bookkeeping during the solution (i.e., the winning region computation) of a safety, Büchi and co-Büchi game, respectively. %

The main contribution of this paper is the automatic extraction of a similarly concise data structure to capture a permissive set of winning strategies in concurrent safety, Büchi and co-Büchi games. Due to the synchronous game semantics, the need for randomized strategies, and the substantially different solution algorithms for concurrent games, the formalization and the computation of these edge conditions is substantially different from the turn-based case. %

We use an example to outline the general construction of concurrent strategy templates which allows us to illustrate their advantages over classical strategies. In addition, we use this example to outline the remaining structure of this paper which formalizes all concepts previewed in this example.

\begin{figure}[t]
  \begin{subfigure}[b]{0.45\textwidth}
	\centering
    \resizebox{0.9\linewidth}{!}{
     \begin{minipage}[b]{\textwidth}
         \centering
        \begin{tikzpicture}[scale=1.8]
		\fill[green!50!black!60] (1,1) rectangle (2,2); %
		\fill[green!50!black!60] (0,0) rectangle (1,1); %
		
		\draw[line width=2pt] (0,0) rectangle (2,2);
		\draw[line width=2pt] (1,0) -- (1,2);
		\draw[line width=2pt] (0,1) -- (2,1);

		\node at (0.5,1.5) {};
		\node at (1.5,1.5) {};
		\node at (0.5,0.5) {};
		\node at (1.5,0.5) {};

		\node[draw, circle, fill=blue!30, inner sep=1pt] at (0.5,1.5) {\robotC};

		\node[draw, circle, fill=red!30, inner sep=1pt] at (1.5,0.5) {\robotE};
	\end{tikzpicture}
	
    \caption{Grid World} \label{fig:grid_map}   
 \end{minipage}
}
 \end{subfigure}
      \begin{subfigure}[b]{0.45\textwidth}
      \centering
     \resizebox{0.9\linewidth}{!}{
     \begin{minipage}[b]{\textwidth}
\centering     
    \begin{tikzpicture}[->,>=stealth',shorten >=1pt,auto,node distance=2.7cm,semithick]
        \node[state,initial,line width=2pt] (I) {$S_0$};
        \node[state,right of=I,line width=2pt] (1) {$S_1$};
        \node[state,below of=1,line width=2pt] (2) {$S_2$};
        \node[state,below of=I,line width=2pt] (E) {$S_e$};

        \path (1) edge [bend left] node[sloped,  pos=0.25] {\small $(\clockwise,\anticlockwise)$} (I);
        \path (1) edge [bend left] node[sloped, pos=0.75] {\small $(\anticlockwise,\clockwise)$} (I);
        
        \path (I) edge [bend left] node[sloped, above, pos=0.75] {\small $(\anticlockwise,\clockwise)$} (1);
        \path (I) edge [bend left] node[sloped, above, pos=0.25] {\small $(\clockwise,\anticlockwise)$} (1);

        \path (I) edge [bend left] node[sloped, below, pos = 0.25] {\small $(\clockwise,\clockwise)$} 
        node[sloped,below, pos = 0.75,rotate=180] {\small $(\anticlockwise,\anticlockwise)$}
        (E);
        \path (E) edge [bend left] node[sloped, below, pos=0.25] {\small $(\clockwise,\clockwise)$} (I);
        \path (E) edge [bend left] node[sloped, below, pos = 0.75,rotate=180] {\small $(\anticlockwise,\anticlockwise)$} (I);

        \path (2) edge [bend left] node[sloped, below, pos=0.25] {\small $(\clockwise,\clockwise)$} (1);
        \path (2) edge [bend left] node[sloped, below, pos=0.75,rotate=180] {\small $(\anticlockwise,\anticlockwise)$} (1);
        \path (1) edge [bend left] node[sloped, below, pos=0.75,rotate=180] {\small $(\anticlockwise,\anticlockwise)$} (2);
        \path (1) edge [bend left] node[sloped, below, pos=0.25] {\small $(\clockwise,\clockwise)$} (2);

        \path (2) edge [bend left] node[sloped, below, pos=0.25] {\small$(\anticlockwise,\clockwise)$} (E);
        \path (2) edge [bend left] node[sloped, below, pos=0.75] {\small$(\clockwise,\anticlockwise)$} (E);
        
        \path (E) edge [bend left] node[sloped, below, pos=0.25] {\small$(\anticlockwise,\clockwise)$} (2);
        \path (E) edge [bend left] node[sloped, below, pos=0.75] {\small$(\clockwise,\anticlockwise)$} (2);
        
        \path (E) edge[loop left=45] node[above left] {$(N,\clockwise)$} (E);
        \path (E) edge[loop left=45] node[below left] {$(N,\anticlockwise)$} (E);
    \end{tikzpicture}
    
    \caption{Concurrent Game Graph}   \label{fig:robot_game} 
\end{minipage}
}
 \end{subfigure}

    \caption{Motivating Example. Left: Two robots, $\robotC$ and $\robotE$, moving in a grid world. $\robotC$ must enter a green region \emph{alone} to fulfill a task. Both robots can move either clockwise ($\clockwise$) or anti-clockwise ($\anticlockwise$). 
    Right: Concurrent game graph capturing the synchronous strategic interaction of $\robotC$ and $\robotE$.} \label{fig:robot_all} 
    \end{figure}
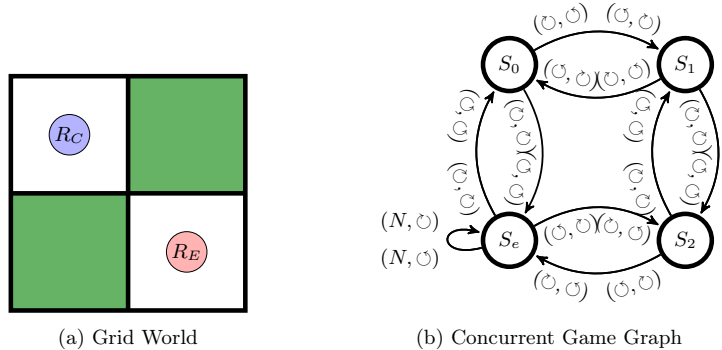

\smallskip
\noindent\textbf{Modeling Strategic Interaction as a Concurrent Game.}
We consider two robots -- a controlled robot $\robotC$ and an environment robot $\robotE$ -- in the very simple grid world depicted in \cref{fig:grid_map} where $\robotC$ must enter a green region \emph{alone} in order to fulfill a particular task. Both robots can move either clockwise or anti-clockwise through the grid. %

This strategic interaction can be modeled as a concurrent game between $\robotC$ as $\p{1}$ and $\robotE$ as $\p{2}$ over the game graph $\gamegraph$ depicted in \cref{fig:robot_game}. The game has four states, distinguishing all relevant configurations of robot positions in the workspace. The initial state $S_0$ represents the configurations where both robots are in diagonally opposite white corners of the grid.
The state $S_1$ (resp. $S_2$) represents the configurations, where both robots are in a green cell (resp. a white cell) together. Finally, the escape state $S_e$ represents the configurations where $\robotC$ is in a green cell and $\robotE$ is in a white cell.
In every state, both players can choose to move clockwise ($\clockwise$) or anti-clockwise ($\anticlockwise$) to the adjacent cell.
To simplify the resulting game, we assume for illustration purposes, that within the escape state $S_e$, $\p{1}$ (the robot $\robotC$) can choose to stay ($N$) in the cell and then, irrespective of the $\p{2}$'s action, the robots remain in $S_e$.

\smallskip
\noindent\textbf{Safety Templates.}
Let us first consider a very simple safety game where both robots start in $S_e$ and $\robotC$ must ensure that the play stays in $S_e$. Here, the winning region is $\wino(\Box S_e)=\{S_e\}$ and $\robotC$ must never choose actions from the set $\{\clockwise, \anticlockwise\}$ to win from states in $\wino$. This is modeled by a safety template which allows all strategies which do not assign positive probability to actions specified by the safety function $\funcSafe(S_e)=\{\clockwise, \anticlockwise\}$, see \cref{sec:safety} for details.

\smallskip
\noindent\textbf{Live-Group Templates.}
Let us now consider the \buchi game $(\gamegraph,\LTLalways \LTLeventually \{S_e\})$. 
Note that, in this game, the robot $\robotC$ can not win from the initial configuration $S_0$ by choosing moves deterministically in every state. For every deterministic choice of $\robotC$ in states $\{S_0, S_2\}$, $\robotE$ can choose an action which prevents reaching $S_e$.
Therefore, $\robotC$ must choose both actions with positive probability in states $\{S_0, S_2\}$ to ensure that, irrespective of $\robotE$'s strategy, the game will reach $S_e$ with probability $1$. In this case, however, we have $\wino(\Box \Diamond S_e)=V$ and a winning strategy template for $\robotC$ in this game would ask to assign positive probability to both moves $\{\clockwise, \anticlockwise\}$ in every state $\{S_0, S_2\}$, while in states $\{S_e, S_1\}$, the strategy template would not restrict $\robotC$'s moves to remain permissive.

Formally, we encode this strategy template as a live-group template $\template_1 = \templategrlive(\funcLive,\livePartitions)$, where $\funcLive$ is a function that maps every state to a set of action sets such that each action set needs to be assigned positive probability, and $\livePartitions$ allows to have multiple liveness constraints from different states to ensure permissiveness.
In this example, with the above intuition, we have $\funcLive(S_0) = \funcLive(S_2) = \{\{\clockwise\}, \{\anticlockwise\}\}$, and $\funcLive(S_e) = \{\actiono(S_e)\}$, $\funcLive(S_1) = \{\actiono(S_1)\}$.
However, we actually do not need to satisfy the liveness constraints from both $S_1$ \emph{and} $S_2$, but only from one of them (along with the liveness constraints of state $S_1$) to guarantee reaching state $S_e$. This is captured by combining $S_1$ and $S_2$ into the same \emph{live group}, leading to $\livePartitions = \{\{S_0, S_2\}, \{S_1\}\}$. The resulting live group template only requires a following strategy to assign positive probability to either $\funcLive(S_1)$ or $\funcLive(S_2)$ to be a winning strategy for $\robotC$. In the general setting where $\wino(\Box \Diamond S_e)\subsetneq V$, we additionally need a safety template to prevent strategies to leave $\wino$, as a live-group template for Büchi games is only defined over winning states. We discuss live-group templates formally in \cref{sec:buechi}.

\smallskip
\noindent\textbf{Co-Live Templates.}
Now, let us consider a \cobuchi game $(\gamegraph,\LTLeventually \LTLalways \{S_1, S_2, S_e\})$. We observe that $\robotC$ cannot ensure not visiting the initial state $S_0$ from states $\{S_1, S_2\}$. %
Hence, the only way for $\robotC$ to win this game is to ensure that (i) it always eventually reaches $S_e$ and (ii) from some point onward, it always stays in the escape state $S_e$. A permissive template is therefore composed of two parts. To realize (i) it requires the live-group template $\templategrlive(\funcLive,\livePartitions)$ computed for the Büchi game $(\gamegraph,\LTLalways \LTLeventually \{S_e\})$. To realize (ii), within state $S_e$, from some point onwards, a winning strategy must assign probability $0$ to both moves $\{\clockwise, \anticlockwise\}$. This can be encoded as a 
co-liveness template via a co-live function which maps a state to all actions whose accumulated probability must be bounded over time, so that it is eventually not taken again. In this example, we have $\funcCoLive(S_e) = \{\{\clockwise\}, \{\anticlockwise\}\}$, and for every other state $s$, $\funcCoLive(s) = \emptyset$. Again, if the winning region does not span the entire states pace, an additional safety template is required. In addition, co-Büchi games might require a more complex combination of co-live and live-group templates in the general case, leading to a more involved extraction algorithm. We discuss co-live templates formally in \cref{sec:cobuechi}.

\smallskip
\noindent\textbf{Offline Adaptability.}
One major advantage of the permissiveness of ConSTels is their induced offline adaptability. As an example, consider a game with the conjunction of the above discussed Büchi and co-Büchi winning conditions, i.e., $\spec = \LTLalways \LTLeventually \{S_e\} \land \LTLeventually \LTLalways \{S_1, S_2, S_e\}$. 
In this case, we can compose the constraints of the two strategy templates by taking their intersection, which simply coincides with the template for the co-Büchi objective for this example. In general, such a composition leads to a winning strategy templates for the combined objective, if no conflict arises between templates, e.g., the only remaining live action set of a state contains a co-live action. We formalize this notion in \cref{sec:offlineA}-\cref{sec:offlineB}. We empirically demonstrate in \cref{sec:experiments} that such conflicts are rare in practice, and composition is possible in a large number of cases.

\smallskip
\noindent\textbf{Online Adaptability.}
Due to their permissive structure, strategy templates are useful for optimizing the robot's behavior based on additional information available at runtime. To illustrate this, assume that $\robotE$ has a randomized strategy in the game which uses a fixed probability distribution over actions. While this probability distribution is not known to $\robotC$, $\robotC$ can observe the actions chosen by $\robotE$ \emph{at runtime}. $\robotC$ can therefore \enquote{estimate} $\robotE$'s action distribution and adapt its own action distribution \emph{online} s.t.\ it is still following the strategy template but also optimizing its action choices to reach $S_e$ efficiently. This idea was implemented in our prototype tool for the robot example in \cref{fig:grid_map} and the experimental results are depicted in \cref{fig:robot_exp}. We see that $\robotC$ reaches $S_e$ much faster when adapting its strategy.

	\section{Computing Winning Strategy Templates}\label{sec:computation}
	This section contains the main contribution of this paper which is the formalization and construction of \emph{concurrent} safety templates (\cref{sec:safety}), live-group templates (\cref{sec:buechi}) and co-live templates (\cref{sec:cobuechi}). Similar to the construction of these templates for turn-based games, we extract them from the symbolic computation of the winning region for concurrent safety, Büchi and co-Büchi games, respectively.  To formalize this extraction, we present additional preliminaries in \cref{sec:prelimMucalc}.

\subsection{Preliminaries}\label{sec:prelimMucalc}
Observing space constrains, we refer the reader to \cite{kozen1983results} for an introduction to $\mu$-calculus, and to \cite{AnandMNS23} for its particular use in the context of template computations. To ensure a minimum level of self-containedness, this section recalls the set-transformers needed to define the symbolic fixed-point algorithms for  concurrent safety, Büchi and co-Büchi games from \cite{AlfaroHenzinger_Concurrent2000}.

Given a concurrent game $(\gamegraph,\spec)$, a state $v\in V$ and transition distributions $\distri\in \probdistributions(\actions_i(v))$ for players $i\in\{1,2\}$, the one-round probability of reaching a set $X \subseteq \vertex$ from $v$ is defined by 
\begin{equation*}
\orprob{v}{\distro}{\distrt}{X}:=\sum_{a \in \actiono(v)}^{} \sum_{b \in \actiont(v)}^{} \sum_{\{v' \in X | \delta(v,a,b) = v'\}}^{} \distro(a) \distrt(b).
\end{equation*}
Then we define three types of \emph{predecessor operators}:
\begin{compactitem}
    \item $\textsf{pre}_1 : 2^{\vertex} \rightarrow 2^{\vertex}$ s.t.\ $v\in \pre{1}{X}$ if
    \[
    \exists \distro \in \probdistributions(\actiono(v)).\forall \distrt \in \probdistributions(\actiont(v)).\orprob{v}{\distro}{\distrt}{X} = 1.
    \]

    \item $\textsf{Apre}_1 : 2^{\vertex} \times 2^{\vertex} \rightarrow 2^{\vertex}$ s.t.\ $v \in \Apre{1}{Y}{X}$ if
    \[
    \exists \distro \in \probdistributions(\actiono(v)).\forall \distrt \in \probdistributions(\actiont(v)).\left[\orprob{v}{\distro}{\distrt}{Y} = 1 \wedge \orprob{v}{\distro}{\distrt}{X} > 0\right]
    \]
    \item $\textsf{AFpre}_1 : 2^{\vertex} \times 2^{\vertex} \times 2^{\vertex} \rightarrow 2^{\vertex}$ s.t.\ $v \in \AFpre{1}{Z}{Y}{X}$ if for some $\beta > 0$,
    \begin{equation*}
    \exists \distro \in \probdistributions(\actiono(v)).\forall \distrt \in \probdistributions(\actiont(v)).
    \begin{pmatrix*}[l]
     \orprob{v}{\distro}{\distrt}{Z} = 1\\
     \wedge \orprob{v}{\distro}{\distrt}{X} \geq \beta \orprob{v}{\distro}{\distrt}{\neg Y} \geq 0
    \end{pmatrix*}
    \end{equation*}
\end{compactitem}

The predecessor operator $ \pre{1}{X} $ computes the set of states from which \p{1} can ensure reaching $X$ with probability 1. The predecessor operator $\Apre{1}{Y}{X}$ computes the set of states from which \p{1} can ensure to stay in $Y$ almost surely and reach $X$ with positive probability. Finally, the predecessor operator $\AFpre{1}{Z}{Y}{X}$ computes the set of states from which \p{1} can ensure to stay in $Z$ almost surely and if it can leave $Y$ with positive probability, then it should also reach $X$ with some positive probability.

In addition, we utilize two action-set functions from \cite[Sec.6]{AlfaroHenzinger_Concurrent2000}, namely $A^v_Y : 2^{\actiont(v)} \rightarrow 2^{\actiono(v)}$ and  $B^{v}_{X}:2^{\actiono(v)} \rightarrow 2^{\actiont(v)}$ defined s.t.\
\begin{align}
 \safeactions{Y}{v}{\subactiont}&:= \{ \acto \in \actiono(v) \mid \forall \actt \in \actiont(v) ~.~ \transition(v,\acto,\actt) \notin Y \Rightarrow \actt \in \subactiont\},\label{equ:defAset}\\
 \forwardingactions{X}{v}{\subactiono}&:= \{ \actt \in \actiont(v) \mid \exists \acto \in \subactiono \cdot \delta(v,\acto,\actt) \in X \}\label{eq:defBset}
\end{align}
Intuitively, the set $\safeactions{Y}{v}{\subactiont}$ captures the set of actions for $\p{1}$ at state $v$ which are guaranteed to lead to a state in $Y\subseteq V$, under the assumption that $\p{2}$ avoids the actions in $\subactiont$. 
The set $\forwardingactions{X}{v}{\subactiono}$ captures the set of actions for $\p{2}$ at state $v$ for which \p{1} can ensure reaching a state in $X\subseteq V$ with positive probability.
It follows from \cite[Sec.6]{AlfaroHenzinger_Concurrent2000} that \textsf{AFpre} can be computed via the equivalence 
$v \in \AFpre{1}{Z}{Y}{X} \iff \nu \subaction . (\safeactions{Z}{v}{\emptyset} \cap \safeactions{Y}{v}{\forwardingactions{X}{v}{\subaction}}) \neq \emptyset$.

	\subsection{Concurrent Safety Templates}\label{sec:safety}
	This section introduces (maximally permissive) safety templates for concurrent games in direct analogy to safety templates for turn-based games.

\smallskip
\noindent\textbf{Template Formalization.}
As illustrated in \cref{sec:problem} and similar to their turn-based counterparts, concurrent safety templates are defined via a function $\funcSafe : V \rightarrow 2^{\Gamma_1(V)}$ specifying a set of \emph{unsafe actions} for $\p{1}$ at each state and collecting all strategies which avoid actions in $\funcSafe$. I.e., their support\footnote{Concurrent safety games allow deterministic strategies. We still define safety templates over randomized strategies to ease their combination with other templates.} never contains unsafe actions, as formalized next.

\begin{definition}
	Let $\gamegraph$ be a concurrent game graph. Given a function $\funcSafe : V \rightarrow 2^{\Gamma_1(V)}$, the \emph{concurrent safety template} $\templatesafe(\funcSafe)$ is defined by
	\begin{equation}\label{equ:PiSet}
	 \templatesafe(\funcSafe) := \{\strat_1 \mid \forall \play \in \plays(\strat_1)\cdot \forall i\geq 0\cdot \support(\strat_1(\play_{\leq i})) \cap \funcSafe(\play_i) = \emptyset\}.
	\end{equation}
\end{definition}
The remainder of this section shows how $\funcSafe$ can be computed s.t.\  $\templatesafe$ is (almost surely) winning
in a safety game $(\gamegraph,\Box I)$, and therefore sound. %

\smallskip
\noindent\textbf{Template Computation.}
It is shown in \cite{AlfaroHenzinger_Concurrent2000} that the (almost sure) wining region for $\p{1}$ in a concurrent safety game $(G,\Box \targetSet)$ can be computed by a symbolic fixed-point formula %
 which is identical to the fixed-point formula for \emph{turn-based} safety games, i.e.\
 $\wino(\Box \targetSet)= \nu X. \left(\pre{1}{X}\cap \targetSet\right)$.
We use $\SafeAlgo{\gamegraph,\targetSet}$ to denote the algorithm which implements this formula. 

Now recall the action set function $A^v_Y : 2^{\actiont(v)} \rightarrow 2^{\actiono(v)}$ from \cref{equ:defAset}
which captures the set of actions for $\p{1}$ at state $v$ which are guaranteed to lead to a state in $Y\subseteq V$, under the assumption that $\p{2}$ avoids the actions in $\subactiont$.
As $\subactiont$ is the only actions which $\p{2}$ could play to avoid remaining in the \emph{safe} set $Y$, we can set $\subactiont=\emptyset$ and $Y=\wino(\LTLalways \targetSet)$ in \eqref{equ:defAset} to obtain the set of actions for $\p{1}$ which ensure that the play stays within $\wino(\LTLalways \targetSet)$ against all possible actions of $\p{2}$. With this, we obtain the following construction.

\begin{restatable}{theorem}{restatesafety}\label{thm:safety}
	Let $(\gamegraph, \Box \targetSet)$ be a \emph{concurrent safety game} with $\wino=\wino(\Box\targetSet)$ and
	\begin{equation}\label{equ:SafeTempCompute}
	 \funcSafe(v) := 
	 \begin{cases}
	  \actiono(v) \setminus  \safeactions{\wino}{v}{\emptyset} & \text{if}~v\in \wino,\\
	  \emptyset& \text{if}~v\notin \wino\\
	 \end{cases}.
	\end{equation} 
	Then $\templatesafe(\funcSafe)$ is (almost surely) winning in $(\gamegraph, \Box \targetSet)$ and maximally permissive.
\end{restatable}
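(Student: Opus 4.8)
The plan is to split the statement into its two halves — that $\templatesafe(\funcSafe)$ is almost-surely winning on $\wino$, and that it is maximally permissive — and handle each separately. The single semantic fact driving both halves is the meaning of \eqref{equ:defAset}: for $v\in\wino$, an action $a\in\safeactions{\wino}{v}{\emptyset}$ is exactly one with $\transition(v,a,b)\in\wino$ for \emph{every} $b\in\actiont(v)$, i.e.\ playing $a$ keeps the successor inside $\wino$ no matter what $\p{2}$ does. I also use two properties of the winning region: from the fixed-point form $\wino=\nu X.(\pre{1}{X}\cap\targetSet)$ we get $\wino\subseteq\pre{1}{\wino}$; and, by definition of the almost-sure winning region, $v\notin\wino$ means that every $\p{1}$ strategy admits a $\p{2}$ response violating $\Box\targetSet$ from $v$ with positive probability.

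\emph{Soundness.} Fix $\strat_1\in\templatesafe(\funcSafe)$, a state $v\in\wino$, and a $\strat_1$-play $\play$ from $v$; I show $\play_i\in\wino$ for all $i$ by induction, the base case being $\play_0=v\in\wino$. For the step, if $\play_i\in\wino$ then $\funcSafe(\play_i)=\actiono(\play_i)\setminus\safeactions{\wino}{\play_i}{\emptyset}$, and since $\support(\strat_1(\play_{\leq i}))$ is a nonempty subset of $\actiono(\play_i)$ disjoint from $\funcSafe(\play_i)$, it is contained in $\safeactions{\wino}{\play_i}{\emptyset}$; the action $a_i$ actually taken at step $i$ of $\play$ lies in this support, so $\play_{i+1}=\transition(\play_i,a_i,b_i)\in\wino$ by \eqref{equ:defAset}. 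Hence every $\strat_1$-play from $v$ stays in $\wino\subseteq\targetSet$, so $\probability_v^{\strat_1}(\lang(\Box\targetSet))=1$ for every $v\in\wino$ and the template is winning. (As a byproduct $\safeactions{\wino}{v}{\emptyset}\neq\emptyset$ for $v\in\wino$ — which also follows directly from $\wino\subseteq\pre{1}{\wino}$ by testing the witnessing $\p{1}$-distribution against Dirac $\p{2}$-responses — so the template is nonempty whenever $\wino\neq\emptyset$.)

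\emph{Maximal permissiveness.} Let $\strat_1$ be any winning strategy and suppose, for contradiction, that it does not follow $\templatesafe(\funcSafe)$: there exist a $\strat_1$-play $\play$ issued from $\wino$ (the scope of the winning claim) and an index $i$ with some $a\in\support(\strat_1(\play_{\leq i}))\cap\funcSafe(\play_i)$. Since $\funcSafe$ vanishes outside $\wino$, we get $\play_i\in\wino$ and $a\notin\safeactions{\wino}{\play_i}{\emptyset}$, so by \eqref{equ:defAset} there is $b\in\actiont(\play_i)$ with $v':=\transition(\play_i,a,b)\notin\wino$. Now construct a $\p{2}$ strategy that first replays $\play$'s actions up to step $i-1$, then plays $b$ at step $i$, and thereafter plays a $\p{2}$ response — which exists because $v'\notin\wino$ — that violates $\Box\targetSet$ with positive probability against the continuation of $\strat_1$ after the history $\play_{\leq i}v'$. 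As $\play$ is compliant with $\strat_1$, each action along $\play_{\leq i}$ and the action $a$ receive positive probability, so this pairing reaches the history $\play_{\leq i}v'$ with positive probability, and multiplying by the positive conditional probability of subsequently violating $\Box\targetSet$ yields $\probability_{\play_0}^{\strat_1}(\lang(\Box\targetSet))<1$, contradicting that $\strat_1$ is winning from $\play_0\in\wino$. Thus every winning strategy follows $\templatesafe(\funcSafe)$.

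The step I expect to be most delicate is the permissiveness argument: one must assemble a single $\p{2}$ strategy from (i) a finite replay prefix, (ii) the one deviating move $b$, and (iii) a positive-probability escape from the non-winning state $v'$, and then verify that the resulting bad event has \emph{strictly} positive probability — the prefix contributes the product of the support-weights $\strat_1$ assigns along $\play$, each positive precisely because $\play$ is a $\strat_1$-play, and the escape uses the definitional characterization of $\wino$ applied to $\strat_1$'s continuation viewed as a strategy from $v'$. A minor caveat is that I argue only about plays emanating from $\wino$, since a winning strategy's behaviour at or after non-winning states is unconstrained and irrelevant to both halves; beyond the nonemptiness remark, the $\nu$-fixed-point form of $\wino$ is not otherwise needed.
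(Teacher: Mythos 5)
Your proof is correct and follows essentially the same route as the paper: the soundness half is the same induction showing every compliant play from $\wino$ remains in $\wino\subseteq\targetSet$ because $\support(\strat_1(\play_{\leq i}))\subseteq\safeactions{\wino}{\play_i}{\emptyset}$, and the permissiveness half rests on the same fact that a winning strategy's support actions must keep the play inside $\wino$. The only difference is one of detail: you justify that fact explicitly via a \p{2} spoiler (replay the prefix, play the deviating action $b$, then escape from $v'\notin\wino$), where the paper asserts it directly, and your caveat about restricting attention to plays issued from $\wino$ matches the reading the paper implicitly uses.
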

\noindent We call $\funcSafe$ the \emph{safety function} and refer to its computation via $\safeTemp(\gamegraph, \targetSet)$.

	\subsection{Concurrent Live-Group Templates}\label{sec:buechi}
	This section introduces liveness templates for concurrent games in direct analogy to liveness templates for turn-based games. %
In turn-based games, Anand et al.~\cite{AnandNS_PeSTels2023} formalize a template for liveness via live groups, each containing a set $H$ of \p{1} edges. Whenever \p{1}'s strategy ensures that at least one edge from a group $H$ is taken infinitely often if one source state $\texttt{src}(H)$ is seen infinitely often along a play, progress towards a goal state is guaranteed always again. The computation of live groups for turn-based games follows the steps of a classical fixed-point algorithm for \buchi games. Intuitively, in each iteration $i$ of the fixed-point algorithm, all edges that \p{1} needs to \enquote{actively} choose to make progress to states added in earlier iterations, form one live group $H$. 

While conceptually similar, the data structure needed to realize the same idea in concurrent games is more complicated and its extraction from the corresponding fixed-point algorithm for concurrent Büchi games from \cite{AlfaroHenzinger_Concurrent2000} more intricate. %

\smallskip
\noindent\textbf{Template Formalization.}
There are two main reasons why a different data structure is needed to formalize concurrent liveness templates. First, we observe that due to the action-based semantics of concurrent games, we can not indirectly define the set of source states $\texttt{src}(H)$ via an edge set $H$. We therefore keep a separate list of source states $\livePartitions$ and live actions $\funcLive$. A second, more intricate, difference lies in the definition of $\funcLive$. Due to the synchronous semantics of concurrent games, progress can only be ensured by a randomized strategy which assigns a positive probability to all actions of \p{1} that might lead to progress in case \p{2} chooses the `right' action. We therefore need to remember sets of \p{1} actions per `progress enabling' \p{2} action in every state $v\in\vertex$, leading to a function $\funcLive: \vertex \rightarrow 2^{2^{\actiono(\vertex)}}$. %

Intuitively, a concurrent liveness template must then ensure that for each state $v$ and for every opponent's action $b\in\actiont(v)$ s.t.\ progress might be made, a positive probability is assigned to all actions $\subaction^b \subseteq \actiono(v)$ which would enable this progress. %
Unfortunately, however, it is not sufficient for $\p{1}$ to assign a positive probability to these subsets. It must also be ensured that the probability of taking this action does not vanish over time, to enable infinitely many visits to the target set. %
This is formalized below.
\begin{definition}
	Let $\gamegraph$ be a concurrent game. Given a function  $\funcLive: \vertex \rightarrow 2^{2^{\actiono(\vertex)}}$ and a partition $\livePartitions \subseteq 2^\vertex$, the \emph{concurrent live-group template} $\templatelive ( \funcLive, \livePartitions)$ is defined s.t.\ $\strat_1\in\templatelive ( \funcLive, \livePartitions)$ iff for all 
	$\play \in \plays(\strato)$ and for all $\livePartition \in \livePartitions$ it holds that 
	\begin{equation}\label{equ:PiSetLive}
		\textstyle \left(\livePartition \cap \infPlay{\play} \ne \emptyset\right) \Rightarrow 
		\textstyle \left(\sum_{\{i \mid \play_i\in \livePartition\}} \minsupport{\strato(\play_{\leq i} )}{\funcLive(\play_i)} = \infty\right),
	\end{equation}
	\begin{equation}\label{equ:MinSupp}
     \text{where }\minsupport{p}{X} :=\textstyle \min_{\subaction \in X} p(\subaction).
    \end{equation}
\end{definition}

Intuitively, $\funcLive(\play_i)$ is a set of \p{1} action sets and \eqref{equ:PiSetLive} ensures that a strategy can only follow $\templatelive ( \funcLive, \livePartitions)$ if the minimum probability it assigns to all the live action sets in $\funcLive(\play_i)$ add up to infinity. This ensures that the corresponding actions are indeed taken infinitely often in expectation (if the corresponding live state is seen infinitely often). %
The remainder of this section shows how $\funcLive$ and $\livePartitions$ can be computed s.t.\  $\templatelive ( \funcLive, \livePartitions)$ is (almost surely) winning from all states in the winning region of a Büchi game $(\gamegraph,\Box I)$, and therefore sound.

\smallskip
\noindent\textbf{Template Computation.}
Following \cite{AlfaroHenzinger_Concurrent2000} the (almost sure) winning region of $\p{1}$ in a concurrent \buchi game $(\gamegraph, \Box \Diamond \targetSet)$ can be computed by the symbolic fixed-point formula 
	$\wino(\Box \Diamond \targetSet) = \nu Y. \mu X. ((\neg \targetSet \cap \Apre{1}{Y}{X})) \cup (\targetSet \cap \pre{1}{Y})))$,
and $\BuchiAlgo{\gamegraph, \targetSet}$ denotes the algorithm which implements this formula. %

As commonly done for strategy extraction, we consider the last iteration of $\BuchiAlgo{\gamegraph, \targetSet}$ (which computes $Y=\wino$), and define the sets of states $X_0 \subseteq X_1 \subseteq \ldots X_k = X_{k+1} = \wino$ s.t.\ $X_1=\targetSet \cap \pre{1}{\wino}$ and $X_{i+1} = (\neg \targetSet \cap \Apre{1}{\wino}{X_i}) \cup X_1$ for $i \ge 1$. %
We further define $\targetSet_i := X_i \setminus X_{i-1}$ for $i \ge 1$.
Now, we intuitively define $\funcLive$ and $\livePartitions$ s.t.\ whenever some $v\in \targetSet_i$ has been seen infinitely often in the \emph{play} $\play$, then $\p{1}$ should assign probabilities in a way, to be able to reach $\targetSet_{i-1}$ infinitely many times. 
Formally, given the above definitions, we define $\livePartitions:=\{\targetSet_1,\targetSet_2, \ldots \targetSet_k\}$, and for $v\in I_i$ we have
 \begin{align}\label{equ:LiveFunc}
 \funcLive(v):=\{\funcLive^{\actt}(v)\}_{\actt \in \actiont(v)}
&&\text{s.t.}~\funcLive^\actt(v):=\{\acto \in \actiono(v) \setminus \funcSafe(v) \mid \transition(v,\acto,\actt) \in X_{i-1}\},\notag
 \end{align}
where $\funcSafe(v)$ is defined as in \eqref{equ:PiSet}, but w.r.t.\ $\wino=\wino(\LTLalways \LTLeventually \targetSet)$. We call $(\funcLive,\livePartitions)$ the \emph{live-group function} and refer to their outlined computation via $\liveTemp(\gamegraph,\targetSet)$.

\smallskip
\noindent\textbf{Soundness.}
Following \cite{AnandNS_PeSTels2023}, we show that concurrent liveness templates computed for concurrent Büchi games are indeed sound. Intuitively, every winning strategy in such a game must (i) ensure that the play always stays in the winning region and (ii) always makes progress towards the Büchi states. While (i) can be ensured by concurrent safety templates, (ii) is ensured by concurrent liveness templates. This is formalized by the following theorem (see \cref{app:algorithms} for the full algorithm).

\begin{restatable}{theorem}{restatebuchi}\label{thm:buchi}
	Let $(\gamegraph, \LTLalways\LTLeventually \targetSet)$ be a \emph{concurrent \buchi game} with winning region $\wino$, safety function $\funcSafe=\safeTemp(\gamegraph,\wino)$ and live-group functions
 $(\funcLive, \livePartitions)=\liveTemp(\gamegraph, \targetSet)$, then $\template = \templatesafe(\funcSafe) \cap \templatelive(\funcLive, \livePartitions)$ 
is a \emph{winning strategy template} for $(\gamegraph, \LTLalways\LTLeventually \targetSet)$.
\end{restatable}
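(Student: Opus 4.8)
The plan is to fix an arbitrary $\strat_1\in\template$, an arbitrary $\p{2}$-strategy $\strat_2$, and a state $v_0\in\wino$, and to show that in the Markov chain induced by $\strat_1,\strat_2$ from $v_0$ the play satisfies $\LTLalways\LTLeventually\targetSet$ with probability $1$; since $v_0,\strat_1,\strat_2$ are arbitrary, this is exactly the claim. Write $\play$ for the random play and $\mathcal{F}_m:=\sigma(\play_{\leq m})$ for the natural filtration, and note that every realised play is compliant with $\strat_1$ (hence is a $\strat_1$-play) by construction of the induced chain. The proof has two parts, one per conjunct of $\template$.

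\emph{Safety and set-up.} First I would record that $\wino(\LTLalways\wino)=\wino$: from the inner fixed-point, $\targetSet\cap\wino=X_1\subseteq\pre{1}{\wino}$, while every $v\in\neg\targetSet\cap\wino$ lies in $\Apre{1}{\wino}{\wino}\subseteq\pre{1}{\wino}$ (as $\Apre{1}{Y}{X}$ already demands $\orprob{v}{\distro}{\distrt}{Y}=1$ for all $\distrt$); hence $\wino\subseteq\pre{1}{\wino}$, so $\wino$ is its own safe winning region and $\safeTemp(\gamegraph,\wino)$ returns precisely the $\funcSafe$ of the statement. Applying \cref{thm:safety} to $(\gamegraph,\LTLalways\wino)$, the template $\templatesafe(\funcSafe)$ is winning there; since $\strat_1\in\template\subseteq\templatesafe(\funcSafe)$ and $v_0\in\wino$, the play stays in $\wino$ forever (surely). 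Thus the level $r(\play_m)$, defined by $\play_m\in\targetSet_{r(\play_m)}$, is well defined and lies in $\{1,\dots,k\}$ for every $m$; and since $\targetSet_1=X_1\subseteq\targetSet$ (with $X_0=\emptyset$), it suffices to prove that $\targetSet_1$ is visited infinitely often almost surely. Below, only the live-group constraint \eqref{equ:PiSetLive} for the levels $\targetSet_j$ with $j\geq 2$ is used.

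\emph{One-step progress and Borel--Cantelli.} The key local estimate is that, whenever $\play_m\in\targetSet_j$ with $j\geq 2$, one has $\probability[\play_{m+1}\in X_{j-1}\mid\mathcal{F}_m]\geq\minsupport{\strat_1(\play_{\leq m})}{\funcLive(\play_m)}$. Indeed, by definition every $\funcLive^{b}(\play_m)\in\funcLive(\play_m)$ consists of $\p{1}$-actions whose successor under $\p{2}$-action $b$ lies in $X_{j-1}$, so writing $\distro=\strat_1(\play_{\leq m})$, $\distrt=\strat_2(\play_{\leq m})$, the left-hand side equals $\sum_{b}\distrt(b)\,\distro(\{a\mid\transition(\play_m,a,b)\in X_{j-1}\})\geq\sum_b\distrt(b)\,\distro(\funcLive^{b}(\play_m))\geq\min_b\distro(\funcLive^{b}(\play_m))=\minsupport{\distro}{\funcLive(\play_m)}$. (Here $\funcLive^{b}(\play_m)\neq\emptyset$ for each $b$, since from $\play_m\in\Apre{1}{\wino}{X_{j-1}}$ one extracts a distribution whose support avoids $\funcSafe(\play_m)$ and meets $\{a\mid\transition(\play_m,a,b)\in X_{j-1}\}$ for every $b$; if some $\funcLive^b(\play_m)$ were empty, $\minsupport{\cdot}{\funcLive(\play_m)}$ would be pinned at $0$ and the template unsatisfiable.) Now fix $j\in\{2,\dots,k\}$ and put $\widehat Y_m:=\minsupport{\strat_1(\play_{\leq m})}{\funcLive(\play_m)}\cdot\mathbf{1}[\play_m\in\targetSet_j]$, a nonnegative $\mathcal{F}_m$-measurable process with $\probability[\play_{m+1}\in X_{j-1}\mid\mathcal{F}_m]\geq\widehat Y_m$. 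On the event $\{\targetSet_j\text{ is visited infinitely often}\}$ the (compliant) play visits $\targetSet_j$ infinitely often, so \eqref{equ:PiSetLive} for $\livePartition=\targetSet_j$ forces $\sum_m\widehat Y_m=\infty$; by the conditional second Borel--Cantelli lemma (L\'evy's extension), $\{\sum_m\widehat Y_m=\infty\}$ is, up to a null set, contained in $\{\play_{m+1}\in X_{j-1}\text{ infinitely often}\}$. Hence $\probability\big(\{\targetSet_j\text{ i.o.}\}\cap\{X_{j-1}\text{ visited finitely often}\}\big)=0$, and since $X_{j-1}=\bigcup_{i<j}\targetSet_i$, summing over $j\in\{2,\dots,k\}$ shows that, almost surely, the smallest level visited infinitely often equals $1$ (such a level exists because $\play$ stays in the finite union $\bigcup_i\targetSet_i$). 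Therefore $\targetSet_1\subseteq\targetSet$ is visited infinitely often almost surely, which completes the argument.

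\emph{Main obstacle.} The delicate point is the passage from the template's purely strategy-level guarantee --- divergence of the sum of min-support probabilities along every infinitely revisited level --- to the almost-sure statement about the induced chain. The ``level-drop'' events $\{\play_{m+1}\in X_{j-1}\}$ are neither independent nor uniformly bounded below in probability, so the ordinary Borel--Cantelli lemma does not apply; one genuinely needs its martingale (conditional) form, fed with the adapted lower bounds from the one-step estimate, together with the level-by-level bookkeeping supplied by the fixed-point decomposition $\targetSet_1,\dots,\targetSet_k$. Getting the measurability right and being careful that $\{\targetSet_j\text{ i.o.}\}\subseteq\{\sum_m\widehat Y_m=\infty\}$ holds only up to the null set of non-compliant plays is where most of the care goes.
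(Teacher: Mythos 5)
Your proof is correct, and its skeleton matches the paper's: confine the play to $\wino$ via the safety template, decompose $\wino$ into the fixed-point levels $X_1\subseteq\dots\subseteq X_k$ with $\targetSet_j=X_j\setminus X_{j-1}$, and show that infinitely many visits to a level $j\geq 2$ force infinitely many visits to $X_{j-1}$, so that almost surely the play keeps returning to $\targetSet_1\subseteq\targetSet$. The differences are in how the probabilistic core is discharged. The paper fixes a particular state $v'\in\targetSet_j$ that is visited infinitely often, extracts divergence of the success probabilities $p_j$ at the successive visits to $v'$, and concludes via the infinite-product criterion $\prod_j(1-p_j)=0$; this tacitly treats the successive visit outcomes as if they were independent, and also does an explicit induction on the events $E_i$ (``$X_i$ visited i.o.\ with probability $1$''). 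You instead work directly in the Markov chain induced by $(\strat_1,\strat_2)$, prove the one-step estimate $\probability[\play_{m+1}\in X_{j-1}\mid\mathcal F_m]\geq\minsupport{\strat_1(\play_{\leq m})}{\funcLive(\play_m)}$ on $\{\play_m\in\targetSet_j\}$, and invoke L\'evy's conditional Borel--Cantelli lemma, replacing the induction by a ``least level visited infinitely often'' argument; you also verify explicitly that $\wino(\LTLalways\wino)=\wino$, which the paper uses implicitly when appealing to \cref{thm:safety}. What your route buys is rigor exactly where the paper is loosest: the dependence between successive level-drop attempts is handled by the martingale form of Borel--Cantelli rather than by an implicit independence assumption, and the null-set bookkeeping (compliance of realized plays, measurability of the adapted lower bounds) is made explicit; the paper's version is slightly more elementary in the probabilistic tools it names but correspondingly less airtight.
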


	We will show in \cref{cor:completeness} that the template $\template = \templatesafe(\funcSafe) \cap \templatelive(\funcLive, \livePartitions)$ is complete, i.e., if there exists a winning strategy for $\p{1}$ in the game $\game=(\gamegraph,\LTLalways \LTLeventually \targetSet)$, then there exists a strategy that follows the template $\template$.
	However, the template is not maximally permissive, i.e., there may exist winning strategies for $\p{1}$ that do not follow the template (see \cref{app:non_maximally} for an example).

	\subsection{Concurrent Co-Live Templates}\label{sec:cobuechi}
	This section introduces co-live templates for concurrent games in direct analogy to co-live templates for turn-based games. While the intuition behind their definition directly carries over as expected, the extraction of \emph{sound} co-live templates from the fixed-point computation of the winning region of a concurrent co-Büchi game is unfortunately not as straightforward. %
Before we discuss this challenge in more detail, we formalize co-live templates.

\smallskip
\noindent\textbf{Template Formalization.}
As illustrated in \cref{sec:problem} and similar to their turn-based counterparts, concurrent co-live templates are defined via a co-live function  $\funcCoLive: \vertex \rightarrow 2^{\actiono(\vertex)}$ specifying the set of actions that \p{1} is only allowed to take finitely often in $v$. To ensure that all strategies $\strato$ which follow a co-live template take such actions only a \emph{finite} number of times, the sum of probabilities associated to these actions via $\strato$ over time must be bounded. %
\begin{definition}
	Let $\gamegraph$ be a concurrent game graph. Given a co-live function $\funcCoLive: \vertex \rightarrow 2^{\actiono(\vertex)}$, the \emph{concurrent co-live template} $\templatecolive (\funcCoLive)$ is defined s.t.\ $\strat_1\in \templatecolive (\funcCoLive)$ iff for all $\play \in \plays(\strato)$ holds that 
	\begin{equation}\label{equ:PiSetCoLive2}
		v \in \infPlay{\play} \Rightarrow 
		\textstyle \sum_{\{i \mid \play_i=v\}} \strato(\play_{\leq i})(\funcCoLive(\play_i)) \ne \infty.
	\end{equation}
\end{definition}

\smallskip
\noindent\textbf{Template Computation.}
Following \cite{AlfaroHenzinger_Concurrent2000}, the (almost sure) winning region of $\p{1}$ in a concurrent \cobuchi game $(\gamegraph, \Diamond \Box \targetSet)$ can be computed by the symbolic fixed-point formula
\begin{equation}\label{eq:cobuchi}
	\wino(\Diamond \Box \targetSet) = \nu Z. \mu X. \nu Y.\left(( \targetSet \cap \AFpre{1}{Z}{Y}{X}) \cup ( \neg \targetSet \cap \Apre{1}{Z}{X})\right),
\end{equation}
where $\CobuchiAlgo{\gamegraph, \targetSet}$ denotes the algorithm implementing this function. It is interesting to note that \eqref{eq:cobuchi} is a three-nested fixed-point formula, while the computation of $\wino(\Diamond \Box \targetSet)$ in a turn-based game only requires two nestings. 

\begin{algorithm}[t]
\caption{$\cobuchiTemp(\gamegraph, \targetSet)$}\label{alg:cobuechi_temp}
 \footnotesize
\begin{algorithmic}[1]
	\Require A game graph $\gamegraph$, and a subset of states $ \targetSet$
	\Ensure A safety function $ \funcSafe $ and a liveness tuple $ (\funcLive, \livePartitions) $ and a co-live tuple $ (\funcCoLive) $
	\State $Z \gets \CobuchiAlgo{\gamegraph,\targetSet}$; $ \funcSafe\gets \safeTemp(\gamegraph,Z) $;\label{line:alg:cobuechi_temp:safecolive}
	\State $X \gets \SafeAlgo{\gamegraph, \targetSet}$; $\funcCoLive \gets \safeTemp(\gamegraph,X)$;\label{line:alg:cobuechi_temp:funcCoLive}
	\For{$ v \in X \cup (\vertex\setminus Z)$} $\funcLive(v) \gets \{\actiono(v)\setminus \funcSafe(v)\}$; \EndFor\label{line:alg:cobuechi_temp:liveoutside}
	\While{$True$}
	\State $Y' \gets \vertex$;
	\Repeat
		\State $Y \gets Y'$; $Y' \gets (I \cap \AFpre{1}{Z}{Y}{X}) \cup (\neg I \cap \Apre{1}{Z}{X})$;
	\Until{$Y' == Y$}
	\If{$X == Y$} \textbf{break} \EndIf
	\For{$v \in Y \setminus X$}
	\State $\funcLive(v) \gets \emptyset$;
	\For{$ \actt \in \actiont(v)$}
	$\funcLive(v) \gets \funcLive(v) \cup\{\{ \acto \in \actiono(v) \setminus \funcSafe(v) \mid \delta(v,\acto,\actt) \in X\} \}$;\label{line:alg:cobuechi_temp:removeunsafe}
	\EndFor
	\EndFor
	\If{$Y \setminus X \ne \emptyset$}
	$\livePartitions \gets \livePartitions \cup \{Y \setminus X\}$;
	\EndIf
	\State $X \gets Y$;
	\EndWhile
	\State \Return $ (\funcSafe, (\funcLive, \livePartitions), \funcCoLive) $
\end{algorithmic}
\end{algorithm}

As in the Büchi case, we can extract a winning strategy template from the last iteration of $\CobuchiAlgo{\gamegraph, \targetSet}$ (computing $Z^*=\wino$) as detailed in \cref{alg:cobuechi_temp}. As before, we first compute the winning region $Z^*=\wino$ and ensure that the play stays in this set via a safety function $\funcSafe$ (line 1).

In order to understand line 2 of \cref{alg:cobuechi_temp}, recall that $Z^*$ can be written as an increasing sequence of sets $X_1 \subset X_2 \subset X_3 \subset ... \subset X_k = X_{k+1} = Z^*=\wino$. It can now be shown that it follows from the definition of $\textsf{AFpre}$ in \cref{sec:prelimMucalc} that $\AFpre{1}{Z^*}{Y}{\emptyset}$ reduces to $\pre{1}{Y}$ while it directly follows that $\Apre{1}{Z^*}{\emptyset} =\emptyset$ for all $Y \subseteq V$. We can therefore conclude that $X_1=Y^*=\nu Y.\left[\targetSet \cap \pre{1}{Y}\right] = \wino(\Box \targetSet)=\SafeAlgo{\gamegraph,\targetSet}$. I.e., $X_1$ are the states from which \p{1} has a strategy to keep the game in $I$ indefinitely. 
However, due to the definition of $Z^*$, \p{1} is allowed to leave $X_1$ a finite number of times into $Z^*$, as $\textsf{AFpre}$ ensures that the play can be forced to again make progress towards $X_1$ almost surely. We therefore collect all actions leaving $X_1$ into $Z^*$ into the co-live function $\funcCoLive$ (line 2 in \cref{alg:cobuechi_temp}).

Finally, to ensure the discussed progress towards $X_1$, we additionally need a live-group template $(\funcLive,\livePartitions)$. This construction is very similar to the Büchi case, but using the particular fixed point in \eqref{eq:cobuchi} (line 4-19 in \cref{alg:cobuechi_temp}).

	In conclusion, a winning strategy template for concurrent \cobuchi games consists of a safety template (to avoid leaving winning region), a co-live template (to avoid leaving $X_1$ infinitely many times), and a live-group template (to ensure reaching $X_1$ eventually). This is formalized in the following theorem.
	
	\begin{restatable}{theorem}{restatecobuchi}\label{thm:cobuchi}
	Let $(\gamegraph, \LTLeventually \LTLalways\targetSet)$ be a \emph{concurrent \cobuchi game} with winning region $\wino$, if $(\funcSafe, (\funcLive, \livePartitions), \funcCoLive) = \cobuchiTemp(\gamegraph, \targetSet)$, then $\template = \templatesafe(\funcSafe) \cap \templatelive(\funcLive, \livePartitions) \cap \templatecolive(\funcCoLive)$ is a \emph{winning strategy template} for  $(\gamegraph, \LTLeventually \LTLalways\targetSet)$.
	\end{restatable}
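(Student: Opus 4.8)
The plan is to show that every $\p{1}$ strategy $\strato$ following $\template=\templatesafe(\funcSafe)\cap\templatelive(\funcLive,\livePartitions)\cap\templatecolive(\funcCoLive)$ is almost-sure winning from every state of $\wino$; i.e.\ for every $\p{2}$ strategy $\stratt$ and every $v\in\wino$ the $\strato\stratt$-play from $v$ almost surely lies in $\lang(\LTLeventually\LTLalways\targetSet)$. Write $Z^{*}=\wino$ and recall the chain $X_1\subset X_2\subset\dots\subset X_k=Z^{*}$ produced inside $\CobuchiAlgo{\gamegraph,\targetSet}$, where $X_1=\SafeAlgo{\gamegraph,\targetSet}=\wino(\LTLalways\targetSet)\subseteq\targetSet$ and, for $v\in X_j\setminus X_{j-1}$, we have $\funcLive(v)=\{\funcLive^{\actt}(v)\}_{\actt\in\actiont(v)}$ with $\funcLive^{\actt}(v)=\{\acto\in\actiono(v)\setminus\funcSafe(v)\mid\transition(v,\acto,\actt)\in X_{j-1}\}$. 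I would prove three statements. (S) \emph{The safety template confines the play to $Z^{*}$:} the body of the fixpoint \eqref{eq:cobuchi} is contained in $\AFpre{1}{Z^{*}}{\cdot}{\cdot}\cup\Apre{1}{Z^{*}}{\cdot}$, whose witness distributions force $\orprob{v}{\distro}{\distrt}{Z^{*}}=1$ against all $\distrt$; hence $Z^{*}\subseteq\pre{1}{Z^{*}}$, so $\wino(\LTLalways Z^{*})=Z^{*}$, so $\funcSafe=\safeTemp(\gamegraph,Z^{*})$ is the safety function of $(\gamegraph,\LTLalways Z^{*})$ and \cref{thm:safety} yields that every $\strato\in\templatesafe(\funcSafe)$ keeps each of its plays starting in $Z^{*}$ inside $Z^{*}$ forever.

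(C) \emph{Almost surely $\p{1}$ uses actions in $\funcCoLive$ only finitely often.} Since $X_1=\wino(\LTLalways\targetSet)$ is a safety winning region, $\wino(\LTLalways X_1)=X_1$, so $\funcCoLive(v)=\actiono(v)\setminus\safeactions{X_1}{v}{\emptyset}$ for $v\in X_1$ and $\funcCoLive(v)=\emptyset$ otherwise; in particular, by definition of $\safeactions{X_1}{v}{\emptyset}$, any $\p{1}$ action at $v\in X_1$ outside $\funcCoLive(v)$ leads back into $X_1$ regardless of $\p{2}$. From \eqref{equ:PiSetCoLive2}, every play $\play$ compliant with $\strato$ satisfies $\sum_i\strato(\play_{\le i})(\funcCoLive(\play_i))<\infty$ (states visited infinitely often contribute finitely by the template, the remaining states by finiteness of their visits). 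Taking $\mathcal{F}_i$ to be the history ending in $\play_i$, the conditional probability that $\p{1}$'s move at step $i$ lands in $\funcCoLive(\play_i)$ is exactly $\strato(\play_{\le i})(\funcCoLive(\play_i))$, so the conditional Borel--Cantelli lemma (convergence direction) gives that almost surely only finitely many $\funcCoLive$-moves are taken. Consequently, once the play is in $X_1$ after the last $\funcCoLive$-move, it stays in $X_1$ forever.

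(L) \emph{Almost surely the play visits $X_1$ infinitely often.} Suppose not; by (S) the play then almost surely stays eventually in $Z^{*}\setminus X_1=\bigcup_{j\ge 2}(X_j\setminus X_{j-1})$, so there is a smallest $j\ge2$ with $X_j\setminus X_{j-1}$ (a member of $\livePartitions$) visited infinitely often, and then $X_{j-1}$ is visited only finitely often. On this event \eqref{equ:PiSetLive} forces $\sum_{\{i\mid\play_i\in X_j\setminus X_{j-1}\}}\minsupport{\strato(\play_{\le i})}{\funcLive(\play_i)}=\infty$. On the other hand, at a step with $\play_i\in X_j\setminus X_{j-1}$, because $\strato$ puts no mass on $\funcSafe(\play_i)$ and $\funcLive^{\actt}(\play_i)$ is exactly the set of remaining $\p{1}$ moves reaching $X_{j-1}$ against $\actt$, a short calculation gives $\probability[\play_{i+1}\in X_{j-1}\mid\mathcal{F}_i]\ge\min_{\actt\in\actiont(\play_i)}\strato(\play_{\le i})(\funcLive^{\actt}(\play_i))=\minsupport{\strato(\play_{\le i})}{\funcLive(\play_i)}$. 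Summing, $\sum_i\probability[\play_{i+1}\in X_{j-1}\mid\mathcal{F}_i]=\infty$, so the conditional Borel--Cantelli lemma (divergence direction, via L\'evy's extension) makes $X_{j-1}$ be visited infinitely often almost surely --- contradicting the choice of $j$. Hence the bad event has probability $0$.

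Combining (S), (C) and (L): almost surely the play remains in $Z^{*}$, visits $X_1$ infinitely often, and eventually makes no further $\funcCoLive$-move; therefore it almost surely enters $X_1$ after its last $\funcCoLive$-move and then stays in $X_1\subseteq\targetSet$ forever, i.e.\ satisfies $\LTLeventually\LTLalways\targetSet$. This shows $\strato$ is winning from every state of $\wino$, so $\template$ is a winning strategy template. The routine pieces are the two fixpoint identities $\wino(\LTLalways Z^{*})=Z^{*}$ and $\wino(\LTLalways X_1)=X_1$, the structural reading of $\funcLive^{\actt}$ and $\safeactions{X_1}{v}{\emptyset}$, and the one-step progress/containment estimates. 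The main obstacle is the probabilistic core: setting up the filtration correctly and converting the \emph{pathwise} convergence/divergence conditions that the templates impose into the required \emph{almost-sure} finitely-often / infinitely-often statements via the conditional Borel--Cantelli lemma, together with the induction over the fixpoint layers $X_j\setminus X_{j-1}$ that lets the one-step progress guarantees (all that $\AFpre$ and $\Apre$ provide) accumulate to almost-sure reachability of $X_1$.
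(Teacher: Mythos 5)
Your proposal is correct and follows essentially the same route as the paper's proof: the safety template confines plays to $\wino$, the live-group layers of the fixpoint force almost-sure infinitely many visits to $X_1=\wino(\LTLalways\targetSet)$ (the paper does this by reusing the inductive argument of \cref{thm:buchi}), and the co-live template together with Borel--Cantelli yields almost surely finitely many exits from $X_1$, so the play is eventually trapped in $X_1\subseteq\targetSet$. The only differences are presentational: you invoke the conditional (L\'evy) Borel--Cantelli lemma in both directions and argue the liveness step via a minimal-layer contradiction, where the paper uses a downward induction on $\probability(E_i)=1$ with a product-convergence computation.
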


	While we will show in \cref{cor:completeness} that the template $\template = \templatesafe(\funcSafe) \cap \templatelive(\funcLive, \livePartitions) \cap \templatecolive(\funcCoLive)$ is complete for the game $\game=(\gamegraph,\LTLeventually\LTLalways \targetSet)$, the template is not maximally permissive (see \cref{app:non_maximally} for an example).

	\begin{remark}
	 We remark that winning strategy templates in turn-based co-Büchi games can be expressed solely by safety and co-liveness templates. This is due to the fact that live-group templates which are needed to ensure progress towards the safe invariant set can be translated into co-live templates for turn-based co-Büchi games, as progress needs to be made only over a \emph{finite} time horizon. For concurrent games, this is unfortunately not possible, as the concurrent versions of live-group and co-live templates are not dual in this case.
	\end{remark}

	\section{Conflict-freeness and Strategy Extraction}\label{sec:offlineA}
	This section discusses how to efficiently extract a strategy that follows a given strategy template.
Although in many cases, as discussed in \cref{sec:problem}, one can easily construct a strategy that follows a given strategy template.
However, it is not always the case that a strategy following a given strategy template even exists.
For example, consider a strategy template $\templatesafe(\funcSafe)$ that disallow all the available actions from a state $v$, i.e., $\funcSafe(v) = \actiono(v)$.
In this case, there is no strategy that follows the template.
We call situations like this \emph{conflicts} in the strategy template.
To formalize this notion, we adapt the concept of \emph{conflict-freeness} of strategy templates from \cite{AnandNS_PeSTels2023} to concurrent games.

\begin{definition}\label{def:conflict-free}
	Given a game graph $\gamegraph=(V,\actions_1,\actions_2,\transition)$, a strategy template $\template = \templatesafe(\funcSafe) \cap \templatelive(\funcLive, \livePartitions) \cap \templatecolive(\funcCoLive)$ is \emph{conflict-free} if the following conditions hold for every state $v \in V$:
	(i) $\actiono(v) \not\subseteq \funcSafe(v)\cup\funcCoLive(v)$, and 
	(ii) for every $\subaction \in \funcLive(v)$, $\subaction \not\subseteq \funcSafe(v)\cup\funcCoLive(v)$.
\end{definition}
Intuitively, condition (i) ensures that there is at least one action available at every state $v$ that is not disallowed by the safety or co-liveness templates.
Condition (ii) ensures that even without the unsafe and co-live actions, liveness template can still be satisfied, i.e., for every opponent's action, there is at least one live action available at $v$ that is not disallowed by the safety or co-liveness templates.

With the above intuition, one easy way to extract a memoryless strategy that follows a given conflict-free strategy template is the following: for every state, remove all unsafe and co-live actions, and then, assign positive probabilities to the remaining actions.
This strategy is well-defined as condition~(i) of \cref{def:conflict-free} ensures that there is at least one action available at every state after removing unsafe and co-live actions.
Trivially, this strategy follows the safety and co-liveness templates.
Moreover, this strategy also follows the liveness template as condition~(ii) of \cref{def:conflict-free} ensures that for every opponent's action, at least one live action has been assigned a positive probability.
Hence, the strategy indeed follows the strategy template, giving us the following result.

\begin{restatable}{theorem}{restatestrategyextraction}\label{thm:strategy-extraction}
	Given a game graph $\gamegraph=(V,\actions_1,\actions_2,\transition)$ and a conflict-free strategy template $\template$, one can extract a memoryless strategy that follows the template in time $\bigO(\abs{V}\cdot \abs{\actions_1}\cdot \abs{\actions_2})$.
\end{restatable}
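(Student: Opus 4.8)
The plan is to follow the constructive recipe already sketched in the paragraph preceding the theorem: build a memoryless strategy $\stratI{1}$ by, at each state $v$, discarding the actions in $\funcSafe(v)\cup\funcCoLive(v)$ and distributing positive probability (say, uniformly) over the actions that remain. I would first argue this is well-defined: condition~(i) of \cref{def:conflict-free} guarantees $\actiono(v)\setminus(\funcSafe(v)\cup\funcCoLive(v))\neq\emptyset$, so the uniform distribution over this nonempty set is a valid element of $\probdistributions(\actiono(v))$, and since the set is a subset of $\actiono(v)$ the support condition on strategies holds. Memorylessness is immediate because the distribution depends only on the current state $v$.

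Next I would verify that $\stratI{1}$ follows each of the three component templates, so that $\stratI{1}\in\template$. For $\templatesafe(\funcSafe)$: by construction $\support(\stratI{1}(v))\cap\funcSafe(v)=\emptyset$ for every $v$, hence along any $\stratI{1}$-play the condition in \eqref{equ:PiSet} holds at every index. For $\templatecolive(\funcCoLive)$: since $\support(\stratI{1}(v))\cap\funcCoLive(v)=\emptyset$, we have $\stratI{1}(\play_{\leq i})(\funcCoLive(\play_i))=0$ for all $i$, so the sum in \eqref{equ:PiSetCoLive2} is $0\neq\infty$ — actually it is finite regardless of how often $v$ recurs. For $\templatelive(\funcLive,\livePartitions)$: fix $v$ and $\subaction\in\funcLive(v)$; condition~(ii) gives $\subaction\not\subseteq\funcSafe(v)\cup\funcCoLive(v)$, so $\subaction$ contains an action $\acto$ that survives the pruning, hence $\stratI{1}(v)(\subaction)\geq\stratI{1}(v)(\acto)>0$. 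Let $p_{\min}>0$ be the minimum over all states $v$ (finitely many) and all $\subaction\in\funcLive(v)$ of $\stratI{1}(v)(\subaction)$; then $\minsupport{\stratI{1}(\play_{\leq i})}{\funcLive(\play_i)}\geq p_{\min}$ whenever $\play_i$ lies in the relevant live-partition block, so if such a state recurs infinitely often the sum in \eqref{equ:PiSetLive} diverges. Thus all three membership conditions are met.

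Finally I would account for the running time: computing $\funcSafe(v)\cup\funcCoLive(v)$ and the surviving action set at a single state is $\bigO(\abs{\actions_1})$, and there are $\abs{V}$ states, giving $\bigO(\abs{V}\cdot\abs{\actions_1})$ for the strategy itself. The stated bound $\bigO(\abs{V}\cdot\abs{\actions_1}\cdot\abs{\actions_2})$ is then immediate (and is presumably phrased this way because checking or representing the templates — in particular $\funcLive(v)$, which is indexed by $\actiont(v)$ — touches $\abs{\actions_2}$ action sets per state); I would note that verifying conflict-freeness of the input template, if not assumed, also fits within this budget since condition~(ii) ranges over the at most $\abs{\actions_2}$ sets in $\funcLive(v)$.

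I do not anticipate a serious obstacle; the only point requiring a little care is the liveness argument, where one must observe that it is the \emph{uniform lower bound} $p_{\min}$ across the finitely many states that forces the series in \eqref{equ:PiSetLive} to diverge — a per-play bound would not suffice on its own, but finiteness of $V$ and $\actions_1$ makes $p_{\min}$ well-defined and positive. Everything else is a direct unfolding of the template definitions against the conflict-freeness hypotheses.
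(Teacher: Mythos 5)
Your construction coincides with the paper's own argument (given in the paragraph preceding the theorem): prune the actions in $\funcSafe(v)\cup\funcCoLive(v)$, spread positive probability over the survivors, and use conditions (i) and (ii) of the conflict-freeness definition for well-definedness and for membership in the safety, co-live, and live-group templates. Your explicit observation that memorylessness plus finiteness of $V$ yields a uniform lower bound $p_{\min}>0$ forcing divergence of the sum in \eqref{equ:PiSetLive} is a correct sharpening of a step the paper only states informally; otherwise the approach is the same.
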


Similar to \cite{AnandNS_PeSTels2023}, we can also show that all the algorithms we presented in \cref{sec:computation} produce conflict-free strategy templates. 

\begin{restatable}{theorem}{restateallconflictfree}\label{thm:all-conflict-free}
	The procedures $\safeTemp$, $\buchiTemp$, and $\cobuchiTemp$ always return conflict-free strategy templates.
\end{restatable}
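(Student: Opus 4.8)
The plan is to verify conditions (i) and (ii) of \cref{def:conflict-free} separately for each of the three procedures, exploiting the fact that all the relevant sets are extracted from the fixed-point computations whose defining properties are already available to us. The unifying observation is that every ``disallowed'' action set (unsafe or co-live) is of the form $\actiono(v) \setminus \safeactions{W}{v}{\emptyset}$ for some winning-region-like set $W$ that the state $v$ belongs to, so establishing conflict-freeness reduces to showing that $\safeactions{W}{v}{\emptyset}$ is nonempty whenever $v \in W$, and that the live action sets $\funcLive^{\actt}(v)$ are not entirely contained in the complement of these ``good'' actions.

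First I would handle $\safeTemp$. By \eqref{equ:SafeTempCompute}, for $v \in \wino$ we have $\funcSafe(v) = \actiono(v) \setminus \safeactions{\wino}{v}{\emptyset}$, and $\funcCoLive = \funcLive = \emptyset$ here, so condition (i) amounts to $\safeactions{\wino}{v}{\emptyset} \neq \emptyset$ and condition (ii) is vacuous. But $v \in \wino = \nu X.(\pre{1}{X} \cap \targetSet)$ means $v \in \pre{1}{\wino}$, i.e.\ there is a distribution $d_1$ for \po with $\orprob{v}{d_1}{d_2}{\wino} = 1$ for all $d_2$; an action in $\support(d_1)$ then lies in $\safeactions{\wino}{v}{\emptyset}$ (if some action $a \in \support(d_1)$ had $\transition(v,a,b) \notin \wino$ for some $b$, \po could play $b$ and escape $\wino$ with positive probability, contradiction). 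For $v \notin \wino$, $\funcSafe(v) = \emptyset$ and the conditions hold trivially.

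Next, for $\buchiTemp$: condition (i) again follows because $\funcSafe(v) = \actiono(v)\setminus\safeactions{\wino}{v}{\emptyset}$ is computed w.r.t.\ $\wino = \wino(\LTLalways\LTLeventually\targetSet)$, and every $v \in \wino$ satisfies $v \in \Apre{1}{\wino}{X_{i-1}} \subseteq \pre{1}{\wino}$ (or $v \in \pre{1}{\wino}$ directly if $v \in \targetSet$), so $\safeactions{\wino}{v}{\emptyset} \neq \emptyset$ by the same argument. For condition (ii), fix $v \in \targetSet_i$ and $\actt \in \actiont(v)$; I must show $\funcLive^{\actt}(v) \not\subseteq \funcSafe(v)$ (here $\funcCoLive = \emptyset$). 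This is the crux: I would unpack the definition of $\Apre{1}{\wino}{X_{i-1}}$ — $v$ belongs to it, so there is a distribution $d_1$ with $\orprob{v}{d_1}{d_2}{\wino} = 1$ and $\orprob{v}{d_1}{d_2}{X_{i-1}} > 0$ for every $d_2$. Taking $d_2$ concentrated on $\actt$, positivity of the one-round probability into $X_{i-1}$ forces some $a \in \support(d_1)$ with $\transition(v,a,\actt) \in X_{i-1}$, and since $\support(d_1) \subseteq \safeactions{\wino}{v}{\emptyset}$ (same escape argument as before), that $a$ lies in $\funcLive^{\actt}(v) \setminus \funcSafe(v)$, so $\funcLive^{\actt}(v)$ is nonempty and not contained in $\funcSafe(v)$. (One should also note the edge case $i=1$, where $X_0 = \emptyset$ and $v \in \targetSet_1 = X_1$ uses $\pre{1}{\wino}$ instead; there $\funcLive^\actt(v)$ is built from $X_0 = \emptyset$, but $\livePartitions$ only contains $\targetSet_1,\dots,\targetSet_k$ and the construction in \eqref{equ:LiveFunc} defines $\funcLive$ for $v \in I_i$ — I would check that the algorithm in the appendix handles states of $X_1 \setminus \targetSet$ correctly, treating them analogously to the co-Büchi ``outside'' case with $\funcLive(v) = \{\actiono(v)\setminus\funcSafe(v)\}$.)

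Finally, for $\cobuchiTemp$ I would walk through \cref{alg:cobuechi_temp}. Line 1 sets $\funcSafe$ w.r.t.\ $Z^* = \wino$ and line 2 sets $\funcCoLive$ w.r.t.\ $X_1 = \SafeAlgo{\gamegraph,\targetSet}$; crucially $\funcCoLive(v) = \actiono(v)\setminus\safeactions{X_1}{v}{\emptyset}$ only ``bites'' on states in $X_1$, but I must check condition (i): $\actiono(v)\not\subseteq\funcSafe(v)\cup\funcCoLive(v)$. For $v \in X_1$, both are complements of $\safeactions{\cdot}{v}{\emptyset}$ sets, and since $X_1 \subseteq \wino$, any $a \in \safeactions{X_1}{v}{\emptyset}$ also lies in $\safeactions{\wino}{v}{\emptyset}$, so $\safeactions{X_1}{v}{\emptyset} \neq \emptyset$ (using $v \in X_1 = \nu Y.(\targetSet \cap \pre{1}{Y})$, hence $v \in \pre{1}{X_1}$) gives an action outside both sets; for $v \notin X_1$, $\funcCoLive(v) = \emptyset$ and $\safeactions{\wino}{v}{\emptyset}\neq\emptyset$ suffices. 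For condition (ii), the live actions are assigned on lines 3 and 13–15: line 3 sets $\funcLive(v) = \{\actiono(v)\setminus\funcSafe(v)\}$ for $v \in X_1 \cup (V\setminus Z^*)$, and nonemptiness of this single set is exactly condition (i) already shown; line 15 sets, for $v \in Y \setminus X$, the sets $\funcLive^\actt(v) = \{a \in \actiono(v)\setminus\funcSafe(v) \mid \transition(v,a,\actt)\in X\}$, and I would show these are nonempty and avoid $\funcSafe(v)\cup\funcCoLive(v)$ by unpacking $v \in \AFpre{1}{Z^*}{Y}{X}$ (or $v \in \Apre{1}{Z^*}{X}$): using the characterization $v\in\AFpre{1}{Z^*}{Y}{X} \iff \nu\subaction.(\safeactions{Z^*}{v}{\emptyset}\cap\safeactions{Y}{v}{\forwardingactions{X}{v}{\subaction}})\neq\emptyset$ from \cref{sec:prelimMucalc}, one extracts, for each progress-enabling $\actt$, an action $a$ with $\transition(v,a,\actt)\in X$ that stays in $Z^*$, hence $a \notin \funcSafe(v)$, and since $v\notin X_1$ here, $\funcCoLive(v)=\emptyset$ so $a\notin\funcCoLive(v)$ as well.

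The main obstacle I anticipate is the bookkeeping in the co-Büchi case — specifically, reconciling the interplay between $\funcSafe$ (computed w.r.t.\ $\wino$), $\funcCoLive$ (computed w.r.t.\ $X_1$), and $\funcLive$ (computed across the inner $\mu$-iterations), and making precise the claim that the $\safeactions{\cdot}{v}{\emptyset}$-membership arguments compose correctly across these different reference sets. The key lemma to isolate and prove once is: \emph{if $v \in \pre{1}{W}$ then $\safeactions{W}{v}{\emptyset}\neq\emptyset$, and moreover the witnessing distribution's support is contained in $\safeactions{W}{v}{\emptyset}$}; and its refinement for $\Apre$/$\AFpre$ yielding, for each opponent action $\actt$ with positive one-round progress, a single \po-action in $\safeactions{W}{v}{\emptyset}$ that realizes that progress. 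Everything else is a routine case distinction over which algorithm and which line assigned the set in question.
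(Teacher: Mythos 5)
Your overall plan is the same route the paper takes: a case analysis over the three procedures, with witnesses extracted from $\pre{1}{\cdot}$, $\Apre{1}{\cdot}{\cdot}$ and $\AFpre{1}{\cdot}{\cdot}{\cdot}$, and the auxiliary observation that the witnessing distribution's support lies in $\safeactions{W}{v}{\emptyset}$. Your treatment of $\safeTemp$ and of $\buchiTemp$ is correct, and in fact more explicit than the paper's own proof (which for the Büchi case only points at the line of the algorithm that removes unsafe actions): in the Büchi case the universally quantified clause $\orprob{v}{\distro}{\distrt}{X_{i-1}}>0$ of $\Apre{1}{\wino}{X_{i-1}}$ really does hand you, for \emph{every} opponent action $\actt$, a non-unsafe action in the live set, which is what condition (ii) needs. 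Your handling of the interplay between $\funcSafe$ (w.r.t.\ $\wino$), $\funcCoLive$ (w.r.t.\ $X_1$) and the single live set $\{\actiono(v)\setminus\funcSafe(v)\}$ on the $X_1$-states also matches the paper's argument.

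The genuine gap is in the co-Büchi case, exactly where you hedge with ``for each progress-enabling $\actt$''. Line 13 of \cref{alg:cobuechi_temp} adds the set $\{\acto\in\actiono(v)\setminus\funcSafe(v)\mid\transition(v,\acto,\actt)\in X\}$ to $\funcLive(v)$ for \emph{every} $\actt\in\actiont(v)$ at every $v\in Y\setminus X$, and condition (ii) of \cref{def:conflict-free} must hold for each of these sets; since $\emptyset\subseteq\funcSafe(v)\cup\funcCoLive(v)$, this in particular requires each of them to be nonempty. For $v\in(Y\setminus X)\cap\targetSet$, membership in $\AFpre{1}{Z}{Y}{X}$ — equivalently, nonemptiness of the fixed point $\subaction^*=\safeactions{Z}{v}{\emptyset}\cap\safeactions{Y}{v}{\forwardingactions{X}{v}{\subaction^*}}$ — only yields an action into $X$ for those $\actt$ under which some action of $\subaction^*$ can leave $Y$; for an opponent action $\actt$ under which every action of $\subaction^*$ stays inside $Y$ (perfectly compatible with $v\notin X$, e.g.\ when $\transition(v,\acto,\actt)$ is another $\AFpre{1}{Z}{Y}{X}$-state or a self-loop), the definition provides no non-unsafe action into $X$ at all, so the set built on line 13 can be empty and your extraction step fails there. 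This is unlike the Büchi case, where $\Apre{}{}{}$ quantifies the positive-probability requirement over all opponent distributions. Closing the argument needs something beyond the predecessor semantics you invoke — either a structural property of the inner fixed point ruling out such $\actt$, or restricting the live sets to $\actt\in\forwardingactions{X}{v}{\subaction^*}$ — and it is worth noting that the paper's own two-line treatment of this case (it only checks that co-live actions avoid $\funcSafe$ and that $X_1$-states keep a safe action) does not supply this missing piece either.
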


As a consequence of \cref{thm:all-conflict-free}, we also get the completeness of our algorithms in \cref{sec:computation}, i.e., for every state in the winning region, we can extract a winning strategy that follows the computed strategy template.
\begin{corollary}\label{cor:completeness}
	Given the premises of \cref{thm:buchi} (resp. \cref{thm:cobuchi}), for every state $v\in\wino$, there exists a winning strategy $\strato$ for $\p{1}$ from $v$ that follows the strategy template $\template$ computed by \cref{alg:buchiTemp} (resp. \cref{alg:cobuechi_temp}).
\end{corollary}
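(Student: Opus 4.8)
The plan is to derive \cref{cor:completeness} directly by chaining together three results already established, namely the soundness theorems (\cref{thm:buchi} for the Büchi case and \cref{thm:cobuchi} for the co-Büchi case), the conflict-freeness theorem (\cref{thm:all-conflict-free}), and the strategy-extraction theorem (\cref{thm:strategy-extraction}). No new fixed-point or probabilistic argument is needed; the entire content of the proof is lining up the relevant definitions so that the pieces fit.

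Concretely, I would proceed in three steps. First, invoke \cref{thm:all-conflict-free} to conclude that the template $\template$ returned by $\buchiTemp(\gamegraph,\targetSet)$ (resp.\ $\cobuchiTemp(\gamegraph,\targetSet)$) is conflict-free in the sense of \cref{def:conflict-free}. Second, apply \cref{thm:strategy-extraction} to this conflict-free template to obtain a memoryless \po strategy $\strato$ that follows $\template$, i.e.\ $\strato\in\template$. Third, recall that by \cref{thm:buchi} (resp.\ \cref{thm:cobuchi}) the template $\template = \templatesafe(\funcSafe)\cap\templatelive(\funcLive,\livePartitions)$ (resp.\ $\templatesafe(\funcSafe)\cap\templatelive(\funcLive,\livePartitions)\cap\templatecolive(\funcCoLive)$) is a \emph{winning strategy template} for the game $(\gamegraph,\LTLalways\LTLeventually\targetSet)$ (resp.\ $(\gamegraph,\LTLeventually\LTLalways\targetSet)$); by the very definition of a winning strategy template, \emph{every} \po strategy following $\template$ is winning from every state of $\wino$. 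In particular the extracted $\strato$ is winning from every $v\in\wino$, which proves the corollary — and in fact yields the slightly stronger conclusion that a single memoryless $\strato$ works uniformly for all $v\in\wino$.

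I do not expect a genuine obstacle here, since each of the three ingredients is a theorem already proved; the only care required is bookkeeping of definitions. Specifically, one must check that ``follows the template'' in \cref{thm:strategy-extraction} is literally the membership relation $\strato\in\template$ used in the definition of a winning template, and that the soundness theorems quantify over \emph{all} following strategies (not merely one distinguished strategy), so that the memoryless strategy output by the extraction inherits the winning property for free. One should also observe that \cref{thm:strategy-extraction} and \cref{def:conflict-free} are stated for a generic template of the form $\templatesafe(\funcSafe)\cap\templatelive(\funcLive,\livePartitions)\cap\templatecolive(\funcCoLive)$, so the Büchi template of \cref{thm:buchi} is covered as the special case in which $\funcCoLive$ is the empty-valued function (for which $\templatecolive(\funcCoLive)=\StratI{1}$ imposes no constraint), and no separate argument is needed for it.
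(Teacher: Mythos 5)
Your proposal is correct and matches the paper's own derivation: the corollary is stated precisely as a consequence of \cref{thm:all-conflict-free}, combined with the extraction of a (memoryless) strategy via \cref{thm:strategy-extraction} and the soundness guarantees of \cref{thm:buchi} and \cref{thm:cobuchi}, which quantify over all strategies following the template. Your bookkeeping remarks (membership semantics of ``follows'' and treating the Büchi case with an empty co-live function) are exactly the right details to check, so nothing further is needed.
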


	\section{Composition of Strategy Templates}\label{sec:offlineB}
	With the conflict-freeness property, we can also now combine different strategy templates by intersecting them, and still be assured that there exists a strategy that follows the combined template.
This is particularly useful when we want to combine multiple objectives, as we can compute the strategy template for each objective separately, and then intersect them to get a combined strategy template.
Let us start by defining the combination of two strategy templates.

\begin{definition}\label{def:composition}
    For a game graph $\gamegraph = (V, \actiono, \actiont, \transition)$,
    the combination of strategy templates $\template_i = \templatesafe(\funcSafe_i) \cap \templatelive(\funcLive_i, \livePartitions_i) \cap \templatecolive(\funcCoLive_i)$ for $i \in \{1,2\}$, is defined as the strategy template $\template_1 \cap \template_2 = \templatesafe(\funcSafe) \cap \templatelive(\funcLive, \livePartitions_1 \cup \livePartitions_2) \cap \templatecolive(\funcCoLive)$ with $\funcSafe(v) = \funcSafe_1(v) \cup \funcSafe_2(v)$, $\funcLive(v) = \funcLive_1(v) \cup \funcLive_2(v)$, and $\funcCoLive(v) = \funcCoLive_1(v) \cup \funcCoLive_2(v)$.
\end{definition}

Intuitively, the combined strategy template marks an action as unsafe if it is unsafe in either of the individual templates, and similarly for co-live actions.
For live actions, the combined template adds all the groups of live actions from both templates and also considers both partitions of the individual templates. This ensures that a strategy follows the combined template if and only if it follows both individual templates.
With this definition, whenever the combined strategy template is conflict-free, we can be assured that there exists a strategy that follows both individual templates,
leading to the following result.
\begin{restatable}{theorem}{restatecomposition}\label{thm:composition}
    Let $(\gamegraph, \spec_1)$ and $(\gamegraph, \spec_2)$ be concurrent games with corresponding winning regions $\win_1$ and $\win_2$, and winning strategy templates $\template_1$ and $\template_2$. If the combined strategy template $\template_1 \cap \template_2$ is conflict-free, then it is a winning strategy template for the game $(\gamegraph, \spec_1 \land \spec_2)$ with winning region $\win_1 \cap \win_2$. 
\end{restatable}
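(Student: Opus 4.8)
The plan is to isolate a single structural lemma --- that any strategy following the combined template $\template_1\cap\template_2$ of \cref{def:composition} also follows each of $\template_1$ and $\template_2$ --- and then derive both the winning-region identity and the template property from it together with conflict-freeness. Only the ``combined $\Rightarrow$ both'' direction of the lemma will be needed.

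First I would establish the structural lemma componentwise. Write $\funcSafe=\funcSafe_1\cup\funcSafe_2$, $\funcLive=\funcLive_1\cup\funcLive_2$, $\funcCoLive=\funcCoLive_1\cup\funcCoLive_2$, fix $\strato\in\templatesafe(\funcSafe)\cap\templatelive(\funcLive,\livePartitions_1\cup\livePartitions_2)\cap\templatecolive(\funcCoLive)$ and $j\in\{1,2\}$. For safety, $\support(\strato(\play_{\leq i}))\cap(\funcSafe_1(\play_i)\cup\funcSafe_2(\play_i))=\emptyset$ immediately gives $\support(\strato(\play_{\leq i}))\cap\funcSafe_j(\play_i)=\emptyset$, so $\strato\in\templatesafe(\funcSafe_j)$. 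For co-liveness, since $\funcCoLive_j(v)\subseteq\funcCoLive(v)$ and $p(A)\leq p(B)$ whenever $A\subseteq B$, we get $\strato(\play_{\leq i})(\funcCoLive_j(\play_i))\leq\strato(\play_{\leq i})(\funcCoLive(\play_i))$, so the partial sums for $\funcCoLive_j$ over $\{i\mid\play_i=v\}$ are dominated by the finite ones for $\funcCoLive$, hence finite, giving $\strato\in\templatecolive(\funcCoLive_j)$. For liveness, since $\funcLive_j(v)\subseteq\funcLive(v)$ and a minimum over a larger family of action sets can only decrease, $\minsupport{p}{\funcLive(v)}\leq\minsupport{p}{\funcLive_j(v)}$; and $\livePartitions_j\subseteq\livePartitions_1\cup\livePartitions_2$, so for every $\livePartition\in\livePartitions_j$ with $\livePartition\cap\infPlay{\play}\neq\emptyset$ the divergence of $\sum_{\{i\mid\play_i\in\livePartition\}}\minsupport{\strato(\play_{\leq i})}{\funcLive(\play_i)}$ forces divergence of $\sum_{\{i\mid\play_i\in\livePartition\}}\minsupport{\strato(\play_{\leq i})}{\funcLive_j(\play_i)}$, so $\strato\in\templatelive(\funcLive_j,\livePartitions_j)$. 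Hence $\strato\in\template_j$ for both $j$. (I would adopt the convention that each constituent function imposes no constraint on states outside its winning region, so that all three functions are total and $\template_1\cap\template_2$ of \cref{def:composition} is well defined.)

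Next I would handle the winning region. The inclusion $\wino(\spec_1\land\spec_2)\subseteq\win_1\cap\win_2$ is immediate, since a strategy almost-surely winning $\spec_1\land\spec_2$ from $v$ is in particular almost-surely winning $\spec_1$ and $\spec_2$ from $v$. For the reverse inclusion, conflict-freeness of $\template_1\cap\template_2$ (\cref{def:conflict-free}) lets me invoke \cref{thm:strategy-extraction} to obtain a memoryless $\strato$ following $\template_1\cap\template_2$. By the structural lemma $\strato$ follows the winning templates $\template_1$ and $\template_2$, hence is almost-surely winning for $\spec_1$ from every state of $\win_1$ and for $\spec_2$ from every state of $\win_2$. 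Fixing $v\in\win_1\cap\win_2$ we obtain $\probability_v^{\strato}(\lang(\spec_1))=\probability_v^{\strato}(\lang(\spec_2))=1$, hence $\probability_v^{\strato}(\lang(\spec_1)\cap\lang(\spec_2))=1$; and since $\lang(\spec_1\land\spec_2)=\lang(\spec_1)\cap\lang(\spec_2)$ (a play wins the conjunction iff it wins both conjuncts), $\strato$ is almost-surely winning $\spec_1\land\spec_2$ from $v$. Thus $\win_1\cap\win_2\subseteq\wino(\spec_1\land\spec_2)$, and the two regions coincide.

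Finally, to see that $\template_1\cap\template_2$ is a \emph{winning} strategy template for $(\gamegraph,\spec_1\land\spec_2)$, I would repeat the last computation for an \emph{arbitrary} $\strato$ following $\template_1\cap\template_2$ rather than the extracted one: by the structural lemma it follows $\template_1$ and $\template_2$, hence wins $\spec_1$ and $\spec_2$ from every state of $\win_1\cap\win_2$, hence wins $\spec_1\land\spec_2$ from every state of the winning region $\win_1\cap\win_2$. I expect the only delicate point to be the liveness case of the structural lemma: one must verify that replacing the live-group families by their union genuinely \emph{tightens} (never relaxes) the per-state requirement --- which hinges on the direction of the $\minsupp$ inequality as the family of live action sets grows --- and that the two partitions are combined as a plain set union, so that each block of an individual partition remains a block of the combined template, rather than as a common refinement.
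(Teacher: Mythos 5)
Your proposal is correct and follows essentially the same route as the paper: show componentwise that any strategy following the combined template $\template_1\cap\template_2$ also follows each $\template_i$, use conflict-freeness together with \cref{thm:strategy-extraction} to obtain a following (hence doubly winning) strategy for the reverse winning-region inclusion, and conclude that every strategy following the combined template wins $\spec_1\land\spec_2$ from $\win_1\cap\win_2$. Your remark that only the ``combined implies both'' direction of the structural lemma is needed (the converse being delicate for live-groups because of the merged partition blocks) is a correct and careful refinement of the paper's informal ``if and only if'' discussion.
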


Note that, in general, the winning region of the game $(\gamegraph, \spec_1 \land \spec_2)$ is not necessarily equal to $\win_1 \cap \win_2$. However, if the combined strategy template is conflict-free, then the winning region of the conjoined game is indeed $\win_1 \cap \win_2$.

	\section{Experiments}\label{sec:experiments}
	\noindent\textbf{Implementation and Setup.} Our algorithms are implemented in a Python tool, called \toolname, capable of handling Büchi and co-Büchi games, hence Parity games with two colors.
We are unaware of large-scale benchmarks for \emph{concurrent} games. Thus, we converted \emph{turn-based} Parity games from the \textsc{SYNTCOMP} benchmarks \cite{JacobsPABCCDDDFFKKLMMPR24} to concurrent games for evaluation purposes. Specifically, we considered smaller models which are alternating (i.e. \p{1} states are always followed by \p{2} states and vice versa) and can be converted to Büchi (resp. co-Büchi) games. Our conversion first merges each \p{1} transition with the transitions of the respective successor \p{2} state. Afterwards, it removes all \p{2} states and turns the transition-based into state-based winning conditions\footnote{We note that the resulting concurrent game does not capture the same interaction dynamics as the original turn-based game. The conversion procedure and considered models are described in more detail in Appendix~\cref{app:experiments}}. In total we converted 171 games. All experiments were run on a machine with Ubuntu 22.04, an Intel i7-1165G7 CPU and 32GB RAM. Our implementation and experimental data are available in \cite{anonymous_2025_17357029}.

We compared our prototypical implementation with \texttt{PeSTel} \cite{AnandNS_PeSTels2023}, a tool for computing strategy templates in \emph{turn-based} games. The results are given in Appendix~\cref{app:experiments} and indicate that \toolname is noticeably slower. We attribute the slower performance to programming language differences and the inherent difficulty of solving concurrent games, and leave engineering improvements open for future work. For the remainder, we investigate the efficacy of our templates in incremental synthesis and optimizing strategies at runtime.

\begin{figure}[!t]
  \centering
  \hfill
  \begin{minipage}[b]{0.49\textwidth}
    \includegraphics[width=0.99\textwidth]{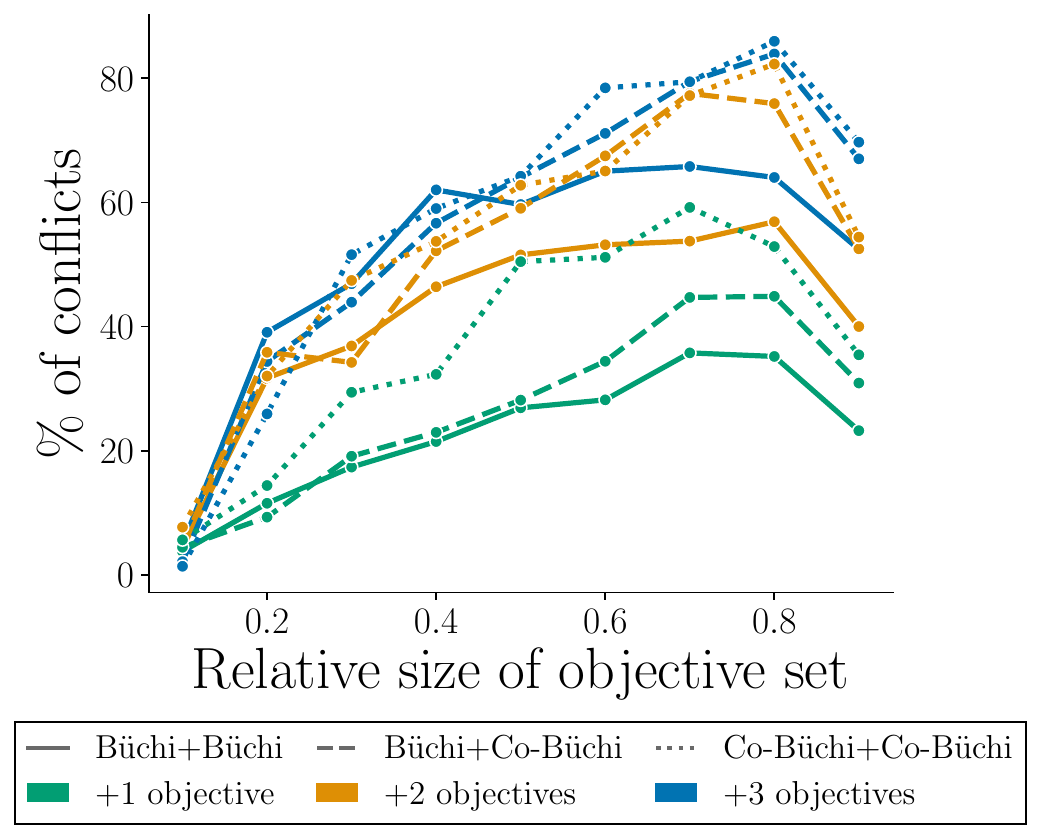}
    \caption{Conflict analysis.}
    \label{figure:lineplot-conflict}
  \end{minipage}
  \hfill
      \begin{minipage}[b]{0.49\textwidth}
    	\centering
	\includegraphics[width=0.99\textwidth]{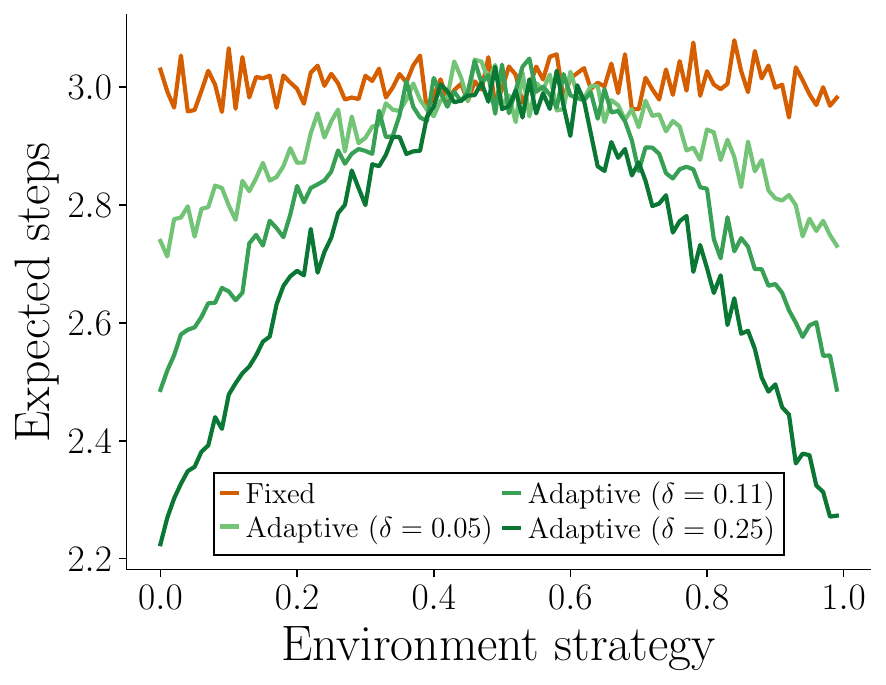}
	\caption{Expected steps to reach $S_e$.}
	\label{fig:robot_exp}
  \end{minipage}
  \vspace{-0.5cm}
\end{figure}

\smallskip

\noindent\textbf{Incremental Synthesis.} We study the ability of our approach to handle newly arriving objectives. Starting with a template for an initial objective, we are interested in the amount of conflicts that occur when combining the existing template with templates of new objectives. We selected four converted games with at least $20$ states and for which the winning regions of the initial objectives are non-empty. For each game, we then introduced $k \in \{1,2,3\}$ different additional objectives of the same type (Büchi or co-Büchi), each of size $s \in \{10\%, 20\%, \ldots, 90\%\}$ (relative to the number of states). For every configuration $(s, k)$ we considered $100$ random samples and reported the relative number of conflicts when combining the existing template with the templates of the new objectives. The results are shown in \cref{figure:lineplot-conflict}, where the number of additional objectives $k$ is indicated by colors and the objective types by line styles.
As expected, more conflicts occur as the number of additional objectives increases. Most conflicts occur for larger objective sets ($60\%$ to $80\%$). Intuitively, small sets result in smaller winning regions and thus fewer states where conflicts can potentially occur. Less conflicts arise for the largest set, since the objectives become easier to satisfy. Remarkably, regardless of the objective size, we can \emph{often add additional Büchi templates} to an existing Büchi template without conflicts ($\leq 36\%$ conflicts for $+1$ objective). Conflicts are more prevalent when co-Büchi objectives are involved, but remain reasonable for smaller objectives. We provide more details and plots in Appendix~\cref{app:experiments}.

\smallskip

\noindent\textbf{Runtime Optimization.} Finally, we showcase the application of our templates in runtime strategy optimization on the robot example (\cref{fig:robot_all}). The robot $R_C$ reaches a green cell alone almost surely when assigning positive probabilities to actions that lead to green cells, as prescribed by our strategy template. Observe that strategies can differ in the \emph{expected time} it takes for the robot to reach $S_e$. %
Conceptually, the template enables the robot to \emph{change its strategy at runtime}, while ensuring it reaches $S_e$ almost surely. Using this flexibility, the robot can adapt its strategy, among the ones prescribed by the template, to minimize the expected time. Assuming a fixed environment strategy, we consider a simple adaptive approach that updates the robot's strategy based on the observed environment actions. Initially, the robot assigns equal probability to both actions leading to the green cells. As the game progresses, the robot observes the environment actions and adjusts the probabilities by $\delta$ to favour actions that are more likely to succeed, while ensuring that the template is still followed.

We simulated the approach $10000$ times for different environment strategies and values for $\delta$. The results are shown in \cref{fig:robot_exp}.
The x-axis corresponds to environment strategies and shows the probability of the environment choosing the \emph{clockwise} move. The y-axis depicts the expected time for the robot to reach $S_e$. The fixed strategy assigns equal probability to both actions and generally requires three steps to reach $S_e$. Depending on $\delta$ and the environment strategy, the robot is able to noticeably reduce the number of steps when following the adaptive approach. This shows the potential for our template-based approach in scenarios where dynamic adaptation offers an advantage.

\medskip

\noindent\textit{Data Availability Statement.} The artefact containing the models, experimental data, implementation, and scripts for reproducing the experiments can be found at \url{https://doi.org/10.5281/zenodo.18187322}. The latest version of the artefact can be found at \url{https://doi.org/10.5281/zenodo.17357028} \cite{anonymous_2025_17357029}.

\bibliographystyle{splncs04}
\bibliography{bib}

\newpage
\appendix
\section{Algorithms}\label{app:algorithms}\label{app:algorithms_buchi}
In this section, we provide the full algorithm for computing the strategy templates for \buchi game in \cref{alg:buchiTemp}.

If $(\gamegraph, \LTLalways \LTLeventually \targetSet)$ is a concurrent \buchi game, then the algorithm $\buchiTemp$ (shown in \cref{alg:buchiTemp}) computes a safety function $\funcSafe$ and a liveness tuple $(\funcLive, \livePartitions)$ such that the template $\template = \templatesafe(\funcSafe) \cap \templatelive(\funcLive, \livePartitions)$ is winning for $\p{1}$ in the game. 
It first computes the winning region $\wino$ using the $\BuchiAlgo{\gamegraph, \targetSet}$, then it computes the safety function $\funcSafe$ using the procedure $\safeTemp(\gamegraph, \wino)$. Afterwards, it restricts the target set to $\targetSet \cap \wino$ and computes the liveness tuple $(\funcLive, \livePartitions)$ using the procedure $\liveTemp(\gamegraph, \targetSet)$.
Finally, we assign all actions to be live actions for every state outside of the winning region $\wino$, including some states of $\targetSet$, to ensure that the live function $\funcLive$ is well-defined.
Therefore, for every state outside of the winning region $\wino$, and every state in $\targetSet \cap \wino$ it assigns all actions to be live actions in $\funcLive$ (line \ref{line:lg:buchiTemp:liveoutside}). 
\begin{algorithm}[ht]
		\caption{$ \buchiTemp(\gamegraph, \targetSet) $}\label{alg:buchiTemp}
		\begin{algorithmic}[1]
			\Require A game graph $\gamegraph$, and a subset of states $ \targetSet$
			\Ensure A safety function $ \funcSafe $ and a liveness tuple $ (\funcLive, \livePartitions) $
			\State $\wino \gets \BuchiAlgo{\gamegraph,\targetSet}$; $ \funcSafe\gets \safeTemp(\gamegraph,\wino) $;
			\State $\targetSet \gets \targetSet\cap \wino$;%
			\State $ (\funcLive, \livePartitions)\gets \liveTemp(\gamegraph, \targetSet) $;
			\For{$ v \in I \cup (V\setminus \wino)$} $\funcLive(v) \gets \{\actiono(v)\setminus \funcSafe(v)\}$; \EndFor\label{line:lg:buchiTemp:liveoutside}
			\State \Return $ (\funcSafe, (\funcLive, \livePartitions)) $
			\Procedure{\liveTemp}{$ \gamegraph, \targetSet $}
				\State $ \livePartitions\gets\emptyset $;
				\While{$I \neq \wino$}
					\State $ I' \gets \Apre{1}{\wino}{I}$; $\livePartitions \gets \livePartitions \cup \{(I' \setminus I)\}$;
					\For{$ v \in I' \setminus I$}
						\State $ \funcLive(v) \gets \emptyset $;
						\For{$ \actt \in \actiont(v)$}
							\State $ \funcLive(v) \gets \funcLive(v) \cup \{\{\acto \in \actiono(v) \setminus \funcSafe(v) \mid \transition(v,\acto,\actt) \in I\}\}$;\label{line:lg:buchiTemp:removeunsafe}
						\EndFor
					\EndFor	
				\State $I \gets I'$;
				\EndWhile
				\State \Return $ (\funcLive, \livePartitions) $
			\EndProcedure
		\end{algorithmic}
	\end{algorithm}

\section{Formal Proofs of the Results}\label{app:proofs}
In this section, we restate every result from the main body of the paper and provide its formal proof.

\restatesafety*
\begin{proof}
	Let $\strato$ be a strategy that follows $\templatesafe(\funcSafe)$. For the rest of the proof, we denote $\wino(\LTLalways \targetSet)$ by $\wino$ for simplicity.
	We need to show that for every $v\in\wino$, $\strato$ is (almost surely) winning from $v$.
	However, we will show a stronger property, namely that for every $v\in\wino$, every $\strato$-play from $v$ belongs to $\lang(\Box \targetSet)$.
	Furthermore, it is easy to see that a strategy cannot win from a state outside of the safe set $\targetSet$. Therefore, $\wino\subseteq \targetSet$, and hence, it suffices to show that every $\strato$-play $\play\in \vertex^\omega$ from $\play_0 = v$ is contained in the winning region $\wino$, i.e., $\play_i \in \wino$ for all $i\geq 0$.
	We use induction to prove this argument.

	For base case, we have $\play_0\in\wino$ by construction.
	For induction case, we show that if $\play_i \in \wino$, then $\play_{i+1} \in \wino$.
	Suppose $\play_i \in \wino$, then according to \eqref{equ:SafeTempCompute}, we have:
	\[
	\funcSafe(\play_i) = \actiono(\play_i) \setminus \safeactions{\wino}{\play_i}{\emptyset}
	\]
	And, since $\strato$ follows $\templatesafe(\funcSafe)$, from \eqref{equ:PiSet} we have:
	\[
	\support(\strato(\play_{\leq i})) \cap \left(\actiono(\play_i) \setminus \safeactions{\wino}{\play_i}{\emptyset}\right) = \emptyset
	\]
	As $\support(\strato(\play_{\leq i})) \subseteq \actiono(\play_i)$, we have:
	\[
	\support(\strato(\play_{\leq i})) \subseteq \safeactions{\wino}{\play_i}{\emptyset}.
	\]
	Hence, by \eqref{equ:defAset}, it holds that for every $\acto \in \support(\strato(\play_{\leq i}))$ and every $\actt \in \actiont(\play_i)$, we have $\transition(\play_i,\acto,\actt) \in \wino$.
	As $\play$ is a $\strato$-play, by definition, $\play_{i+1} = \transition(\play_i,\acto,\actt)$ for some $\acto \in \support(\strato(\play_{\leq i}))$ and $\actt \in \actiont(\play_i)$.
	So, it must hold that $\play_{i+1} \in \wino$.
	Hence, by induction, we have shown that every $\strato$-play from $v$ is contained in $\lang(\Box \targetSet)$.
	\par
	Suppose $\strat$ is a winning strategy, then by definition, for every $\play\in \plays(\strat)$ and for every $i\geq 0$, every action $\acto \in \support(\strat(\play_{\leq i}))$ should lead to a state in $\wino$. That means, given a play $\play \in \plays(\strat)$, we have the following for every $i\geq 0$:
	\[
	\forall \acto \in \support(\strat(\play_{\leq i})) \cdot \forall \actt \in \actiont(\play_i) \cdot \transition(\play_i, \acto, \actt) \in \wino.
	\]
	According to \eqref{equ:defAset}, the above equation can be rewritten as:
	\[
	\forall \acto \in \support(\strat(\play_{\leq i})) \cdot \acto \in A^v_{\wino}(\emptyset).
	\]
	This implies the following:
	\begin{align*}
		&\support(\strat(\play_{\leq i})) \subseteq A^v_{\wino}(\emptyset)\\
			\Rightarrow & \support(\strat(\play_{\leq i}))\cap (\actiono(v) \setminus A^v_{\wino}(\emptyset)) = \emptyset\\
			\Rightarrow & \support(\strat(\play_{\leq i}))\cap \funcSafe(v) = \emptyset.
		\end{align*}
	Then, by definition of safety template $\templatesafe(\funcSafe)$, it holds that $\strat \in \templatesafe(\funcSafe)$.
	Hence, every \emph{almost sure} winning strategy $\strat$ in \emph{concurrent safety game} $\game=(\gamegraph, \Box \targetSet)$ follows the $\templatesafe(\funcSafe)$ proposed in \eqref{equ:SafeTempCompute}.
\end{proof}

\restatebuchi*
\begin{proof}
	Let $\strato$ be a strategy for $\p{1}$ that follows the template $\template = \templatesafe(\funcSafe) \cap \templatelive(\funcLive, \livePartitions)$. We need to show that $\strato$ is an almost surely winning strategy for $\p{1}$ in the game $\game=(\gamegraph, \LTLalways \LTLeventually \targetSet)$., i.e., $\probability_v^{\strato}(\lang(\LTLalways\LTLeventually \targetSet)) = 1$ for every $v \in \wino$.
	First, by the property of $\templatesafe(\funcSafe)$, every $\strato$-play $\play$ from any state $v \in \wino$ will remain in the winning region $\wino$, i.e., $\play \in \wino^\omega$.

	Now, let $\BuchiAlgo{\gamegraph, \targetSet}$ gives the sets of states $X_0\subseteq X_1 \subseteq \ldots \subseteq X_k = X_{k+1} = \wino$, where $X_1=\targetSet \cap \pre{1}{\wino}$ and $X_{i+1} = (\neg \targetSet \cap \Apre{1}{\wino}{X_i}) \cup X_1$ for $i \ge 1$.
	Let $\livePartition_i\in\livePartitions$ be the $i$-th element of the set $\livePartitions$, i.e., $\livePartition_i = X_i \setminus X_{i-1}$, and $\livePartition_1 = I$.
	Let $E_i$ be the set of plays that visits a state in $X_i$ infinitely many times, i.e., $E_i = \{\play \mid X_i\cap \infPlay{\play} \ne \emptyset\}$.
	As every $\strato$-play $\play$ is an infinite sequence of states, it visits $X_k = \wino$ infinitely often.
	Hence, $\probability_v^{\strato}(E_k) = 1$ for every $v \in \wino$.
	Now, we will inductively show that if for some $i \ge 2$, we have $\probability_v^{\strato}(E_i) = 1$ for all $v\in \wino$, then it also holds that $\probability_v^{\strato}(E_{i-1}) = 1$ for all $v\in \wino$.

	Suppose for some $i \ge 2$, $\probability_v^{\strato}(E_i) = 1$ for all $v\in \wino$ but $\probability_v^{\strato}(E_{i-1}) \ne 1$ for some $v\in \wino$.
	Consider the $\strato$-plays $\play$ that belongs to $E_i$ but not to $E_{i-1}$.
	Hence, $\play$ visits $X_i$ infinitely often but visits $X_{i-1}$ finitely often, that means, $\play$ must visit $\livePartition_i = X_i \setminus X_{i-1}$ infinitely often.
	As $\strato$ follows $\templatelive(\funcLive, \livePartitions)$, according to \eqref{equ:PiSetLive}, $\livePartition_i \cap \infPlay{\play} \ne \emptyset$ implies:
	\[
	\sum_{\{j \mid \play_j\in \livePartition_i\}} \minsupport{\strato(\play_{\leq j} )}{\funcLive(\play_j)} = \infty.
	\]
	From the defintion of $\minsupp$ \eqref{equ:MinSupp}, we can write the above equation as:
	\[
	\sum\limits_{\{j \mid \play_j\in \livePartition_i\}} \left[\min\limits_{\subactiono \in \funcLive(\play_j)} \left\{\sum\limits_{\acto \in \subactiono} \strato(\play_{\leq j})(\acto)\right\}\right] = \infty
	\]
	When summation over minimum value of a set is infinite, it means that the summation over every element of the set is infinite. Furthermore, if we name $\play_j = v'$, in $\buchiTemp$, for every $\actt \in \actiont(v')$, $\{\acto \in \actiono(v') \setminus \funcSafe(v') \mid \transition(v',\acto, \actt) \in I\}$ has been added to $\funcLive(v')$. Hence:
	\[
	\exists v' \in \livePartition_i \cdot \forall \actt \in \actiont(v') \cdot \exists \subactiono \subseteq \actiono(v') \cdot \left[\transition(v', \subactiono, \actt) \subseteq \livePartition_{i-1} \land \sum\limits_{\{j \mid \play_j=v'\}}  \left\{\sum\limits_{\acto \in \subactiono} \strato(\play_{<j})(\acto)\right\}=\infty\right] 
	\]
	This means, summation over the probabilities of actions reaching $\livePartition_{i-1}$ in visits to $v'$ is infinite. 
	So, we conclude that, in the sample space of $\strato$-plays that visits $\livePartition_i$ infinitely often, the probability of visiting $\livePartition_{i-1}$ infinitely often is $1$.\\
	If the probability of visiting $\livePartition_{i-1}$ from $v'$ in $j$-th visit to $v'$ is $p_j$, then the probability of not visiting $\livePartition_{i-1}$ from $v'$ in $j$-th visit to $v'$ is $(1-p_j)$. By the convergence criterion for infinite products, since $\forall j \cdot 0<p_j<1$ and $\sum\limits_{j} p_j$ diverges, then $\prod\limits_{j} (1-p_j)$ converges to $0$. Hence, the probability of visiting $\livePartition_{i-1}$ infinitely often is $1$.	
	Hence, it contradicts our assumption that $\probability_v^{\strato}(E_i) = 1$ and $\probability_v^{\strato}(E_{i-1}) \ne 1$.

	With the above inductive argument, we have shown that for all $i\geq 1$ $\probability_v^{\strato}(E_i) = 1$ for all $v\in \wino$. Hence, $\probability_v^{\strato}(E_1) = 1$ for all $v\in \wino$, which means $\probability_v^{\strato}(\lang(\LTLalways\LTLeventually \targetSet)) = 1$ for all $v\in \wino$.
	Therefore, $\strato$ is an almost surely winning strategy for $\p{1}$ in the game $\game=(\gamegraph, \LTLalways\LTLeventually \targetSet)$.
\end{proof}

\restatecobuchi*
\begin{proof}
	Let $\strato$ be a strategy for $\p{1}$ that follows the template $\template = \templatesafe(\funcSafe) \cap \templatelive(\funcLive, \livePartitions) \cap \templatecolive(\funcCoLive)$. We need to show that $\strato$ is an almost surely winning strategy for $\p{1}$ in the game $\game=(\gamegraph, \LTLeventually\LTLalways \targetSet)$., i.e., $\probability_v^{\strato}(\lang(\LTLeventually\LTLalways \targetSet)) = 1$ for every $v \in \wino$.
	First, by the property of $\templatesafe(\funcSafe)$, every $\strato$-play $\play$ from any state $v \in \wino$ will remain in the winning region $\wino$, i.e., $\play \in \wino^\omega$.
	As every $\strato$-play $\play$ is an infinite sequence of states, it visits $\wino$ infinitely often. Hence, $\infPlay{\play} \cap \wino \ne \emptyset$.\\
	Now, let $\CobuchiAlgo{\gamegraph, \targetSet}$ gives the sets of states $X_0\subseteq X_1 \subseteq \ldots \subseteq X_k = X_{k+1} = Z^* = \wino$, where $X_1= \wino(\LTLalways \targetSet)$.
	Let $\livePartition_i\in\livePartitions$ be the $i$-th element of the set $\livePartitions$, i.e., $\livePartition_i = X_i \setminus X_{i-1}$, and $\livePartition_1 = X_1$.
	Let $E_i$ be the set of plays that visits a state in $X_i$ infinitely many times, i.e., $E_i = \{\play \mid X_i\cap \infPlay{\play} \ne \emptyset\}$.
	Now, we will show that if we have $\probability_v^{\strato}(E_1) = 1$ for all $v\in \wino$, then $\probability_v^{\strato}({E_1}^\complement) = 0$ for all $v\in \wino$, where ${E_1}^\complement$ is the complement of $E_1$. Intuitively, this means that if probability of visiting $X_1$ infinitely often is $1$, then the probability of visiting ${X_1}^\complement$ infinitely often is $0$.

	As $\strato$ follows $\templatecolive(\funcCoLive)$, according to \eqref{equ:PiSetCoLive2}, for every $v \in X_1$ we have:
	\[
	 v \in \infPlay{\play} \implies \sum\limits_{\{i \mid \play_i=v\}} \strato(\play_{\leq i})(\funcCoLive(\play_i)) \ne \infty	
	\]
	In the \cref{alg:cobuechi_temp}, we have used $\safeTemp(\gamegraph, X_1)$ to define $\funcCoLive(v)$ for every $v \in X_1$. Hence, $\funcCoLive(v)$ contains all the actions that can lead outside $X_1$. Therefore, by the \emph{Borel-Cantelli} lemma, the above equation implies that, for every $v \in X_1$, the actions that can lead outside $X_1$ are chosen only finitely many times in $\play$. This means that, if $\probability_v^{\strato}(E_1) = 1$ for all $v\in \wino$, then $\probability_v^{\strato}(E_1^\complement) = 0$.\\
	Now, we will show that $\probability_v^{\strato}(\lang(\LTLalways\LTLeventually \neg \targetSet)) = 0$ for all $v\in \wino$, i.e. the probability of visiting $\neg \targetSet$ infinitely often is $0$. We use contradiction for this purpose. Suppose, there exists a state $v' \in \wino \cap \neg \targetSet$ such that $v' \in \infPlay{\play}$ with positive probability.\\
	Firstly note that, $X_1 \cap \neg \targetSet = \emptyset$, because $X_1 = \wino(\LTLalways \targetSet)$. Hence, $v' \notin X_1$ and $v' \in X_i$ for some $i \ge 2$. Since $\strato$ follows $\templatelive(\funcLive, \livePartitions)$, according to \cref{thm:buchi}, we know if $\probability_v^{\strato}(E_i) = 1$, then $\probability_v^{\strato}(E_{i-1}) = 1$ for all $v\in \wino$ and $i \ge 2$.
	By using the same inductive argument as in \cref{thm:buchi}, we can show that $\probability_v^{\strato}(E_{1}) = 1$ for all $v\in \wino$.
	But, we showed that if $\probability_v^{\strato}(E_1) = 1$ for all $v\in \wino$, then $\probability_v^{\strato}(E_1^\complement) = 0$. Hence, we have a contradiction, because we assumed that $v' \in \infPlay{\play} \cap {E_1}^\complement$ with positive probability.\\
	Now, by negation of the logical formula, we have $\probability_v^{\strato}({\lang(\LTLeventually\LTLalways  \targetSet)}^\complement) = 0$, for all $v\in \wino$, and this is equivalent to $\probability_v^{\strato}(\lang(\LTLeventually\LTLalways  \targetSet)) = 1$ for all $v\in \wino$. Therefore, $\strato$ is an almost surely winning strategy for $\p{1}$ in the game $\game=(\gamegraph, \LTLeventually\LTLalways \targetSet)$.
		
\end{proof}

\restateallconflictfree*
\begin{proof}
We will prove the conflict-freeness of the templates computed by $\safeTemp$, $\buchiTemp$, and $\cobuchiTemp$.

\noindent\textbf{For safety games:} Suppose $\funcSafe = \safeTemp(\gamegraph, \targetSet)$, then by construction, the template $\templatesafe(\funcSafe)$ does not restrict any action from states outside the winning region $\wino$. Furthermore, by the proof of \cref{thm:safety}, we know that for every state $v \in \wino$ and for every $\play \in \plays(\strat)$ with $\play_i = v$, there exists a winning strategy $\strat$ for $\p{1}$ from $v$ such that $\support(\strat(\play_{\leq i})) \cap \funcSafe(v) = \emptyset$. Hence, $\templatesafe(\funcSafe)$ is conflict-free.

\noindent\textbf{For \buchi games:} Suppose $(\funcSafe, (\funcLive, \livePartitions)) = \buchiTemp(\gamegraph, \targetSet)$, then by construction again, the template $\template = \templatesafe(\funcSafe) \cap \templatelive(\funcLive, \livePartitions)$ does not restrict any action from states outside the winning region $\wino$. Furthermore, by \cref{line:lg:buchiTemp:removeunsafe} of \cref{alg:buchiTemp}, for every state $v \in \wino$, $\funcLive(v)$ does not contain any unsafe actions from $\funcSafe(v)$. Hence, $\template$ is conflict-free.

\noindent\textbf{For co-\buchi games:} Suppose $(\funcSafe, (\funcLive, \livePartitions), \funcCoLive) = \cobuchiTemp(\gamegraph, \targetSet)$, then by construction again, the template $\template = \templatesafe(\funcSafe) \cap \templatelive(\funcLive, \livePartitions) \cap \templatecolive(\funcCoLive)$ does not restrict any action from states outside the winning region $\wino$.
Furthermore, by \cref{line:alg:cobuechi_temp:removeunsafe} of \cref{alg:cobuechi_temp}, for every state $v \in \wino$, $\funcCoLive(v)$ does not contain any unsafe actions from $\funcSafe(v)$. 
Also, by \cref{line:alg:cobuechi_temp:safecolive} of \cref{alg:cobuechi_temp}, colive actions only contain actions from $X$, and by \cref{line:alg:cobuechi_temp:liveoutside} of \cref{alg:cobuechi_temp}, for every state $v \in X$, $\funcLive(v)$ contains all the possible actions from $v$.
Due to conflict-freeness of $\safeTemp$, there will always exist available actions in $\funcLive(v)$ from $X$ that are not in $\funcCoLive(v)$.
Hence, $\template$ is conflict-free.
	
\end{proof}

\section{Additional Proofs of Completeness of Algorithms}

\begin{restatable}{theorem}{restatebuchiComplete}\label{thm:buchiComplete}
	Given the premises of \cref{thm:buchi}, the template $\template = \templatesafe(\funcSafe) \cap \templatelive(\funcLive, \livePartitions)$ is complete, meaning that if there exists a winning strategy for $\p{1}$ in the game $\game=(\gamegraph,\LTLalways \LTLeventually \targetSet)$, then there exists a strategy that follows the template $\template$.
\end{restatable}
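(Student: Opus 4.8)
The plan is to obtain completeness as an immediate consequence of two facts already established in the excerpt: that the template $\template = \templatesafe(\funcSafe)\cap\templatelive(\funcLive,\livePartitions)$ is \emph{conflict-free} (\cref{thm:all-conflict-free}), and that it is \emph{winning} for $\p{1}$ in $(\gamegraph,\LTLalways\LTLeventually\targetSet)$ (\cref{thm:buchi}). In other words, rather than starting from an arbitrary winning strategy and massaging it into one that follows $\template$, I would exhibit a concrete witness.

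First I would invoke \cref{thm:all-conflict-free} to get that $\template$ is conflict-free; since the Büchi template has no co-live component, \cref{def:conflict-free} reduces to: $\actiono(v)\not\subseteq\funcSafe(v)$ for every $v$, and $\subaction\not\subseteq\funcSafe(v)$ for every $v$ and every $\subaction\in\funcLive(v)$. Next, by \cref{thm:strategy-extraction} applied to this conflict-free template, there exists a memoryless $\p{1}$ strategy $\strato$ that follows $\template$; explicitly, one may take $\strato$ to assign, at each state $v$, positive probability to every action of $\actiono(v)\setminus\funcSafe(v)$, say uniformly. This set is nonempty by condition~(i), so $\strato$ is well defined; its support avoids $\funcSafe(v)$, so $\strato\in\templatesafe(\funcSafe)$; and it meets every live set $\subaction\in\funcLive(v)$ by condition~(ii), so the $\minsupp$ term in \eqref{equ:PiSetLive} stays bounded away from $0$ at every visit, whence the relevant sums diverge and $\strato\in\templatelive(\funcLive,\livePartitions)$. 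Hence $\strato\in\template$.

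Finally, since $\strato$ follows $\template$ and $\template$ is a winning strategy template by \cref{thm:buchi}, $\strato$ is almost-surely winning for $\p{1}$ from every state of $\wino$. In particular, $\strato$ is a \emph{winning} strategy for $\p{1}$ that follows $\template$, so whenever $\p{1}$ has some winning strategy in the game, one that follows $\template$ exists as well; this is exactly completeness. The same template-level reasoning, using \cref{thm:cobuchi} and \cref{alg:cobuechi_temp} in place of \cref{thm:buchi} and \cref{alg:buchiTemp}, would handle the co-Büchi statement needed for \cref{cor:completeness}.

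I do not anticipate a genuine obstacle here: the substantive content lives in the soundness proof (\cref{thm:buchi}) and the conflict-freeness proof (\cref{thm:all-conflict-free}), both already carried out, together with \cref{thm:strategy-extraction}. The only point requiring mild care is checking that the extracted memoryless strategy lies in \emph{both} component templates at once — but that is precisely what conditions~(i) and~(ii) of conflict-freeness buy us. Should one prefer an argument that does not route through \cref{thm:strategy-extraction}, the fallback is to write down the uniform-over-safe-actions strategy explicitly and verify \eqref{equ:PiSet} and \eqref{equ:PiSetLive} directly against it, which is routine once conflict-freeness is in hand.
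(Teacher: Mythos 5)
Your argument is correct, but it is not the route the paper takes for \cref{thm:buchiComplete}. The paper's own proof of this theorem works directly with an arbitrary winning strategy $\strat$: it revisits the fixed-point layers $\livePartition_1,\livePartition_2,\dots$ produced by $\liveTemp$ and argues, level by level, that winningness forces the divergent-sum conditions of \eqref{equ:PiSetLive} at the relevant states, concluding non-emptiness of $\template$ from that analysis. You instead bypass winning strategies entirely and exhibit a witness: conflict-freeness of the computed template (\cref{thm:all-conflict-free}), the extraction of a memoryless strategy from any conflict-free template (\cref{thm:strategy-extraction}, or the explicit uniform-over-safe-actions strategy, whose membership in $\templatelive(\funcLive,\livePartitions)$ follows since the $\minsupp$ terms are bounded below by a positive constant at each visit), and then soundness (\cref{thm:buchi}) to conclude the witness is in fact almost-surely winning from $\wino$. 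This is essentially the deduction the paper itself performs in the main text to obtain \cref{cor:completeness}, and it is non-circular, since neither \cref{thm:all-conflict-free} nor \cref{thm:strategy-extraction} depends on \cref{thm:buchiComplete}. What each approach buys: yours is shorter, modular, and proves something slightly stronger (the template is non-empty unconditionally, making the hypothesis of the theorem superfluous), while avoiding the delicate probabilistic reasoning about arbitrary winning strategies; the paper's direct argument aims at a finer statement about how arbitrary winning strategies interact with the live-group structure (closer in spirit to a permissiveness analysis), which your construction does not address -- though note that such a claim is in tension with the non-maximality examples of \cref{app:non_maximally}, so nothing of the stated theorem is lost by your shortcut. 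The only point you should make explicit is that the extracted strategy's live sets are met with probability bounded away from zero uniformly over the finitely many states and live sets, which is exactly what conditions (i) and (ii) of \cref{def:conflict-free} provide; you do say this, so I see no gap.
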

\begin{proof}
	Suppose $\strat$ is a winning strategy for $\p{1}$ in the game $\game=(\gamegraph, \LTLalways\LTLeventually \targetSet)$. 
	First of all, since $\strat$ is a winning strategy for $\p{1}$, it should never leave the winning region $\wino$, i.e. $\strat$ should never choose an action that leads to a state outside of $\wino$. As we proved in \cref{thm:safety}, the template $\templatesafe(\funcSafe)$ is non-empty.

	The set of all states that from which there is a strategy to reach $\targetSet$ in exactly one step (with probability 1), is $\Apre{1}{\wino}{\targetSet}$, which we defined it as $\livePartition_1$ in the procedure $\liveTemp(\gamegraph, \targetSet)$.
	If a $\strat$-play $\play$ visited $\livePartition_1$ infinitely often, there must be a state $v \in \livePartition_1$ such that, $\strat$ is a winning strategy for $\p{1}$ at $v$, i.e. $\strat$ should assign positive probability to some action against every possible action of $\p{2}$, to be able to reach $\livePartition_{0} = \targetSet$ infinitely many times.
	If it was not the case, then $\p{2}$ could choose an action that prevents $\p{1}$ from reaching $\targetSet$ from $v$, which contradicts the fact that $\strat$ is a winning strategy for $\p{1}$.
	We can write this fact as below:
	\begin{eqnarray*}
		\livePartition_1 \cap \infPlay{\play} \ne \emptyset \rightarrow \exists v \in \livePartition_1 \cdot \forall b \in \actiont(v) \cdot \exists \subactiono \subseteq \actiono(v) \cdot \\
		\left[\transition(v, \subactiono, b) \subseteq \livePartition_{0} \land \sum\limits_{\{j \mid \play_j=v\}} \left\{\sum\limits_{\acto \in \subactiono} \strat(\play_{<j})(\acto)\right\} = \infty\right]		
	\end{eqnarray*}
	Since in the $\buchiTemp(\gamegraph, \targetSet)$, for every $\actt \in \actiont(v)$, we are adding the set $\subactiono^{\actt} = \{\acto \in \actiono(v) \setminus \funcSafe(v)\mid \transition(v,\acto,\actt) \in \livePartition_{0}\}$ to $\funcLive(v)$, as defined in \cref{sec:buechi}, we can conclude that $\subactiono$ is a subset of $\subactiono^{\actt}$, because $\subactiono^{\actt}$ has all the possible \emph{safe} actions reaching $\livePartition_0$.
	So, always $\sum\limits_{\acto \in \subactiono} \strat(\play_{<j})(\acto)$ is less than $\sum\limits_{\acto \in \subactiono^{\actt}} \strat(\play_{<j})(\acto)$ for all $j$. Therefore, if summation over the first one is infinite, then summation over the second one is also infinite. As a result, we can change the above equation as below:
	\begin{eqnarray*}
		\livePartition_1 \cap \infPlay{\play} \ne \emptyset \rightarrow \exists v \in \livePartition_1 \cdot \forall b \in \actiont(v) \cdot \exists \subactiono^{\actt} \subseteq \actiono(v) \cdot \\
		\left[\transition(v, \subactiono^{\actt}, b) \subseteq \livePartition_{0} \land \sum\limits_{\{j \mid \play_j=v\}} \left\{\sum\limits_{\acto \in \subactiono^{\actt}} \strat(\play_{<j})(\acto)\right\} = \infty\right]
	\end{eqnarray*}
	Since summation over $\subactiono^{\actt}$ for every $b \in \actiont(v)$ is infinite, the \emph{minimum} of these summations is also infinite. So, if we fix the definition of $\subactiono^{\actt}$ as in \cref{sec:buechi}, we'll have:
	\[
	\livePartition_1 \cap \infPlay{\play} \ne \emptyset \rightarrow \exists v \in \livePartition_1 \cdot \min\limits_{b \in \actiont(v)} \left[\sum\limits_{\{j \mid \play_j=v\}} \left\{\sum\limits_{\acto \in \subactiono^{b}} \strat(\play_{<j})(\acto)\right\}\right] = \infty
	\]
	Here variable of \emph{minimum} is independent of first summation, so they can be swapped:
	\[
	\livePartition_1 \cap \infPlay{\play} \ne \emptyset \rightarrow \exists v \in \livePartition_1 \cdot \sum\limits_{\{j \mid \play_j=v\}} \left[\min\limits_{b \in \actiont(v)} \left\{\sum\limits_{\acto \in \subactiono^{b}} \strat(\play_{<j})(\acto)\right\}\right] = \infty
	\]
	From the definition of \eqref{equ:MinSupp}, we can write the above equation as:
	\[
	\livePartition_1 \cap \infPlay{\play} \ne \emptyset \rightarrow \exists v \in \livePartition_1 \cdot\left[ \sum\limits_{\{j \mid \play_j=v\}} \minsupport{\strat(\play_{<j})}{\{\subactiono^b \mid \forall \actt \in \actiont(v)\}}\right] = \infty
	\]
	As mentioned before, $\{\subactiono^b \mid \forall \actt \in \actiont(v)\}$ is what we have defined as $\funcLive(v)$ in the procedure $\liveTemp(\gamegraph, \targetSet)$. So, we can write:
	\[
	\livePartition_1 \cap \infPlay{\play} \ne \emptyset \rightarrow \exists v \in \livePartition_1 \cdot\left[ \sum\limits_{\{j \mid \play_j=v\}} \minsupport{\strat(\play_{<j})}{\funcLive(v)}\right] = \infty
	\]
	Then, we can conclude the summation over all states in $\livePartition_1$ would be infinite, because one of them is infinite:
	\[
	\livePartition_1 \cap \infPlay{\play} \ne \emptyset \rightarrow \sum\limits_{v \in \livePartition_1}\left[ \sum\limits_{\{j \mid \play_j=v\}} \minsupport{\strat(\play_{<j})}{\funcLive(v)}\right] = \infty
	\]
	Finally, according to \eqref{equ:PiSetLive}, we have shown that, for this partition of $\wino$, the template $\templatelive(\funcLive, \livePartitions)$ is non-empty.
	Then, we can use the same argument for $\livePartition_2$, which is the set of states that can reach $\livePartition_1$ in exactly one step, and so on, until we conclude that for every $\livePartition_i$, $i \ge 1$, the template $\templatelive(\funcLive, \livePartitions)$ is non-empty.		\\
	We should just make sure that the $\template$ is non-empty for the states in $\livePartition_0$, which is a subset of goal set $\targetSet$.
	That is also guaranteed by the $\templatesafe(\funcSafe)$, as it was the only template we used to restrict the game graph $\gamegraph$ to the winning region $\wino(\Box \Diamond \targetSet)$, and we have proved in \cref{thm:safety} that the template $\templatesafe(\funcSafe)$ is non-empty for every state in $\wino$.\\
	Therefore, we have shown that if there exists a winning strategy for $\p{1}$ in the game $\game=(\gamegraph, \LTLalways \LTLeventually \targetSet)$, we can conclude that the template $\template = \templatesafe(\funcSafe) \cap \templatelive(\funcLive, \livePartitions)$ is non-empty for every state in $\wino$, including the states in $\livePartition_0$, i.e. the algorithm $\buchiTemp$ is complete.

\end{proof}

\begin{restatable}{theorem}{restatecobuchiComplete}\label{thm:cobuchiComplete}
		Given the premises of \cref{thm:cobuchi}, the template $\template = \templatesafe(\funcSafe) \cap \templatelive(\funcLive, \livePartitions) \cap \templatecolive(\funcCoLive)$ is complete, meaning that if there exists a winning strategy for $\p{1}$ in the game $\game=(\gamegraph,\LTLeventually\LTLalways \targetSet)$, then there exists a strategy that follows the template $\template$.
\end{restatable}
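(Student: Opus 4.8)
The plan is to obtain the statement in the same way \cref{cor:completeness} is obtained from \cref{thm:all-conflict-free}, and to isolate the one ingredient that genuinely distinguishes the co-Büchi case from the Büchi case treated in \cref{thm:buchiComplete}. Assume a winning \p{1} strategy for $\game=(\gamegraph,\LTLeventually\LTLalways\targetSet)$ exists, let $(\funcSafe,(\funcLive,\livePartitions),\funcCoLive)=\cobuchiTemp(\gamegraph,\targetSet)$ and $\template=\templatesafe(\funcSafe)\cap\templatelive(\funcLive,\livePartitions)\cap\templatecolive(\funcCoLive)$. First I would invoke \cref{thm:all-conflict-free}, which says that the template returned by $\cobuchiTemp$ is conflict-free. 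Then \cref{thm:strategy-extraction} directly produces a memoryless strategy $\strato$ following $\template$, which already establishes the existence claim; and by \cref{thm:cobuchi} every such $\strato$ is almost-surely winning from each state of $\wino$, so completeness holds in the strong form of \cref{cor:completeness}. The reason this short packaging works is precisely that conflict-freeness (\cref{def:conflict-free}) was designed so that deleting unsafe and co-live actions at each state and spreading positive probability over what remains never destroys a live group.

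A more self-contained argument, in the style of \cref{thm:buchiComplete}, would start from a concrete winning strategy $\strat$ and check the three components separately. Safety is immediate: a winning strategy never leaves $\wino$, so its supports avoid $\funcSafe(v)=\actiono(v)\setminus\safeactions{\wino}{v}{\emptyset}$ and $\strat\in\templatesafe(\funcSafe)$, exactly as in the proof of \cref{thm:safety}. For liveness I would replay the induction of \cref{thm:buchiComplete} along the chain $X_1\subset X_2\subset\cdots\subset X_k=\wino$ coming from the fixed point \eqref{eq:cobuchi}: if a play visits $\livePartition_i=X_i\setminus X_{i-1}$ ($i\ge 2$) infinitely often, then at some recurrent $v\in\livePartition_i$ a winning strategy must, for each \p{2} action $\actt$, put positive probability on a subset of $\{\acto\in\actiono(v)\setminus\funcSafe(v)\mid\transition(v,\acto,\actt)\in X_{i-1}\}$ whose weights over the visits sum to $\infty$, which is exactly \eqref{equ:PiSetLive}.

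The step I expect to be the main obstacle is the co-live component $\templatecolive(\funcCoLive)$, because, in contrast to the safety and live-group components and to the Büchi case, an arbitrary winning strategy need not satisfy it: $\funcCoLive$ collects on $X_1=\wino(\LTLalways\targetSet)$ exactly the \p{1} actions to which \p{2} can react by leaving $X_1$, and a winning strategy may take such actions infinitely often whenever its opponent fails to exploit them. The self-contained route therefore has to \emph{build} a strategy that, once the play has settled inside $X_1$, switches to a memoryless sub-strategy using only $\safeactions{X_1}{v}{\emptyset}$ actions (available since $X_1=\nu Y.(\pre{1}{Y}\cap\targetSet)$), so that the accumulated $\funcCoLive$-mass along every play is finite and \eqref{equ:PiSetCoLive2} holds; making this phase change coexist with the live-group phase outside $X_1$ is exactly the bookkeeping that \cref{alg:cobuechi_temp} performs and that conflict-freeness certifies, which is why I would present the first, shorter route as the actual proof and keep the hands-on version only as intuition.
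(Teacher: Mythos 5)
Your primary route is correct, but it is not the route the paper takes for this theorem. The paper's own proof of \cref{thm:cobuchiComplete} is the hands-on argument you relegate to intuition: it starts from an arbitrary winning strategy $\strat$, shows it respects $\templatesafe(\funcSafe)$, argues that on $X_1=\wino(\LTLalways\targetSet)$ it respects $\templatecolive(\funcCoLive)$, reuses the induction of \cref{thm:buchiComplete} for states in $X_i\cap\neg\targetSet$, and for states in $X_i\cap\targetSet$ proves three lemmas (\cref{lem:cobuchi1,lem:cobuchi2,lem:cobuchi3}) showing that either $\strat$ already satisfies the live groups built from $\AFpre{1}{Z}{Y}{X}$ or it can be modified into another winning strategy $\strat'$ that does, from which non-emptiness of $\template$ is concluded. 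Your packaging via \cref{thm:all-conflict-free} and \cref{thm:strategy-extraction} (plus \cref{thm:cobuchi} for winningness) is instead the paper's derivation of \cref{cor:completeness}; it does establish the literal statement, and there is no circularity, since conflict-freeness and extraction are proved independently of the completeness theorems. What the paper's longer route aims to buy is extra information about how an \emph{arbitrary} winning strategy relates to the template (conformance up to a local modification at higher-level $\targetSet$-states); what your route buys is brevity and, notably, it sidesteps exactly the step you flag as delicate. Your suspicion there is well founded: since \eqref{equ:PiSetCoLive2} is a per-play condition, a winning strategy that keeps assigning probability $1/2$ to a risky action at a state of $X_1$ until the opponent actually punishes it once (and only then switches to safe actions forever) is still winning, yet along the compliant play in which the punishment never occurs the co-live sums diverge; so the claim in the paper's proof that \emph{every} winning strategy lies in $\templatecolive(\funcCoLive)$ is too strong as stated, whereas the existence claim --- all that \cref{thm:cobuchiComplete} asserts --- is exactly what your shorter argument delivers.
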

\begin{proof}
	
	Suppose $\strat$ is a winning strategy for $\p{1}$ in the game $\game=(\gamegraph, \LTLeventually\LTLalways \targetSet)$. 
	First of all, since $\strat$ is a winning strategy for $\p{1}$, it should never leave the winning region $\wino$, i.e. $\strat$ should never choose an action that leads to a state outside of $\wino$. As we proved in \cref{thm:safety}, the template $\templatesafe(\funcSafe)$ is non-empty for every state in $\wino$.\\
	Now, let $\CobuchiAlgo{\gamegraph, \targetSet}$ gives the sets of states $X_0\subseteq X_1 \subseteq \ldots \subseteq X_k = X_{k+1} = Z^* = \wino$, where $X_1= \wino(\LTLalways \targetSet)$, and define $\infStrat{\strat(v)}$ as below:
	\begin{equation}\label{equ:infStrat}
		\infStrat{\strat(v)} = \operatorname*{argmax}_{\{\subaction \subseteq \actiono(v) \mid \sum\limits_{\{i \mid v_i = v\}} \min\limits_{\acto \in \subaction} \strat(v_0 v_1 ... v_i)(\acto) = \infty\}}|\subaction|
	\end{equation}
	Intuitively, $\infStrat{\strat(v)}$ is the greatest set of actions that are in the support of the strategy $\strat$ infinitely many times, simultaneously, by the strategy $\strat$ in state $v$.\\
	Now, we will show that for every $v \in X_1$, where we know $X_1 = \wino(\LTLalways \targetSet)$, it holds that $\strat \in \templatecolive(\funcCoLive)$.\\
	For every $v \in X_1$, We defined $\funcCoLive(v)$ in the \cref{alg:cobuechi_temp} by using $\safeTemp(\gamegraph, X_1)$.
	Also, if every $\strat$-play $\play = v_0 v_1 v_2 \ldots$ visited $X_1$ infinitely many times, should leave $X_1$ finitely often, i.e. $\strat$ should assign zero probability to actions that can lead outside $X_1$, unsafe actions with respect to $X_1$, from some point on. Otherwise, $\play$ would leave $X_1$ infinitely many times, which contradicts the fact that $\strat$ is a winning strategy for $\p{1}$.
	We can write this fact as below:
	\begin{eqnarray*}
		&v \in \infPlay{\play} \implies \\
		&\forall \acto \in \actiono(v)\cdot \Bigl(\exists \actt \in \actiont(v) \cdot \transition(v,\acto,\actt) \notin X_1 \implies \sum\limits_{\{i \mid \play_i=v\}} \strat(\play_{\le i})(\acto) \neq \infty\Bigr)
	\end{eqnarray*}
	If we negate the left-hand side of the implication, we get:
	\begin{eqnarray*}
		&v \in \infPlay{\play} \implies 
		\\
		&\forall \acto \in \actiono(v)\cdot \Bigl(\neg (\forall \actt \in \actiont(v) \cdot \transition(v,\acto,\actt) \in X_1) \implies \sum\limits_{\{i \mid \play_i=v\}} \strat(\play_{\le i})(\acto) \neq \infty\Bigr)
	\end{eqnarray*}
	By the \cref{equ:defAset}, we can rewrite the above equation as below:
	\[
	v \in \infPlay{\play} \implies \forall \acto \in \actiono(v)\cdot \Bigl(\acto \notin \safeactions{X_1}{v}{\emptyset} \implies \sum\limits_{\{i \mid \play_i=v\}} \strat(\play_{\le i})(\acto) \neq \infty\Bigr)
	\]
	This is equivalent to:
	\[
	v \in \infPlay{\play} \implies \forall \acto \in \actiono(v) \setminus \safeactions{X_1}{v}{\emptyset}\cdot \Bigl(\sum\limits_{\{i \mid \play_i=v\}} \strat(\play_{\le i})(\acto) \neq \infty\Bigr)
	\]
	By the definition of $\safeTemp$ \eqref{equ:SafeTempCompute}, and since we defined $\funcCoLive(v)$ for every $v \in X_1$ by using $\safeTemp(\gamegraph, X_1)$ in the \cref{alg:cobuechi_temp}, we know that $\funcCoLive(v) = \actiono(v) \setminus \safeactions{X_1}{v}{\emptyset}$. Therefore, we can rewrite the above equation as below:
	\[
	v \in \infPlay{\play} \implies \forall \acto \in \funcCoLive(v)\cdot \Bigl(\sum\limits_{\{i \mid \play_i=v\}} \strat(\play_{\le i})(\acto) \neq \infty\Bigr)
	\]
	Since, the summation for every $\acto \in \funcCoLive(v)$ is not infinite, the summation over all the actions in $\funcCoLive(v)$ is not infinite, too. Hence, we can change the above equation to:
	\[
	v \in \infPlay{\play} \implies \Bigl(\sum\limits_{\{i \mid \play_i=v\}} \strat(\play_{\le i})(\funcCoLive(\play_i)) \neq \infty\Bigr)
	\]
	We showed that this argument holds for every $\strat$-play $\play \in \plays(\strat)$, and every state $v \in X_1$. Therefore, we can conclude that $\strat \in \templatecolive(\funcCoLive)$ \eqref{equ:PiSetCoLive2}.\\
	Now, we will show that for every $v \in X_i$ where $i \ge 2$, it holds that $\strat \in \template$. We know that every $X_i, i \ge 2$ can contain states from both $\targetSet$ and $\neg \targetSet$. So we will consider these two cases separately. First, start with the states in $\neg \targetSet$.\\
	In \cref{alg:cobuechi_temp}, we defined $\funcLive(v)$ for every $v \in X_i \cap \neg \targetSet$, for $i \ge 2$, by using $\Apre{1}{Z}{X_{i-1}}$, as for \buchi games. Therefore, by using the same argument as in \cref{thm:buchiComplete}, we can show that $\strat \in \templatelive(\funcLive, \livePartitions)$ for every $v \in X_i \cap \neg \targetSet$. But, for states $v \in X_i \cap \targetSet$, for $i \ge 2$, we defined $\funcLive(v)$ by using $\AFpre{1}{Z}{Y}{X_{i-1}}$. Hence, we are going to present new lemmas that will help us to show that $\strat \in \templatelive(\funcLive, \livePartitions)$ for every $v \in X_i \cap \targetSet$.\\
	\begin{lemma}\label{lem:cobuchi1}
	Let $\game=(\gamegraph, \LTLeventually \LTLalways\targetSet)$ be a \emph{concurrent \cobuchi game}, and the set of states $X_i$ and $X_{i-1}$ be given by $\CobuchiAlgo{\gamegraph, \targetSet}$, for some $i \ge 2$. If $\subaction \subseteq \safeactions{X_i}{v}{\forwardingactions{X_{i-1}}{v}{\subaction}}$ for every $v \in X_i \cap \targetSet$, then every $\strat$ with $\subaction = \infStrat{\strat(v)}$ is a winning strategy for $\p{1}$ in the game $\game$.
	\end{lemma}
	\begin{proof}
		We can write the if condition of the lemma as below:
		\[
		\acto^* \in \subaction \implies \acto^* \in \safeactions{X_i}{v}{(\forwardingactions{X_{i-1}}{v}{\subaction})}
		\]
		Then, by the \cref{equ:defAset}, we can rewrite the above equation as below:
		\[
		\acto^* \in \subaction \implies \forall \actt \in \actiont(v) \cdot \Bigl(\transition(v,\acto^*,\actt) \notin X_i \implies \actt \in \forwardingactions{X_{i-1}}{v}{\subaction}\Bigr)
		\]
		By negating the left-hand side of the implication, we get:
		\[
		\acto^* \in \subaction \implies \forall \actt \in \actiont(v) \cdot \Bigl(\transition(v,\acto^*,\actt) \in X_i \lor \actt \in \forwardingactions{X_{i-1}}{v}{\subaction}\Bigr)
		\]
		By the \cref{eq:defBset}, we can rewrite the above equation as below:
		\[
		\acto^* \in \subaction \implies \forall \actt \in \actiont(v) \cdot \Bigl(\transition(v,\acto^*,\actt) \in X_i \lor \left(\exists \acto \in \subaction \cdot \transition(v,\acto,\actt) \in X_{i-1}\right)\Bigr)
		\]		
		Intuitively, the above equation means that by giving positive probability to every action in $\subaction$, for every $v \in X_i \cap \targetSet$, there will be no action for the opponent to prevent the player to reach $X_{i-1}$ and transition to the outside of $X_i$. Hence, by giving positive probability to every action in $\subaction$, the strategy $\strat$ is winning in $v$.
		(By staying in $X_i$ infinitely many times, if some $v \in X_i \cap \neg \targetSet$ is visited infinitely many times, then by the same argument as in \cref{thm:buchi}, $X_{i-1}$ will be visited infinitely many times, too. Otherwise, it remains in stays in $X_i$ forever, and visits $\neg \targetSet$ finitely many times, and this is winning as well.)\\
	
	\end{proof}
	\begin{lemma}\label{lem:cobuchi2}
	Let $\game=(\gamegraph, \LTLeventually \LTLalways\targetSet)$ be a \emph{concurrent \cobuchi game}, and the set of states $X_i$ and $X_{i-1}$ be given by $\CobuchiAlgo{\gamegraph, \targetSet}$, for some $i \ge 2$. If $\forwardingactions{X_{i-1}}{v}{\subaction} = \actiont(v)$ for every $v \in X_i \cap \targetSet$, then every $\strat$ with $\subaction = \infStrat{\strat(v)}$ follows the template $\templatelive(\funcLive, \livePartitions)$.
	\end{lemma}
	\begin{proof}
		We can write the if condition of the lemma as below:
		\[
		\actt \in \actiont(v) \implies \Bigl(\actt \in \forwardingactions{X_{i-1}}{v}{\subaction}\Bigr)
		\]
		Then, by the \cref{eq:defBset}, we can rewrite the above equation as below:
		\[
		\actt \in \actiont(v) \implies \Bigl(\exists \acto \in \subaction \cdot \transition(v,\acto,\actt) \in X_{i-1}\Bigr)
		\]
		Since In \cref{alg:cobuechi_temp}, for every $\actt \in \actiont(v)$ we are adding $\{\acto \in \actiono(v) \mid \transition(v,\acto,\actt) \in X_{i-1}\}$ to $\funcLive(v)$, namely $h_{\actt}$, we can conclude the following:
		\[
		\actt \in \actiont(v) \implies \Bigl(\exists \acto \in \subaction \cdot \acto \in h_{\actt} \Bigr)
		\]
		By the definition of $\infStrat{\strat(v)}$ \eqref{equ:infStrat}, we know that $\sum\limits_{\{i \mid v_i = v\}} \min\limits_{\acto \in \subaction} \strat(v_0 v_1 ... v_i)(\acto) = \infty$. When the sum of minimum probabilities of actions in $\subaction$ is infinite, the sum of probabilities of actions in every non-empty subset of $\subaction$ is also infinite. Hence, by the above argument, we can conclude that the sum of probabilities of actions in $h_{\actt}$ is infinite, too. Therefore, with respect to \eqref{equ:MinSupp}, we can write:
		\[
		\sum\limits_{\{i \mid v_i = v\} } \minsupport{\strat(v_0 v_1 ... v_i)}{\{h_{\actt} \mid \actt \in \actiont\}} = \infty
		\] 
		Which is equivalent to:
		\[
		\sum\limits_{\{i \mid v_i = v\} } \minsupport{\strat(v_0 v_1 ... v_i)}{\funcLive(v)} = \infty
		\] 
		By the definition of $\templatelive(\funcLive, \livePartitions)$ \eqref{equ:PiSetLive}, we can conclude that $\strat \in \templatelive(\funcLive, \livePartitions)$.\\
		\end{proof}
	\begin{lemma}\label{lem:cobuchi3}
	Let $\game=(\gamegraph, \LTLeventually \LTLalways\targetSet)$ be a \emph{concurrent \cobuchi game} with a winning strategy $\strat$ with $\subaction = \infStrat{\strat(v)}$ for $\p{1}$, and the set of states $X_i$ and $X_{i-1}$ be given by $\CobuchiAlgo{\gamegraph, \targetSet}$, for some $i \ge 2$. If $\forwardingactions{X_{i-1}}{v}{\subaction} \ne \actiont(v)$, then there is another winning strategy $\strat'$ with $\subaction' = \infStrat{\strat'(v)}$ for $\p{1}$, such that $\forwardingactions{X_{i-1}}{v}{\subaction'} = \actiont(v)$, for every $v \in X_i \cap \targetSet$.
	\end{lemma}
	\begin{proof}
		Firstly, we define $\actiont'(v) := \actiont(v)\setminus \forwardingactions{X_{i-1}}{v}{\subaction}$ which is not empty, because of the premise of the lemma. Secondly, for every $\actt \notin \forwardingactions{X_{i-1}}{v}{\subaction}$, for all $\acto \in \subaction$, we have $\transition(v,\acto,\actt) \notin X_{i-1}$, by the \cref{eq:defBset}. But, $\transition(v, \acto,\actt) \in \targetSet$, because $\strat$ is a winning strategy, and if it's not going to $X_{i-1}$, it should visit $\targetSet$ infinitely many times. Otherwise, it will remain in $X_i$ forever, and visit $\neg \targetSet$ infinitely many times, which is losing. Therefore, we can write:
		\[
		\forall \actt \in \actiont'(v) \cdot \forall \acto \in \subaction \cdot \Bigl[\transition(v,\acto,\actt) \notin X_{i-1} \land \transition(v,\acto,\actt) \in \targetSet\Bigr]
		\]
		In the other hand, since $v \in \wino$, for every opponent's action, there should be at least one action for the player which goes to $X_{i-1}$:
		\[
		\forall \actt \in \actiont'(v) \cdot \exists \acto \in \actiono(v) \cdot \transition(v,\acto,\actt) \in X_{i-1}
		\]
		From the earlier argument, we know that $\acto \notin \subaction$. Therefore, we can change the above equation to:
		\[
		\forall \actt \in \actiont'(v) \cdot \exists \acto^* \in \actiono(v) \setminus \subaction \cdot \transition(v,\acto^*,\actt) \in X_{i-1}
		\]
		If we add all these $\acto^*$s to $\subaction$, we will compute a new set called $\subaction'$:
		\[
		\subaction' := \subaction \cup \{\acto^* \in \actiono(v) \setminus \subaction \mid \exists \actt \in \actiont'(v) \cdot \transition(v,\acto^*,\actt) \in X_{i-1}\}	
		\]
		Now, if strategy $\strat'$ gives positive probability to every action in $\subaction'$, for every $v \in X_i \cap \targetSet$, $\infPlay{\strat'(v)} = \subaction'$. Since for every $\actt \in \actiont(v)$ we added such $\acto^* \in \actiono(v)$ to $\subaction$ which $\transition(v,\acto^*,\actt) \in X_{i-1}$, and by the definition of \eqref{eq:defBset}, we can write:
		\[
		\forwardingactions{X_{i-1}}{v}{\subaction'} = \actiont(v)
		\]
		So, we found a new strategy $\strat'$ such that $\forwardingactions{X_{i-1}}{v}{\subaction'} = \actiont(v)$, for every $v \in X_i \cap \targetSet$.\\
		For the rest of the proof, we should show that $\strat'$ is a winning strategy for $\p{1}$ in the game $\game$.\\
		\[
		\safeactions{X_i}{v}{\forwardingactions{X_{i-1}}{v}{\subaction'}} = \safeactions{X_i}{v}{\actiont(v)}
		\]
		By the \cref{equ:defAset}, we know that $\safeactions{X_i}{v}{\actiont(v)} = \actiono(v)$. Therefore, it is easy to see that:
		\[
		\subaction' \subseteq \safeactions{X_i}{v}{\forwardingactions{X_{i-1}}{v}{\subaction'}}
		\]
		Finally, by using the \cref{lem:cobuchi1}, we can conclude that $\strat'$ is a winning strategy for $\p{1}$ in the game $\game$.\\
	\end{proof}
	Now, we are going to use the above lemmas to show that for every $v \in X_i \cap \targetSet$, $\strat \in \templatelive(\funcLive, \livePartitions)$.\\
	If $\strat$ is a winning strategy for $\p{1}$ in the game $\game$, for every $v \in X_i \cap \targetSet$, we compute $\subaction = \infStrat{\strat(v)}$. Now, we have two possibilities for $\forwardingactions{X_{i-1}}{v}{\subaction}$: \RomanNumeralCaps{1}. $\forwardingactions{X_{i-1}}{v}{\subaction} = \actiont(v)$ \RomanNumeralCaps{2}. $\forwardingactions{X_{i-1}}{v}{\subaction} \ne \actiont(v)$\\
	For case \RomanNumeralCaps{1}, by using \cref{lem:cobuchi2}, we can conclude that $\strat \in \templatelive(\funcLive, \livePartitions)$.\\
	For case \RomanNumeralCaps{2}, by using \cref{lem:cobuchi3}, we can find another winning strategy $\strat'$ for $\p{1}$ in the game $\game$, such that $\forwardingactions{X_{i-1}}{v}{\subaction'} = \actiont(v)$, where $\subaction' = \infStrat{\strat'(v)}$. Then, by using \cref{lem:cobuchi2}, we can conclude that $\strat' \in \templatelive(\funcLive, \livePartitions)$.\\
	Now, we showed that for every $v \in X_i \cap \targetSet$, every winning strategy $\strat$ either follows our template or we can change it to a new winning strategy $\strat'$, which will follow our template.\\
	Finally, since we showed that for every $v \in X_1$, $\strat \in \templatecolive(\funcCoLive)$, and for every $v \in X_i$, either $\strat$ or a new winning strategy is in $\templatelive(\funcLive, \livePartitions)$, and for every $v \in \wino$, $\strat$ is in $\templatesafe(\funcSafe)$, we can conclude that the template $\template = \templatesafe(\funcSafe) \cap \templatelive(\funcLive, \livePartitions) \cap \templatecolive(\funcCoLive)$ is non-empty.
\end{proof}

\section{Non-maximally permissiveness examples for templates}\label{app:non_maximally}

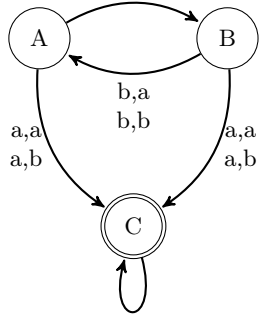
\begin{figure}[H]
\centering
\begin{tikzpicture}[>=stealth', auto, node distance=2.5cm, every loop/.style={min distance=10mm}]

  \node[state] (A) {A};
  \node[state, right of=A] (B) {B};
  \node[state, below of=A, xshift=1.25cm, accepting, double distance=1pt] (C) {C};

  \path[->]
    (A) edge[bend right] node[left] {\shortstack{a,a \\ a,b}} (C)
    (C) edge[loop below] node {} (C)
    (B) edge[bend left] node[below] {\shortstack{b,a \\ b,b}} (A)
    (B) edge[bend left] node[right] {\shortstack{a,a \\ a,b}} (C)
    (A) edge[bend left] node[below] {} (B);

\end{tikzpicture}
\caption{A Büchi game, where \( \targetSet = \{C\} \) is the goal set.}
\label{fig:buchiMaximality}
\end{figure}

	\begin{example}
		Consider the concurrent \buchi game $\game=(\gamegraph,\Box \Diamond \targetSet)$, where $\gamegraph$ is shown in \cref{fig:buchiMaximality} and $\targetSet=\{C\}$. The set of actions for both players at every state is $\actiono(v) = \actiont(v) = \{a,b\}$.
		$\BuchiAlgo{\targetSet}$ will compute the sets $X_0, X_1, X_2$ as follows:
		\begin{itemize}
			\item $X_0 = \emptyset$
			\item $X_1 = \{C\}$
			\item $X_2 = \{A,B,C\}$
		\end{itemize}
		The winning region for $\p{1}$ is $\wino = X_2$, and the set of states that can reach $C$ in exactly one step is $\Apre{1}{\wino}{\targetSet} = \livePartition_1 = \{A, B\}$.
		In states $A$ and $B$, $\p{1}$ can simply play $"a"$ to win the game. However, suppose $\p{1}$ instead chooses $"b"$ at state $A$ with probability $1$, and at state $B$ plays $"a"$ with probability $2^{-n}$, when $B$ is visited for the $n$-th time. This is still a winning strategy, because the probability of not reaching $C$ is $0$.
		In this case, we have a winning strategy where, for neither of the states in $\livePartition_1$, the sum $$\sum\limits_{\{i \mid \play_i=v\}} \minsupport{\strato(\play_{<i})}{\funcLive(v)}$$ is equal to $\infty$. Therefore, this strategy does not follow the template $\templatelive(\funcLive, \livePartitions)$ and the template $\template = \templatesafe(\funcSafe) \cap \templatelive(\funcLive, \livePartitions)$ is not maximal.
	\end{example}

	\begin{figure}[h]
		\centering
		
		\begin{tikzpicture}[shorten >=1pt, node distance=4cm, on grid, auto]
			
			\node[state, label=left:$\targetSet$]             (S0) {$S_0$};
			\node[state, label=right:$\targetSet$, right=of S0](S1) {$S_1$};
			\node[state, label=left:$\targetSet$, below=of S0](S2) {$S_2$};
			\node[state, label=right:$\targetSet$, right=of S2](S3) {$S_3$};
			\node[state, label=below:$\neg \targetSet$, below=of S2](S4) {$S_4$};
			
			\path[->]
			(S0) edge[loop above] node {} ()
			(S1) edge[loop above] node {} ()
			(S2) edge[] node {$(a,d)\; (b,e)$} (S0)
			(S2) edge[bend right=15] node {$(y,d)\;(x,f)$} (S1)
			(S2) edge node {$(y,e)\; (x,e)\; (a,f)\; (b,f)$} (S3)
			(S2) edge[bend left] node {$(y,f)\; (x,d)\; (a,e)\; (b,d)$} (S4)
			(S4) edge node {} (S2)
			(S3) edge[bend left] node {} (S2);
			
		\end{tikzpicture}
		\caption{Co-B\"uchi game $\game_2$ with winning condition $\LTLeventually\LTLalways \targetSet$, where $\targetSet=\{S_0,S_1,S_2,S_3\}$}
		\label{fig:cobuchiNotMaximal}
		
	\end{figure}
	\begin{example}
		Consider the concurrent \cobuchi game $\game=(\gamegraph,\LTLeventually \LTLalways \targetSet)$, where $\gamegraph$ is shown in \cref{fig:cobuchiNotMaximal} and $\targetSet=\{S_0, S_1, S_2, S_3\}$.
		$\CobuchiAlgo{\targetSet}$ will compute the sets $X_0, X_1, X_2$ as follows:
		\begin{itemize}
			\item $X_0 = \{S_0, S_1\}$
			\item $X_1 = \{S_0, S_1, S_2, S_3\}$
			\item $X_2 = \{S_0, S_1, S_2, S_3, S_4\}=\wino$
		\end{itemize}
		In this algorithm, $S_2$ has been added to $X_1$ because it is in $\AFpre{1}{\wino}{X_1}{X_0}$. (Because $\nu \subaction \cdot (\safeactions{\wino}{S_2}{\emptyset} \land \safeactions{X_2}{S_2}{(\forwardingactions{X_1}{S_2}{\emptyset})}$ is equal to $\{a,b,x,y\} \ne \emptyset$.)
		\cref{alg:cobuechi_temp} will compute $\funcLive$ for $S_2$ as below:
		\begin{itemize}
			\item $d \rightarrow \{a,y\}$
			\item $e \rightarrow \{b\}$
			\item $f \rightarrow \{x\}$
		\end{itemize}
		Then, $\funcLive(S_2) = \{\{a,y\}, \{b\}, \{x\}\}$.\\
		In this game, $\p{1}$ can play $\{a,b\}$ with equal probability in $S_2$, and win the game. This strategy does not follow the template $\templatelive(\funcLive, \livePartitions)$ of the \cref{alg:cobuechi_temp}, but it is still a winning strategy, hence the template is not maximal.\\
	\end{example}

\section{Supplementary Material for the Experiments}\label{app:experiments}
\subsection{Conversion of Turn-based Games to Concurrent Games}\label{subsection:game-conversion}
Our conversion from turn-based to concurrent games follows these steps:
\begin{enumerate}
\item \textbf{Transition merging}: We keep \p{1} states for the concurrent games, and every two consecutive transitions in the turn-based games were merged into a single transition in the concurrent games. More specifically, for each pair of consecutive transitions $(u, a, v)$ and $(v, b, w)$ in the turn-based game, where $u$ is a \p{1} state and $v$ a \p{2} state, we create a single transition $(u, (a,b), w)$ in the concurrent game. Afterward, we eliminate \p{2} states. We add self-loops to state without outgoing transitions. Note that this is possible because we consider \emph{alternating} turn-based games.
\item \textbf{Winning conversion}: Our tool requires state-based winning conditions. Any state $v$ that has an incoming transition $(u, (a,b), v)$ for which at least one of the original transitions ($(u, a, v)$ or $(v, b, w)$) appears in the winning condition in the original game, is added to the objective set in the concurrent game.
\end{enumerate}

We remark that this procedure is not practical for large instances and is only applied to smaller games.
After excluding very large instances that could not be processed, we selected 171 instances for our experimental evaluation.
The list of considered (and converted) models can be found in \cite{anonymous_2025_17357029}.

\subsection{\textsc{ConSTel} vs \textsc{PeSTel}}
\begin{figure}[h!]
\centering
\includegraphics[scale=0.5]{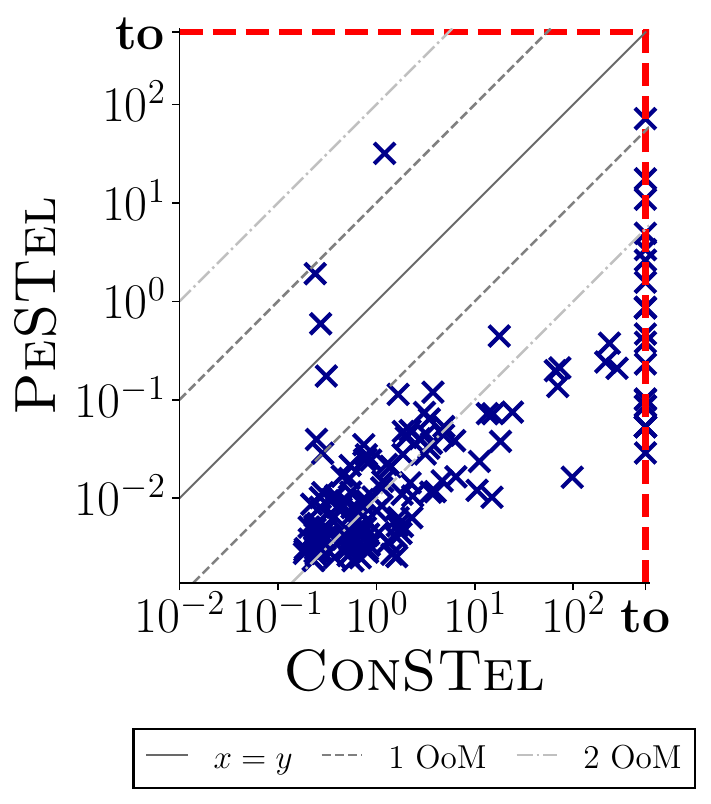}
\caption{Runtime comparsion}
\label{figure:runtime-comparison}
\end{figure}
We compared our prototypical tool \textsc{ConSTel} with the existing \textsc{PeSTel} tool \cite{AnandNS_PeSTels2023}, which computes strategy templates for \emph{turn-based} Parity games. However, since \textsc{PeSTel} can \emph{only} handle turn-based games, we considered the following setup. For each of the $171$ converted games, we ran the template computation in \textsc{ConSTel} and the template computation in \textsc{PeSTel} on the corresponding original turn-based game. This allows us to put the performance of \textsc{ConSTel} into perspective and empirically study the difficulty of going into the concurrent setting.
The results are summarised in the scatterplot in \cref{figure:runtime-comparison}.
Each point $(x, y)$ in the plot corresponds to one instance, where $x$ and $y$ are the runtime of concurrent \textsc{PeSTel} and \textsc{PeSTel}, respectively.
\textsc{PeSTel} often outperforms our tool by orders of magnitude (OoM), particularly for larger instances. This highlights the increased complexity inherent in solving concurrent games compared to turn-based games, and suggests that further engineering improvements are necessary to scale our tool to larger instances.

\subsection{Incremental Synthesis}
In \cref{sec:experiments} we have considered our strategy templates in the context of \emph{incremental synthesis}. Below we provide more details on the considered models:
\begin{itemize}
\item \texttt{full\_arbiter\_unreal1}: $55$ state and $731$ transitions
\item \texttt{lilydemo16}: $28$ state and $1228$ transitions
\item \texttt{full\_arbiter}: $55$ state and $731$ transitions
\item \texttt{amba\_decomposed\_tincr}: $26$ state and $289$ transitions
\end{itemize}

In \cref{figure:lineplot-conflict} we have aggregated the data over these selected models. In the following, we illustrate the conflicts for each of the games individually in \cref{fig:buchi_plus_buchi,fig:buchi_plus_cobuchi,fig:cobuchi_plus_cobuchi}. Each cell $(s, k)$ in the heatmaps corresponds to the percentage of conflicts that occurred when adding $k$ objectives of size $s$.

A general trend that can be observed is that the amount of conflicts increases the more objectives are added. This is expected since more templates need to be combined, thereby increasing the potential for conflicts. Another interesting observation is that conflicts seem to occur more often for medium-sized to larger-sized objective sets. A possible explanation is that for these sizes we get larger winning regions, again resulting in more potential for conflicts. Concerning the types of combined objectives, we observe that the combination of Büchi + Büchi appears to be least prone to conflicts. The introduction of new co-Büchi objectives results in more conflicts. Intuitively, this is the case because templates for co-Büchi objectives (compared to templates for Büchi objectives) also contain a co-live template.

\begin{figure}[h!]
  \centering
  
  \begin{subfigure}[b]{0.48\textwidth}
    \includegraphics[width=\textwidth]{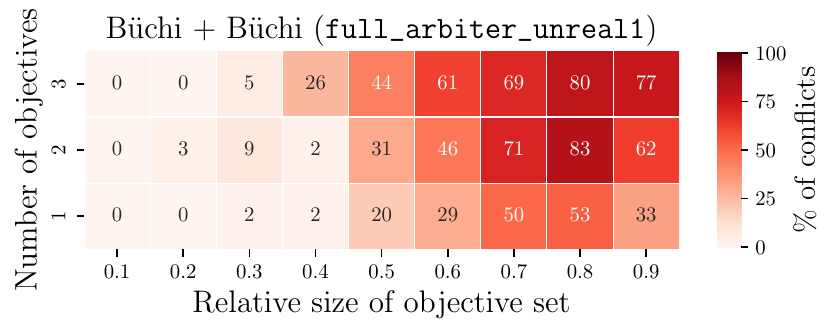}
    \caption{\texttt{full\_arbiter\_unreal1}}
  \end{subfigure}
  \hfill
  \begin{subfigure}[b]{0.48\textwidth}
    \includegraphics[width=\textwidth]{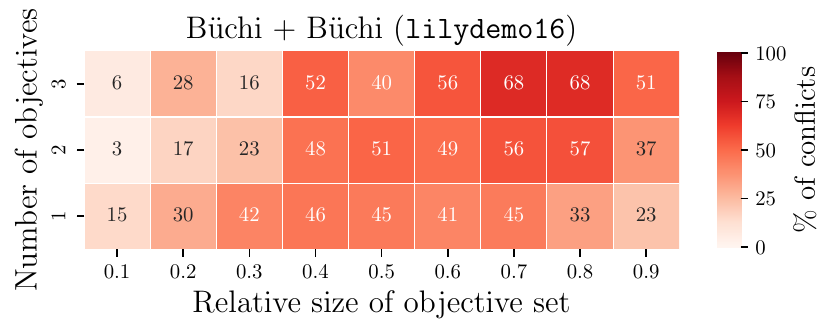}
    \caption{\texttt{lilydemo16}}
  \end{subfigure}
  
  \vspace{0.5em}
  \begin{subfigure}[b]{0.48\textwidth}
    \includegraphics[width=\textwidth]{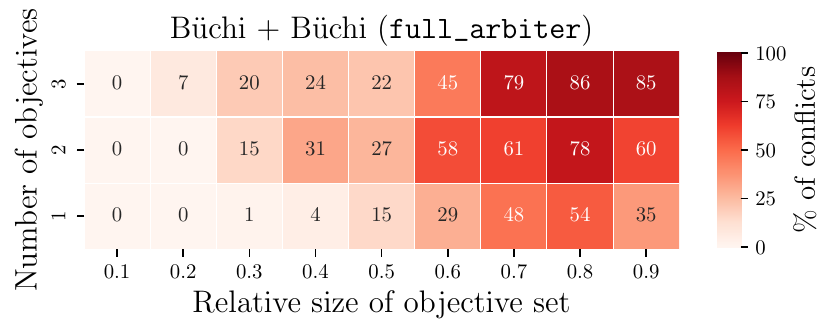}
    \caption{\texttt{full\_arbiter}}
  \end{subfigure}
  \hfill
  \begin{subfigure}[b]{0.48\textwidth}
    \includegraphics[width=\textwidth]{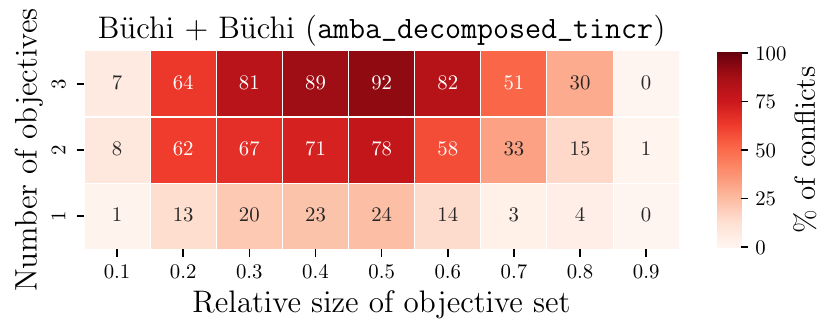}
    \caption{\texttt{amba\_decomposed\_tincr}}
  \end{subfigure}
  
  \caption{Conflict analysis for Büchi + Büchi each game.}
  \label{fig:buchi_plus_buchi}
\end{figure}

\begin{figure}[h!]
  \centering
  
  \begin{subfigure}[b]{0.48\textwidth}
    \includegraphics[width=\textwidth]{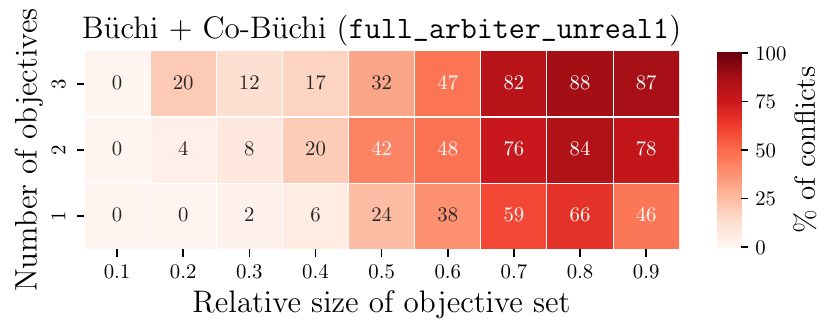}
    \caption{\texttt{full\_arbiter\_unreal1}}
  \end{subfigure}
  \hfill
  \begin{subfigure}[b]{0.48\textwidth}
    \includegraphics[width=\textwidth]{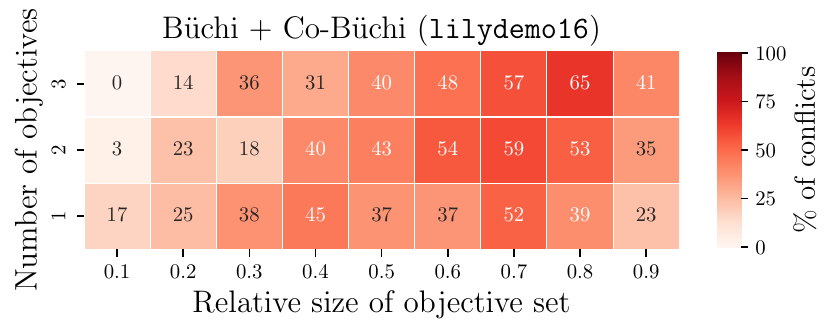}
    \caption{\texttt{lilydemo16}}
  \end{subfigure}
  
  \vspace{0.5em}
  \begin{subfigure}[b]{0.48\textwidth}
    \includegraphics[width=\textwidth]{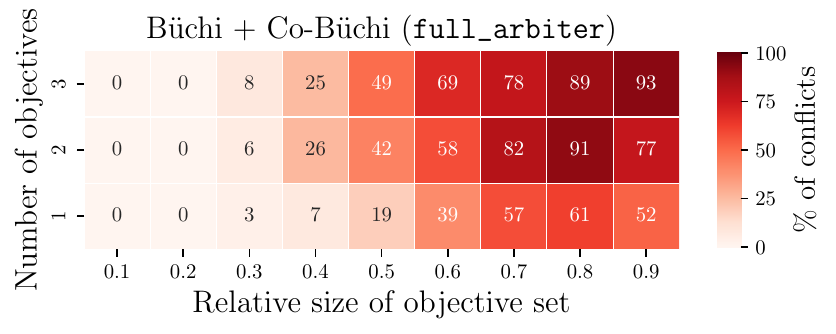}
    \caption{\texttt{full\_arbiter}}
  \end{subfigure}
  \hfill
  \begin{subfigure}[b]{0.48\textwidth}
    \includegraphics[width=\textwidth]{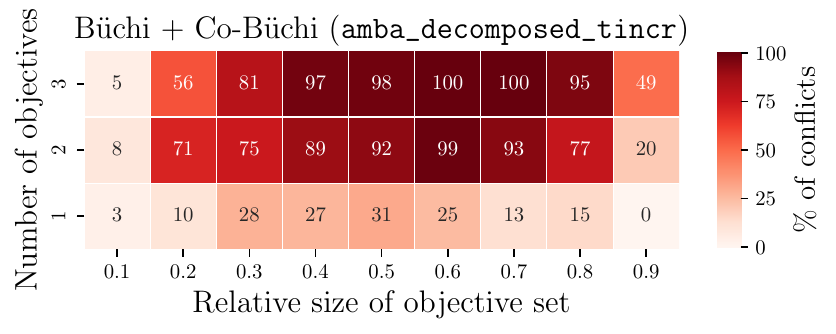}
    \caption{\texttt{amba\_decomposed\_tincr}}
  \end{subfigure}
  
  \caption{Conflict analysis for Büchi + Co-Büchi for each game.}
  \label{fig:buchi_plus_cobuchi}
\end{figure}

\begin{figure}[h!]
  \centering
  
  \begin{subfigure}[b]{0.49\textwidth}
    \includegraphics[width=\textwidth]{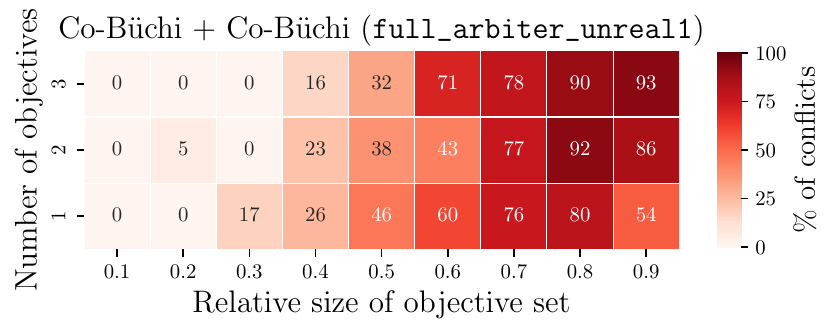}
    \caption{\texttt{full\_arbiter\_unreal1}}
  \end{subfigure}
  \hfill
  \begin{subfigure}[b]{0.49\textwidth}
    \includegraphics[width=\textwidth]{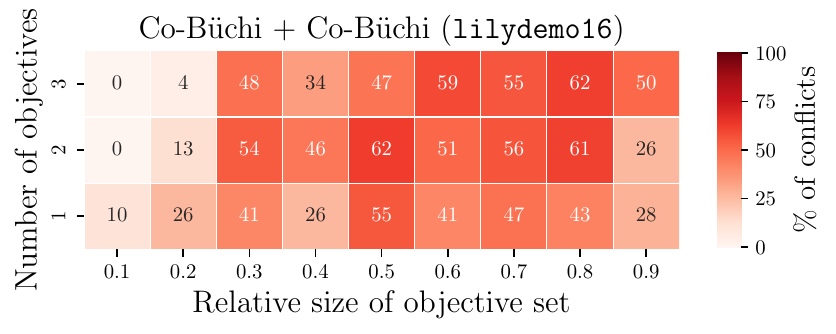}
    \caption{\texttt{lilydemo16}}
  \end{subfigure}
  
  \vspace{0.5em}
  \begin{subfigure}[b]{0.49\textwidth}
    \includegraphics[width=\textwidth]{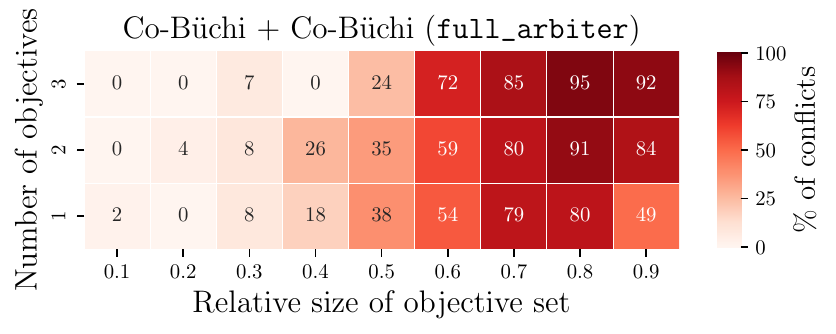}
    \caption{\texttt{full\_arbiter}}
  \end{subfigure}
  \hfill
  \begin{subfigure}[b]{0.49\textwidth}
    \includegraphics[width=\textwidth]{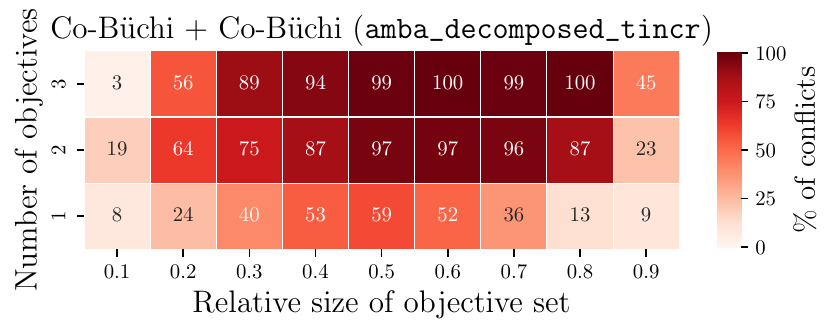}
    \caption{\texttt{amba\_decomposed\_tincr}}
  \end{subfigure}
  
  \caption{Conflict analysis for Co-Büchi + Co-Büchi for each game.}
  \label{fig:cobuchi_plus_cobuchi}
\end{figure}

\end{document}